\documentclass[msom]{informs4_copy}

\usepackage{xtab,comment,colortbl,booktabs,subfig}
\usepackage{multirow}
\usepackage{dashrule}
\usepackage{stackengine}
\usepackage{rotating}\usepackage{lscape}
\usepackage{eurosym}
\usepackage{bbm}
\usepackage{bm}
\usepackage[section]{placeins}
\usepackage{enumerate,xcolor}
\usepackage{soul}
\usepackage{units,ulem}
\usepackage[pass]{geometry}
\usepackage{ textcomp }
\usepackage[hypertexnames=false]{hyperref}
\usepackage{xcolor}
\usepackage{framed}
\hypersetup{
    colorlinks,
    linkcolor={red!50!black},
    citecolor={blue!50!black},
    urlcolor={blue!80!black}
}
\expandafter\def\expandafter\UrlBreaks\expandafter{\UrlBreaks
  \do\a\do\b\do\c\do\d\do\e\do\f\do\g\do\h\do\i\do\j%
  \do\k\do\l\do\m\do\n\do\o\do\p\do\q\do\r\do\s\do\t%
  \do\u\do\v\do\w\do\x\do\y\do\z\do\A\do\B\do\C\do\D%
  \do\E\do\F\do\G\do\H\do\I\do\J\do\K\do\L\do\M\do\N%
  \do\O\do\P\do\Q\do\R\do\S\do\T\do\U\do\V\do\W\do\X%
  \do\Y\do\Z\do\0\do1\do2\do3\do4\do5\do6\do7\do8\do9}
\usepackage{tabularx}
\usepackage{longtable}
\usepackage{arydshln}
\usepackage{soul}
\usepackage{enumitem}
\usepackage{float}
\newcolumntype{L}[1]{>{\raggedright\let\newline\\\arraybackslash\hspace{0pt}}m{#1}}
\newcolumntype{R}[1]{>{\raggedleft\let\newline\\\arraybackslash\hspace{0pt}}m{#1}}
\newcolumntype{C}[1]{>{\centering\let\newline\\\arraybackslash\hspace{0pt}}m{#1}}
\newcolumntype{J}[1]{>{\let\newline\\\arraybackslash\hspace{0pt}}m{#1}}
\newcolumntype{F}[1]{>{\let\newline\\\arraybackslash\hspace{0pt}\linespread{.1}\footnotesize}m{#1}}
\usepackage{natbib}
 \bibpunct[, ]{(}{)}{,}{a}{}{,}%
 \def\bibfont{\small}%
 \def\BIBand{and}%

 \newtheorem{result}{\usefont{T1}{cmr}{b}{}\selectfont Result}
{\begin{list}{}%
         {\setlength{\leftmargin}{#1}}%
         \item[]%
}
{\end{list}}

\definecolor{Gray}{gray}{0.8}
\definecolor{Gray2}{gray}{0.5}
\xdefinecolor{olive}{cmyk}{0.64,0,0.95,0.4}

\newcounter{marginparcounter}

\newcommand{\tabnote}[1]{ \begin{minipage}{\textwidth} \scriptsize{\vspace{5pt} #1} \end{minipage} } %

\renewcommand{\footnotesize}{\small}   

\TheoremsNumberedThrough     

\EquationsNumberedThrough    

\makeatletter
\newcommand*{\da@rightarrow}{\mathchar"0\hexnumber@\symAMSa 4B }
\newcommand*{\da@leftarrow}{\mathchar"0\hexnumber@\symAMSa 4C }
\newcommand*{\xdashrightarrow}[2][]{%
  \mathrel{%
    \mathpalette{\da@xarrow{#1}{#2}{}\da@rightarrow{\,}{}}{}%
  }%
}
\newcommand{\xdashleftarrow}[2][]{%
  \mathrel{%
    \mathpalette{\da@xarrow{#1}{#2}\da@leftarrow{}{}{\,}}{}%
  }%
}
\newcommand*{\da@xarrow}[7]{%
  \sbox0{$\ifx#7\scriptstyle\scriptscriptstyle\else\scriptstyle\fi#5#1#6\m@th$}%
  \sbox2{$\ifx#7\scriptstyle\scriptscriptstyle\else\scriptstyle\fi#5#2#6\m@th$}%
  \sbox4{$#7\dabar@\m@th$}%
  \dimen@=\wd0 %
  \ifdim\wd2 >\dimen@
    \dimen@=\wd2 %
  \fi
  \count@=2 %
  \def\da@bars{\dabar@\dabar@}%
  \@whiledim\count@\wd4<\dimen@\do{%
    \advance\count@\@ne
    \expandafter\def\expandafter\da@bars\expandafter{%
      \da@bars
      \dabar@
    }%
  }%
  \mathrel{#3}%
  \mathrel{%
    \mathop{\da@bars}\limits
    \ifx\\#1\\%
    \else
      _{\copy0}%
    \fi
    \ifx\\#2\\%
    \else
      ^{\copy2}%
    \fi
  }%
  \mathrel{#4}%
}
\makeatother

\newcommand{\sym}[1]{^{#1}} 
\newcolumntype{d}[1]{D{.}{.}{#1}} 
\newcommand{\dor}[1]{\multicolumn{1}{c}{#1}} 

\usepackage{endnotes}
\let\endnote=\endnote
\begin{document}


\RUNAUTHOR{Kagan, Hyndman and Qi}

\RUNTITLE{Startup Contracting and Entrepreneur-Investor Bargaining}

\TITLE{Startup Contracting and Entrepreneur-Investor Bargaining} 

\ARTICLEAUTHORS{%
\AUTHOR{Evgeny Kagan}
\AFF{Carey Business School, Johns Hopkins University, \EMAIL{ekagan@jhu.edu}, \URL{https://carey.jhu.edu/faculty/faculty-directory/evgeny-kagan-phd}}

\AUTHOR{Kyle Hyndman}
\AFF{Naveen Jindal School of Management, University of Texas at Dallas, \EMAIL{KyleB.Hyndman@utdallas.edu}, \URL{https://www.kylehyndman.com/}}

\AUTHOR{Anyan Qi}
\AFF{Naveen Jindal School of Management, University of Texas at Dallas, \EMAIL{axq140430@utdallas.edu}, \URL{https://qi-anyan.github.io/}}
} 

\ABSTRACT{\textbf{Problem definition:} To grow their businesses, entrepreneurs often rely on equity funding. This paper focuses on two elements of entrepreneur-investor equity negotiations: the number of potential investors and the contractual complexity surrounding investor protection. \textbf{Methodology/Results:} Our approach involves a theoretical model and a series of laboratory experiments that analyze the effects of different bargaining conditions and contractual terms on the equity (ownership) split between entrepreneurs and their investors. We show that the conventional wisdom that entrepreneurs should seek to negotiate with as many investors as possible, while consistent with the theoretical model, is not true in the data. Indeed, negotiating with multiple investors reduces the entrepreneur's profits under most conditions. We also show that investor downside protections may disadvantage early-stage startups, but can be beneficial to later-stage startups. A refinement of belief modeling in multi-party bargaining, as well as a stylized risk allocation framework, reconcile these results with theory predictions. \textbf{Managerial Implications:} Our findings provide a decision framework for entrepreneurs to optimize their approach to investors and negotiate favorable contractual terms.}


\KEYWORDS{Innovation, Bargaining, Experiments} 

\maketitle
\normalem
\vspace{-0.5cm}
  \section{Introduction \label{sec:intro}}

Over 54,000 startups in the United States currently receive venture capital financing, reflecting the significance of entrepreneurial finance in the broader economy \citep{NVCA2024Yearbook}. Deal‑matching platforms, startup accelerators, and syndicated angel networks have sharply lowered search frictions, making it common for founders to pitch several investors in parallel. Separately, term sheets that assign equity positions to startup founders and their investors have become increasingly sophisticated. Many new investment rounds now routinely include investor protection clauses, ranging from liquidation preferences to participating preferred shares,  anti‑dilution, cumulative dividends, and redemption rights, all of which are designed to shift financial risks from investors to entrepreneurs \citep{Carta2024Q1Report}. 
\vspace{0.12cm}

\noindent These trends raise two strategic questions for founders:
\vspace{0.12cm}

\noindent \textbf{How many investors to approach?} The number of investors participating in a venture capital round varies considerably, averaging between three and four \citep{PitchBook2024VentureReport}. Dealing with a single large investor simplifies coordination but deprives the entrepreneur of outside options; inviting two or more smaller investors increases complexity but may help ensure that at least some funding is secured.
    \vspace{0.12cm}

\noindent \textbf{What equity class to offer?} Equity (or ``stock'') class is a key contractual mechanism for allocating financial risk. \textit{Common} equity leaves all parties equally exposed to risk, whereas \textit{Preferred} equity grants investors priority or guaranteed payouts, potentially allowing founders to bargain for a larger residual stake in exchange for risk transfer.
    \vspace{0.12cm}


 \subsection{Field Data}\label{sec:crunchbase}
To ground our research in real-world data, we first examined startup funding data available on \url{www.crunchbase.com}, a technology startup database widely used in entrepreneurship research \citep{dalle2017,contigiani2023}.  Figure \ref{fig:crunchbase} shows the data for the sample of US-based software startups over the 2010–2024 period, covering 2747 distinct investment rounds.\endnote{We focus on the software sector, because it is the sector with the largest number of firms on Crunchbase; however, similar insights emerge if we use biotech (second-largest sector) data instead.} We restrict the sample to firms with disclosed funding amounts and valuations. Our analysis focuses on pre-seed, seed, and Series A/B rounds, i.e., primarily early to mid-stage startups. The figure plots the number of investors against the combined investor share across funding amount deciles (to control for investment amount). Blue annotations indicate the within‐decile Pearson correlation coefficients between the number of investors and combined investor share, along with their significance levels. 

\begin{figure}[b!]
\FIGURE
{\includegraphics[width=\textwidth]{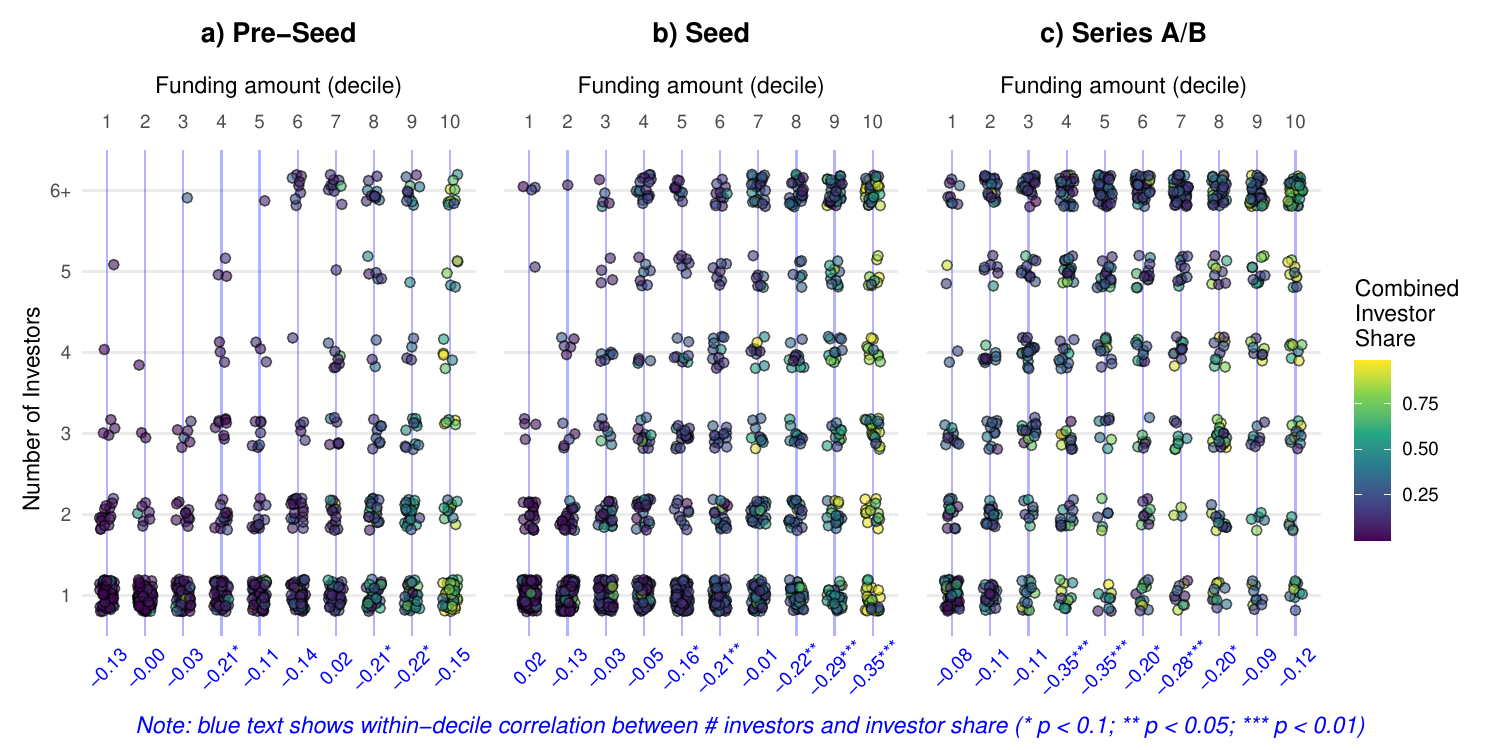}}
{Crunchbase Data: Funding Events (Software Sector, 2010-2024) \label{fig:crunchbase}}
	{}
\end{figure}

What do we learn from the Crunchbase data? First, early-stage startups tend to attract between one and three investors. More mature startups at the Series A/B stage tend to have a larger investor base, particularly those with larger investment amounts. Indeed, the percentage of funding rounds with a single investor goes down from 61.0 to 47.8 and to 18.9 in the Pre-Seed, Seed and Series A/B stages, respectively. Nevertheless, there is substantial heterogeneity in the number of investors, even conditional on the startup stage and investment amount. Second, the figure suggests a negative relationship between the number of investors and the combined investor share. A closer look suggests that this relationship is more pronounced -- and in some funding deciles statistically significant -- for later-stage startups and larger funding amounts. In particular, the correlation is significantly different from 0 (at least at $p<0.05$) for four of the deciles in the Seed round and for three of the deciles in the Series A/B round.  

While the Crunchbase data offer some suggestive insights, the observations are limited by several empirical issues. First, the data may suffer from survivorship bias, as unsuccessful startups often lack publicly disclosed valuation and investment data. Particularly for the later stages and larger funding amounts, the sample is thus skewed toward more successful firms. Second, the observed negative relationship between the number of investors and combined investor share could reflect reverse causality. This is because higher-quality startups may naturally attract more investors and those investors may be willing to accept less equity for a chance to be included in a round. Conversely, there may be investor selection effects as skilled entrepreneurs may deliberately seek high-quality investors despite their demands for larger equity stakes, as these investors provide significant value beyond capital. This further complicates causal interpretations. Finally, the absence of detailed contractual data limits our ability to isolate the specific contractual mechanisms. Indeed, the negative relationship between the number of investors and the combined investor share may be explained by certain (e.g., institutional) investors preferring a smaller share in exchange for greater downside protections \citep{kaplan2016}. In contrast, rounds with just a handful of investors might opt for  Common Stock contracts. 

The Crunchbase data illustrates common identification challenges in the observational data on venture financing and highlights several open questions in our understanding of entrepreneur-investor interactions. \textit{Should entrepreneurs negotiate with multiple investors simultaneously? Do investor protection clauses (such as Preferred equity) allow entrepreneurs to retain a larger share of the startup? Do the answers to these questions vary depending on the startup's development stage?} Existing studies in entrepreneurial finance, though extensive \citep[see][for reviews]{darin2013,kerr2015,lerner2020}, do not provide answers to these questions. To address these gaps, we develop an analytical model of entrepreneur-investor bargaining and contracting and then test key predictions of this model experimentally. 


\subsection{Overview of Studies and Results}\label{sec:crunchbase}
We begin by introducing a model of founder-investor bargaining and contracting (Section 3). Our modeling approach builds on a growing stream of work in Operations Management that applies Nash‑in‑Nash theory \citep{horn1988bilateral,davidson1988multiunit} to multilateral procurement negotiations \citep{feng2012strategic,chu2019strategic}. We extend the Nash-in-Nash theory to early‑stage equity contracting in entrepreneurship by incorporating uncertain venture value, multiple-investor case and contracts that are prevalent in the entrepreneurial setting. The key model predictions are as follows. First, the entrepreneur should benefit from negotiating with a larger number of investors. The logic is that, keeping the maximal investment amount constant, the entrepreneur is better off with two smaller investors because the entrepreneur can leverage the ability to walk away from one investor and still obtain (some) funding from the other investor. In contrast, with a single large investor, the entrepreneur does not have this leverage.\endnote{The Nash-in-Nash logic that ``the more investors, the better'' can be further extended to scenarios with more than two investors, but this is outside of the scope of this research.} Second, the model predicts that entrepreneurs can use the reduction in investor risk enabled by Preferred Stock contracts as a source of leverage for retaining a larger share of the startup. We test these predictions experimentally by varying (i) the number of investors and (ii) contract type.

In Section 4 we test the logic that ``the more investors, the better'' and examine how the number of investors affects the share of the startup retained by the entrepreneur. Our first experimental result is that, contrary to the Nash-in-Nash logic, entrepreneurs retain substantially more equity when negotiating with a single investor compared to the two-investor case: approximately eight percentage points more equity, conditional on all agreements being successful. However, the two-investor scenario enables split investments (where only one investor participates in the final agreement). As a result, the two-investor scenario has a significantly higher chance of reaching at least a partial agreement. This creates a trade-off: conditional on agreement, entrepreneurs are able to secure a greater share of ownership in the single-investor case, but they are able to come to an agreement more frequently in the two-investor case. As a consequence of this trade-off, the entrepreneur's \textit{unconditional} expected profits are statistically indistinguishable between single and two-investor scenarios when considering all negotiations, both successful and failed.

We continue to examine the relationship between the number of investors and negotiation outcome by conducting treatments in which entrepreneurs are endowed with a stronger outside option. This is to represent negotiation scenarios in which a startup has already found a product-market fit and can survive even without investment. In this more balanced bargaining environment, we continue to see that multiple investors do not benefit the entrepreneur. Indeed, with a larger outside option, entrepreneurs are strictly better off with a single investor, whether we focus on the negotiated shares, or on the (unconditional) expected profits. This is driven largely by an increase in the gap in obtained equity between the single and the two-investor cases (relative to the scenario where the entrepreneur had no outside options). 

In Section 5, we turn to the question of whether offering investors downside protection (in the form of Preferred Stock) helps entrepreneurs retain larger ownership. Under Preferred Stock contracts, investors are repaid in full before the entrepreneur receives any money; therefore, Preferred Stock contracts shift a large portion of risk from the investors to the entrepreneur. Our data suggest that Preferred Stock contracts disadvantage the entrepreneur when the entrepreneur has no other outside options. In this scenario, bargaining outcomes do not adjust sufficiently to compensate the entrepreneur for bearing the risk, which goes against the bargaining model predictions. However, in the more balanced bargaining environment where the startup has some intrinsic value, Preferred Stock contracts help entrepreneurs retain a larger share, consistent with the model predictions. 

In Section 6, we revisit the Nash-in-Nash theory and propose two refinements that help reconcile the model with our findings. First, we show that an appropriate adjustment of off-equilibrium belief modeling aligns model predictions more closely with observed data. Second, to explain the effects of different contract types, we introduce a framework that accounts for the effects of risk exposure on bargaining outcomes. Specifically, we posit that a non-zero outside option changes the bargaining dynamics, placing the entrepreneur on more equal footing with the investor(s) and providing a compelling argument for retaining a larger share with Preferred Stock, compared to scenarios where the entrepreneur has no outside options and no capital at risk.

\subsection{Contributions}
We contribute to the literature in two ways. First, we contribute to the entrepreneurship literature by studying conditions under which multiple investors may benefit or hinder entrepreneurs in retaining a larger share of their startups. In doing so, we highlight the trade-off between agreement rates and equity allocation which has not been previously emphasized. Additionally, we show a negative effect of excessive investor downside protection on entrepreneurial profits, particularly for early-stage startups, providing a potential causal mechanism for similar findings in the empirical finance literature \citep[see, for example,][]{ewens2022}. Second, our experimental results help advance bargaining theory. We highlight the importance of carefully modeling off-equilibrium beliefs and show that doing so improves the predictive accuracy of the Nash-in-Nash model. Additionally, we introduce a simple framework that links each party's money-at-risk with the expected negotiation outcomes and suggest how the Nash-in-Nash model can be revised to accommodate such behaviors. Our experiments thus provide new behavioral micro-foundations for future models of multi-party bargaining and contracting under uncertainty.  

\section{Literature \label{sec:lit}}
Entrepreneur-investor contracting has not received much attention in the operations management literature  \citep{krishnan2001,kavadias2020}; indeed, a recent review has identified both contract design and entrepreneurship as two areas that remain understudied from an operational perspective \citep[][pp. 86-87]{kavadias2020b}. To bridge this gap, our work draws on the extensive research in finance and entrepreneurship, as well as on the bargaining literature in economics and operations management.

\subsection{Finance and Entrepreneurship \label{sec:lit:fin}}
The relevant finance and entrepreneurship literature can be organized into two distinct streams: the literature that documents the prevalence of different types of investors and contracting models as well as the literature that studies strategic interactions between startups and investors.

\subsubsection*{Types of Investors and Contracts} 
Historically, the investment landscape in entrepreneurship included mainly two types of investors: smaller-scale angel investors and angel funds, and larger, often industry-focused venture capital (VC) firms \citep{kaplan2016}. More recently, the investor landscape has seen a broadening of investor types with increasingly disparate characteristics. This is partly due to the rise of startup accelerators, incubators and similar entrepreneurial programs that have significantly reduced the barriers to matching startups with prospective investors \citep{cohen2019}. For example, \cite{amore2023} document the emergence of micro VCs, which are -- similar to classic VCs -- professionally managed funds, but tend to make smaller investments (typically under \$100K) and are less likely to syndicate \citep{cbinsights2015}. 
 
The literature further documents a multitude of different contracting models, frequently including protective clauses for investors \citep{darin2013}. Particularly common are convertible preferred equity contracts, an ownership model that provides a (fixed) multiple return on investment in low-exit scenarios, yet converts to common equity in high-exit scenarios. While such contracts create stronger incentives for entrepreneurs to exert effort \citep{schmidt2003,hellmann2006,debettignies2008}, their inherent complexity can make it difficult to understand valuations and cash flow rights \citep{gornall2020}. Indeed, a recent trend is towards more founder-friendly contracts with simpler term sheets \citep{ycombinator2023}. 

Taken together, both the trend towards broader investor types and the lack of convergence on a single contract form present an opportunity for a more strategic approach to fundraising, as startups navigate a wider range of funding options. This motivates our investigation into the effects of multiple investor types and contract forms on bargaining outcomes.

\subsubsection*{Strategic Interactions and Startup Ownership}
Firm ownership and financing is one of the classic microeconomic questions, popularized as the ``theory of the firm'' \citep{grossman1986,hart1990}. Studies in this stream assume take-it-or-leave-it behavior, bypassing any bargaining or negotiation dynamics. Studies that take a more cooperative approach to bargaining are \cite{hellmann,hossain2019} and \cite{kagan}. Different from us, they study ownership allocation \textit{within} the entrepreneurial team and not \textit{between} the entrepreneur and investors. \cite{akerlof2019} and \cite{halac2020} analytically examine sequential investor choice. Both papers assume leader-takes-all behaviors and do not consider bargaining dynamics. \cite{ewens2022} use observational data and find that investors generally receive greater ownership stakes relative to what would be ``optimal'' from the welfare maximization perspective. A key observation in \cite{ewens2022} is that the bulk of the excess profits received by investors is due to liquidation preferences, suggesting that simpler equity split contracts are better for the entrepreneur. Our study provides a plausible causal mechanism for this observation: we show that liquidation preferences are not fully accounted for in the negotiations, leading to inflated investor ownership.

\subsection{Bargaining in Economics and Operations Management \label{sec:lit:barg}}
Bargaining problems have attracted significant interest in the academic literature \citep{Roth1995}. Most of these studies examine the problem of splitting a pie of a given size, and do not consider the relevant features of the entrepreneurial setting such as multiple investors, size of investment, uncertainty, or equity contract types.

\subsubsection*{Cooperative Bargaining}
The early experimental economics literature focused mainly on testing Nash solution predictions for bilateral negotiations with complete information  \citep{NO1974,RR1982,MRS1988}.
Two features of the entrepreneurial setting, that the entrepreneur may bargain with multiple investors, and that the size of the pie (value of startup equity) is both endogenous and uncertain, have attracted relatively little attention. The studies of multilateral bargaining in economics \citep[see, e.g.,][]{frechette2005b,frechette2005a} focus on legislative bargaining. These papers have highly structured bargaining formats in order to test features of interest to these models. There is a small literature in which players negotiate over a pie, the size of which is endogenous \citep{Gantner-et-al,BK2013,Rod-Lara,Baranski2019}. The main insight is that players often take self-serving bargaining positions. The closest study in this stream is \cite{EHR2020}. Like us, they study bargaining over risky pies where risk exposure is asymmetric, with one player receiving a fixed amount, and the second player receiving the (uncertain) remainder amount. Their main finding is that the player bearing the risk is able to extract a premium. No studies that we are aware of examine the types of equity division contracts that are prevalent in entrepreneurial practice (Common vs. Preferred Stock) or compare the outcomes of single vs. multiple investor bargaining.

\subsubsection*{Applications in Operations Management}
While there has been extensive research on bargaining in operations management -- much of it has focused on the supply chain context \citep{DavisLeider2018,DavisHyndman2018,DHQH2020}. The closest to us is \cite{DHQH2020} who study an original equipment manufacturer (OEM) purchasing from two suppliers under different bargaining sequences and power distributions. They focus on procurement decisions in assembly systems with private information, whereas our work examines entrepreneurial funding scenarios with different contracts (risk allocation mechanisms). Somewhat surprisingly, some of the results that hold in the (more abstract and sterile) economic setting do not carry over when context is added. For example, unlike \cite{EHR2020} who find that residual claimants are able to negotiate a high premium compensating them for risk exposure, \cite{DavisHyndman2018} find that the supply chain partner carrying inventory risk is not fully compensated for that risk. Thus, institutional context matters, even for problems that are mathematically equivalent. 

A rich behavioral literature studies how risks (of excess inventory, poor product quality or unserved demand) are allocated between supply chain partners, in serial inventory ordering settings \citep[e.g.,][]{sterman1989,croson2014,moritz2022} and in supplier-retailer contracting settings \citep[e.g.,][]{OZC2011,ozer2014,beer2018}. The closest papers in this literature are \cite{Kalkanci2011,Kalkanci2014} and \cite{Zhang2016}. Also related are the papers on supply chain coordination \citep{lim2007,ho2008,katok2013}, which find that rejection rates can dominate the effects of specific contract types. Different from this literature, we look at bargaining in the entrepreneurial context, allow the parties to endogenously determine contractual terms (rather than make take-it-or-leave-it offers), and do not consider how contracts affect subsequent behaviors; however, similar to this literature, we are also interested in the role of contract complexity in driving efficiency (surplus size) and the allocation of risk and profits between the contracting parties. 

Finally, the scenarios examined in our study include negotiations with horizontal competition. This is an understudied problem in the literature, with the closest being \citet{Lovejoy2010} and \citet{leider2016} who study bargaining in supply chains. In their setting, multiple firms with different cost structures compete for being included in the multi-tier supply chain contract.  Different from these studies, which assume single sourcing/contracting within a tier, we examine a setting where multiple parties from the same tier (multiple investors) can be included in the contract and examine contracts that differ in their allocation of risk.

\section{Theory and Experiment Overview \label{sec:overview}}
We examine negotiation scenarios defined by two factors: the number of investors (one or two) and contract type (Common or Preferred Stock). The single investor (SI) scenario can be reflective of a large venture capital firm, or can represent an angel or investor syndicate where a lead investor or representative handles negotiations on behalf of a group of smaller investors. In contrast, in the two investor (TI) case, each investor acts independently.  Furthermore, we examine two distinct environments: one in which the entrepreneur has no outside options if the negotiations fail (``PoorEnt''), and a second one in which the entrepreneur can still launch the business, though at a lower scale if the negotiations fail (``RichEnt''). The assumption that the entrepreneur is ``penniless'' and has no outside options is standard in the contracting literature \citep{bolton2004,aghion2011} and in the more recent game-theoretic work on capital assembly \citep{akerlof2019,halac2020}. However, the presence of outside options is a key determinant of outcomes in the bargaining literature (see Section \ref{sec:lit:barg}); we therefore examine scenarios in which one of the parties (in our case, the entrepreneur) has a smaller or a larger outside option. 

\subsection{Theory Predictions \label{sec:theory}}
To develop analytical benchmarks, we apply the classic Nash-in-Nash model \citep{horn1988bilateral,davidson1988multiunit} to our setting, which includes uncertainty, the multiple investor case and the different contracts (Common/Preferred Stock). In the single-investor case, the Nash-in-Nash model reduces to the Nash Bargaining Solution \citep{Nash1950}. In the two-investor case, each bilateral disagreement point depends on the expected equilibrium payoff from all other negotiations, making the bargaining units interdependent. By using the cooperative concept of the Nash bargaining solution for each entrepreneur–investor pair, we capture the mutually agreeable surplus split in bilateral bargaining; embedding these in a noncooperative equilibrium then captures how the parties strategically view and react to each other's outside options. In addition to being broadly reflective of the entrepreneur-investor dynamics, the Nash-in-Nash framework has the advantage of nesting other solution concepts as special cases through appropriate parameterization of the bargaining parameters (see Remark 1).

We next provide the analysis with Common Stock contract in Section~\ref{sec:theory:common} and Preferred Stock contract in Section~\ref{sec:poor:ent:contracts}. The complete listing of notation used in our theoretical exposition is in Table \ref{tab:notation}.


\subsubsection{Common Stock Contract}\label{sec:theory:common} We focus on a setting with a single entrepreneur and a set of potential investors, $\mathcal{I}$. The entrepreneur seeks investment of up to $e$ units of capital from one or more investors from the set $\mathcal{I}$. Each investor, $i$, has an initial endowment of $\bar{I}^t \le e$ and can make any investment amount from the set $I^t=[0,\bar{I}^t]$, where $t \in \{\textrm{SI, TI}\}$ denotes the bargaining institution. We assume that $0 \in I^t$, which simply means that an investor is not obligated to invest. The entrepreneur engages in simultaneous bilateral negotiations with each investor $i$ over the share, $s^t_i$, that investor $i$ will receive in exchange for an investment of $I^t_i\in I^t$. Denote by $\mathcal{I}' \subseteq \mathcal{I}$ the set of investors with which the entrepreneur reaches an agreement. The investments and shares must satisfy the constraints that $\sum_{i\in \mathcal{I}'}I^t_i \le e$ and $0 \le \sum_{i\in \mathcal{I}'}s^t_i \le 1$. That is, the total investment must be no greater than $e$, and the entrepreneur cannot give away more than the entire business.

If an investment has been agreed to, i.e., $\sum_{i\in \mathcal{I}'}I^t_i > 0$, the startup may succeed or fail. Let $\alpha$ be the random variable representing startup success. With probability $p$ the startup succeeds and $\alpha=\alpha_H > 1$; in this case, the value of the startup is $V = \alpha_H\sum_{i\in \mathcal{I}'}I^t_i$. With probability $1-p$, the startup fails and $\alpha=a_L = 1$; in this case, $V = \alpha_L\sum_{i\in \mathcal{I}'}I^t_i$. Denote by ${\mu_\alpha} = \mathbb{E}[\alpha]=\alpha_H p + \alpha_L(1-p)$ the expected value of the multiplier $\alpha$. The realized payoff to investor~$j$, who reached an agreement with the entrepreneur is:
\begin{equation}
s^t_j \alpha \sum_{i\in \mathcal{I}'}I^t_i + \bar{I}^t - I_j^t. \label{eq:invP}
\end{equation}
The realized payoff to the entrepreneur is then:
\begin{equation}
\left(1- \sum_{i\in\mathcal{I}'} s^t_i\right) \alpha \sum_{i\in \mathcal{I}'}I^t_i. \label{eq:entP}
\end{equation}
To solve for equilibria, we require expected profits. To this end, given agreed upon shares, $\bm{s} $, and investments, $\bm{I}$, we denote the expected profits for the investor(s) by $\pi_i(\bm{I},\bm{s})$ and for the entrepreneur by $\pi_e(\bm{I},\bm{s})$. These are calculated by taking expectations of realized profit in \eqref{eq:invP} and \eqref{eq:entP} over $\alpha$.

Finally, any investor who does not reach an agreement with the entrepreneur holds onto their initial capital, $\bar{I}^t$, and if the entrepreneur cannot secure investment from \textit{any} investor, then the entrepreneur earns $d_\text{e} \ge 0$. That is, the entrepreneur has an outside option of $d_\text{e}$. In our experiments, we consider both the Poor Entrepreneur (PoorEnt) condition where $d_\text{e}=0$ and the Rich Entrepreneur (RichEnt) condition where $d_\text{e}>0$. In the RichEnt case, we consider values of $d_\text{e}$ that allow both the entrepreneur and the investor(s) to achieve gains from reaching an agreement.

The characterization of equilibria for these bargaining problems depends on $(p,\alpha_H,\alpha_L)$ and, following the Nash bargaining framework, on the relative bargaining power of the players, indexed by $\theta_i \in [0,1]$ to denote the relative bargaining power of investor $i$ when bargaining with the entrepreneur. Equal bargaining power is given by $\theta_i = \nicefrac{1}{2}$. We separately characterize equilibria for both bargaining institutions. In the single investor case, the equilibrium is always unique. In the two investor cases, multiple equilibria are possible, and we will provide further discussion below. To the extent possible, the theoretical results below will focus on parametrizations that give rise to a unique equilibrium. Further, for ease of exposition, we will assume equal bargaining powers. A more complete characterization of the equilibria under general ${\mu_\alpha}$ and general bargaining powers is relegated to Appx. \ref{sec:appx}. Lastly, we assume risk neutrality (see Appx. \ref{sec:risk:aversion} for the risk-averse case).

\subsubsection*{Bargaining with a Single Investor (SI)}
In this case, there is a single investor, whom we will refer to as Investor 0. We assume that $\bar{I} = e$ and that the set of feasible investments, $I = [0,e]$ (when there is no scope for confusion, we will drop the superscript indicating the bargaining institution), although under risk neutrality, efficiency will always lead to full investment. Denote by $I_0$ the amount invested by Investor 0. The value of the business becomes $V=\alpha I_0$. 

Investor~$0$ and the entrepreneur bargain over the size of the investment made by the investor, $I_0$, and the share of the startup, $s_0$, that the investor will receive in exchange for making the investment. The entrepreneur's share is given by $s_\text{e} = 1-s_0$. Let $d_\text{e}$ and $d_0$ denote the disagreement point of the entrepreneur and Investor~$0$, respectively; i.e., their respective profits if the negotiation breaks down. As stated above, an investor who does not invest keeps their endowment; hence,  $d_0=e$. The entrepreneur keeps the outside option $d_\text{e} \ge 0$ if the investor does not invest. Given our assumptions, the expected profit of the entrepreneur is $\pi_\text{e}(I_0, s_0)={\mu_\alpha} I_0  (1-s_0)$, while the expected profit of the investor is $\pi_0(I_0, s_0)={\mu_\alpha} I_0  s_0+e -I_0$. If a deal is settled, the investment $I_0$ and the share $s_0$ maximize the following Nash product:
\begin{align}
\begin{split}
\max_{I_0\in[0, e],~ s_0\in[0, 1]}~&\left[\pi_0(I_0, s_0)-d_0\right]\left[\pi_\text{e}(I_0, s_0)-d_\text{e}\right]\\
&\pi_0(I_0, s_0)\geq d_0,~\pi_\text{e}(I_0, s_0)\geq d_\text{e}.
\end{split}
\end{align}
Solving (3), we obtain the following bargaining outcome.
\begin{proposition}[Single investor]\label{prop:single}
The investor invests $I_0^{SI}= e$. The shares are as follows:
\begin{align*}
&s_0^{SI}=\frac{{\mu_\alpha} +1}{2 {\mu_\alpha}}-\frac{d_\text{e}}{2 e {\mu_\alpha}}, \quad
s_\text{e}^{SI}=1-s_0^{SI}=\frac{{\mu_\alpha} -1}{2 {\mu_\alpha}}+\frac{d_\text{e}}{2 e {\mu_\alpha}}.
\end{align*}

\end{proposition}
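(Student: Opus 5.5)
We solve the Nash program in (3) directly, using the change of variables that separates efficiency from division. For fixed $I_0$, write the total expected surplus over disagreement as
\[
\bigl[\pi_0(I_0,s_0)-d_0\bigr]+\bigl[\pi_\text{e}(I_0,s_0)-d_\text{e}\bigr]
=({\mu_\alpha}-1)I_0-d_\text{e},
\]
where we used $d_0=e$. This sum does not depend on $s_0$: the share only shifts surplus between the two parties.

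The first step fixes the investment level. Let $G_0=\pi_0-d_0=\mu_\alpha I_0 s_0-I_0$ and $G_\text{e}=\pi_\text{e}-d_\text{e}$. For a given total $T=G_0+G_\text{e}$, the product $G_0G_\text{e}$ is at most $T^2/4$, with equality at $G_0=G_\text{e}=T/2$. Since $\mu_\alpha=\alpha_Hp+(1-p)>1$, the total $T$ is strictly increasing in $I_0$. Hence the maximal attainable product increases in $I_0$, and the optimum is at the corner $I_0^{SI}=e$.

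One point needs care. We must check that the equal split $G_0=G_\text{e}$ is actually reachable with some $s_0\in[0,1]$ satisfying both participation constraints. This requires $T=({\mu_\alpha}-1)e-d_\text{e}\ge 0$, which is exactly the standing assumption that gains from agreement exist in the RichEnt case. Under that assumption, the map $s_0\mapsto G_0$ is affine and ranges from $-I_0$ to $({\mu_\alpha}-1)I_0$ as $s_0$ runs over $[0,1]$. That range covers $[0,T]$, so the split is feasible.

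The second step solves for the share. With $I_0=e$, set $\mu_\alpha e s_0-e=\tfrac12\bigl[({\mu_\alpha}-1)e-d_\text{e}\bigr]$. Solving gives
\[
s_0^{SI}=\frac{{\mu_\alpha}+1}{2{\mu_\alpha}}-\frac{d_\text{e}}{2e{\mu_\alpha}},
\]
and $s_\text{e}^{SI}=1-s_0^{SI}$ then yields the stated expression.

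As a cross-check, the same result follows from the first-order conditions. Since $G_0$ and $G_\text{e}$ are jointly affine in $s_0$ for fixed $I_0$, the Nash product is a concave quadratic in $s_0$, so its stationary point is the maximizer. Setting $\partial/\partial s_0=0$ gives $G_0=G_\text{e}$. Substituting back leaves an objective proportional to $T(I_0)^2$, which is increasing in $I_0$ on $[0,e]$ whenever $T\ge0$.

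Uniqueness holds because $T(e)>0$ in the nondegenerate case, so the maximizer is strict. The only real obstacle is the boundary analysis in $I_0$: we must rule out an interior or zero investment and verify the participation constraints. Both are handled by the sign of $T$.
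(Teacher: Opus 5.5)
Your proof is correct and follows essentially the same route as the paper's: the paper also uses the fact that the Nash solution splits the joint surplus $\pi_\text{e}+\pi_0-d_\text{e}-d_0$ according to bargaining power (equally here). It then observes that this joint surplus is increasing in $I_0$, so $I_0^{SI}=e$, and backs out $s_0^{SI}$ from the split condition. Your explicit check that the equal split is attainable with $s_0\in[0,1]$ under the participation constraints is added rigor rather than a different argument, as is your observation that ${\mu_\alpha}>1$ together with $({\mu_\alpha}-1)e\ge d_\text{e}$ suffices (the paper invokes ${\mu_\alpha}>2$).
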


Proposition \ref{prop:single} reproduces the standard result from the Nash Bargaining literature: converted to expected profits, the shares equalize the negotiators' gains from negotiating minus their disagreement payoffs.

\subsubsection*{Bargaining with Two Investors (TI)}
In this setting, the set of investors is $\mathcal{I} = \{\textrm{Investor 1, Investor 2}\}$. Each investor has an endowment of $\bar{I} = \nicefrac{e}{2}$ and the set of feasible investments for each investor is $I = [0,\nicefrac{e}{2}]$. That is, each investor has an endowment of half the total desired investment and can invest up to their endowment. The value of the startup after the bargaining is $V=\alpha(I_1 + I_2)$.\endnote{We focus on the two investor case because it captures many of the first-order bargaining dynamics relative to the single investor case, for example, the improved bargaining position of the entrepreneur with multiple investors. However, much of the theoretical analysis can be readily extended to an arbitrary number of investors.}

With multiple investors, each one (Investors $i=1,2$) engages in separate bilateral bargaining with the entrepreneur about the investment amounts, $I_i$, and the shares, $s_i$, received in exchange for their investment. We adopt the Nash-in-Nash solution approach to determine the negotiation outcome; i.e., the negotiation outcomes are derived as a Nash equilibrium of two simultaneous Nash bargaining problems. We denote the outcome of each bargaining unit~$i$ (the bargaining between the entrepreneur and Investor~$i$) by $(I_i, s_i)$ and the collective outcomes by $\bm{I}=(I_1, I_2)$ and $\bm{s}=(s_1, s_2)$. Then, the expected profit of the entrepreneur is $\pi_\text{e}(\bm{I}, \bm{s})= {\mu_\alpha} (I_1+I_2)  (1-s_1-s_2)$  and the expected profit of Investor~$i$ is $\pi_i(\bm{I}, \bm{s})= {\mu_\alpha} (I_1+I_2)  s_i+ \nicefrac{e}{2} -I_i$.

With two investors, if the entrepreneur fails to agree with \textit{both} investors, it is still the case that the startup fails to launch, and the entrepreneur earns the outside option $d_\text{e} \ge 0$. However, the entrepreneur's disagreement point versus any one of the two investors is more subtle, because the entrepreneur may still earn a profit from agreement with the other investor. Hence the entrepreneur will have a disagreement point versus each investor $i$, denoted by $d_\text{e}^{-i}$, which will depend on the shared \textit{beliefs} about what would happen if the entrepreneur and that investor disagreed. We make the assumption common in the literature that the agreement with Investor $j$ is the same, \textit{whether or not the entrepreneur agreed with} Investor $i$ \citep{Yurukoglu}.\endnote{This assumption is plausible given that the process and the outcome of the negotiation between the entrepreneur and investor $i$ are not observable to investor $j$. It is, however, an assumption, and the validity of the assumption can be evaluated with our experimental data.} Then $d_\text{e}^{-1}=\pi_\text{e}(0, I_2, 0, s_2)$ is the profit of the entrepreneur when Investor~$2$ is the only investor. Similarly $d_\text{e}^{-2} = \pi_\text{e}(I_1, 0, s_1, 0)$.  Further, the disagreement point of Investor~$i$ is $d_i=\nicefrac{e}{2}$ since each investor has $\nicefrac{e}{2}$ units of capital as the endowment. Then, the investments $\bm{I}$ and the shares $\bm{s}$ maximize the Nash products simultaneously for each $i = 1,2$:
\begin{align}
\begin{split}
\max_{I_i\in[0, \nicefrac{e}{2}],~ s_i\in[0, 1]}~&\left[\pi_i(\bm{I}, \bm{s})-d_i\right]\left[\pi_\text{e}(\bm{I}, \bm{s})-d_\text{e}^{-i}\right]\\
&\pi_i(\bm{I}, \bm{s})\geq d_i,~\pi_\text{e}(\bm{I}, \bm{s})\geq d_\text{e}^{-i}, ~i \in \{1,2\}.
\end{split}
\end{align}

\noindent Solving (4), we obtain the following proposition:
\begin{proposition}[Two investors] \label{prop:sim}
Both investors invest the endowed capital in equilibrium; i.e., $I_i^{TI}=\nicefrac{e}{2}$ for $i\in\{1,2\}$. The equilibrium shares are as follows:
\begin{align*}
&s_i^{TI}=\frac{{\mu_\alpha}+1}{5 {\mu_\alpha}}-\frac{d_\text{e}}{5 e  {\mu_\alpha}},~i=1,2, \quad  s_\text{e}^{TI}=1-s_1^{TI}-s_2^{TI}=\frac{3 {\mu_\alpha}-2}{5 {\mu_\alpha}}+\frac{2 d_\text{e}}{5 e  {\mu_\alpha}}.
\end{align*}
\end{proposition}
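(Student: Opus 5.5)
The plan is to solve the two coupled Nash bargaining problems in (4) as a symmetric fixed point. I will use the standard transferable-utility structure of each bilateral problem, exactly as in Proposition~\ref{prop:single}. For fixed $(I_j,s_j)$ of the other unit, the share $s_i$ transfers expected surplus one-for-one between Investor~$i$ and the entrepreneur. Specifically, $\partial\pi_i/\partial s_i = \mu_\alpha(I_1+I_2)$ and $\partial\pi_\text{e}/\partial s_i=-\mu_\alpha(I_1+I_2)$. So the first-order condition of the Nash product in $s_i$ (at an interior solution) equalizes the two gains:
\begin{equation*}
\pi_i(\bm{I},\bm{s})-\tfrac{e}{2} \;=\; \pi_\text{e}(\bm{I},\bm{s})-d_\text{e}^{-i}.
\end{equation*}
Given this, maximizing the Nash product over $I_i$ is equivalent to maximizing the bilateral joint surplus $\pi_i+\pi_\text{e}$ minus the disagreement payoffs, which do not depend on $I_i$. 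That joint surplus equals $\mu_\alpha(I_1+I_2)(1-s_j)+\tfrac e2-I_i$ up to constants. Its derivative in $I_i$ is $\mu_\alpha(1-s_j)-1$. This is positive whenever $s_j<1-1/\mu_\alpha$, which I will verify ex post at the candidate shares. It then gives the corner $I_i^{TI}=e/2$ for both $i$.

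Next I substitute $I_1=I_2=e/2$ and the disagreement point. Under the passive-beliefs assumption, the agreement with Investor~$j$ stands when unit $i$ breaks down, so $d_\text{e}^{-i}=\pi_\text{e}(\cdot)$ evaluated with only $(I_j,s_j)=(e/2,s_j)$ in place. When $d_\text{e}=0$, this is $\mu_\alpha\tfrac e2(1-s_j)$. The gain-equalization condition then reads
\begin{equation*}
\mu_\alpha e s_i-\tfrac e2=\mu_\alpha e(1-s_i-s_j)-\mu_\alpha\tfrac e2(1-s_j).
\end{equation*}
This is a linear system in $(s_1,s_2)$. Its unique solution is symmetric, and solving gives $\tfrac52\mu_\alpha s=\tfrac{\mu_\alpha+1}{2}$, i.e.\ $s_i^{TI}=(\mu_\alpha+1)/(5\mu_\alpha)$. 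The entrepreneur's share then follows as $1-2s_i$.

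The step I expect to be the main obstacle is pinning down the entrepreneur's payoff under partial funding when $d_\text{e}>0$. The stated formula requires the term $-d_\text{e}/(5e\mu_\alpha)$ in $s_i$. Working backwards, this corresponds to the entrepreneur's payoff when only one investor funds including an extra $d_\text{e}/2$: the unfunded half of the venture still delivers its proportional share of the outside option. With $d_\text{e}^{-i}=\mu_\alpha\tfrac e2(1-s_j)+\tfrac{d_\text{e}}{2}$ in the agreement-with-$j$-only state, the same linear system gives
\begin{equation*}
\tfrac52\mu_\alpha s_i=\tfrac{\mu_\alpha+1}{2}-\tfrac{d_\text{e}}{2e},
\end{equation*}
hence $s_i^{TI}=\frac{\mu_\alpha+1}{5\mu_\alpha}-\frac{d_\text{e}}{5e\mu_\alpha}$ and $s_\text{e}^{TI}=\frac{3\mu_\alpha-2}{5\mu_\alpha}+\frac{2d_\text{e}}{5e\mu_\alpha}$. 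I will make this convention explicit, as the natural reading of the RichEnt outside option scaling with the unfunded fraction.

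Finally, I will check the constraints. These are the participation conditions $\pi_i\ge e/2$ and $\pi_\text{e}\ge d_\text{e}^{-i}$ (both gains equal $\mu_\alpha e s_i-e/2>0$ under the RichEnt restriction that gains from trade exist), the condition $s_i\in[0,1]$, and the full-investment condition $s_j<1-1/\mu_\alpha$. I will also confirm that no one-investor or no-investment configuration is a profitable deviation for a bargaining pair, for the parametrizations considered.
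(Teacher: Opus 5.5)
Your proposal is correct and follows essentially the same route as the paper. The appendix proof works with general $\theta_i$ and specializes to $\theta_i=1/2$; it uses the same surplus-splitting condition in $s_i$, the same joint-surplus argument giving $I_i=e/2$ iff $\mu_\alpha(1-s_j)\ge 1$, and solves the same linear best-response system. The pro-rated outside option you reverse-engineered is exactly the paper's convention, $d_\text{e}^{-i}=d_\text{e}(e-I_j)/e+\mu_\alpha I_j(1-s_j)$, which it defines in the appendix and in the endnote on partial investments.
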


A straightforward comparison of $s_e^t$, $t \in \{\textrm{SI, TI}\}$, yields the following corollary.
\begin{corollary}[Comparison of entrepreneur's share]\label{corollary: ent_share} The entrepreneur's equilibrium shares satisfy that 
$s_e^{TI} > s_e^{SI}$.
\end{corollary}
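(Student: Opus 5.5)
The plan is to compare the closed-form shares from Proposition~\ref{prop:single} and Proposition~\ref{prop:sim} directly. Both are affine in the outside option $d_\text{e}$ with common denominator $\mu_\alpha$, so the corollary should reduce to the sign of one explicit expression. That sign will then be controlled by the standing assumption that agreement is mutually beneficial.

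\textbf{Step 1: compute the difference.} Putting everything over the common denominator $10e\mu_\alpha$, I would compute
\begin{align*}
s_\text{e}^{TI}-s_\text{e}^{SI}
&=\frac{3\mu_\alpha-2}{5\mu_\alpha}-\frac{\mu_\alpha-1}{2\mu_\alpha}+\frac{d_\text{e}}{e\mu_\alpha}\left(\frac{2}{5}-\frac{1}{2}\right)\\
&=\frac{2(3\mu_\alpha-2)-5(\mu_\alpha-1)}{10\mu_\alpha}-\frac{d_\text{e}}{10e\mu_\alpha}
=\frac{(\mu_\alpha+1)e-d_\text{e}}{10e\mu_\alpha}.
\end{align*}
Since $e>0$ and $\mu_\alpha>0$, the claim $s_\text{e}^{TI}>s_\text{e}^{SI}$ is equivalent to
$$d_\text{e}<(\mu_\alpha+1)e.$$

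\textbf{Step 2: justify the bound on $d_\text{e}$.} This is the only non-mechanical step.

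\emph{PoorEnt case.} Here $d_\text{e}=0$, so the inequality holds immediately.

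\emph{RichEnt case.} I would invoke the paper's assumption that $d_\text{e}$ leaves both parties gains from agreement in the single-investor problem. The total expected value under full investment is $\mu_\alpha e$, and the joint disagreement payoff is $e+d_\text{e}$. Positive gains from trade therefore require
$$\mu_\alpha e> e+d_\text{e}, \quad\text{i.e.,}\quad d_\text{e}<(\mu_\alpha-1)e,$$
which is strictly smaller than $(\mu_\alpha+1)e$.

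As an alternative route, the interior solution of Proposition~\ref{prop:single} needs $s_0^{SI}>0$. From the displayed formula this is exactly $d_\text{e}<(\mu_\alpha+1)e$. So the inequality is also implied by the mere well-definedness of the stated solution.

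\textbf{Expected difficulty.} I expect no real obstacle, since everything reduces to algebra. The one point needing care is to state explicitly which maintained assumption bounds $d_\text{e}$, since the corollary would fail for very large outside options. I would cite the gains-from-trade condition as the cleanest justification.
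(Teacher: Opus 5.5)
Your proposal is correct and follows the paper's route. The paper gets the corollary by directly comparing the closed-form shares in Propositions~\ref{prop:single} and~\ref{prop:sim}, and your difference $s_\text{e}^{TI}-s_\text{e}^{SI}=\frac{(\mu_\alpha+1)e-d_\text{e}}{10e\mu_\alpha}$ is right. You also state explicitly the condition $d_\text{e}<(\mu_\alpha+1)e$, which the paper leaves implicit; it follows from the paper's gains-from-trade assumption on $d_\text{e}$, and trivially from $d_\text{e}=0$ in PoorEnt.
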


\begin{remark}[Bargaining Power] 
    Although we presented the theory for equal bargaining power, as we show in the appendix, as bargaining power varies, a wide range of outcomes are possible. If $\theta_i = 0$ for all $i$, then in SI and TI, the entrepreneur has full bargaining power and will extract all the surplus. Such a scenario could be viewed as one in which the entrepreneur can make ultimatum offers. The flip side is when $\theta_i = 1$ for all $i$. In this case, the investors have all the bargaining power. In SI, therefore, the investor will extract all the gains. Interestingly, because of the more competitive nature of TI, even though $\theta_i = 1$, the entrepreneur is still able to extract some surplus. For details, please see EC.1 for theoretical analysis with general bargaining power and Section \ref{sec:revised:theory:beliefs} where we estimate bargaining power parameters. \hfill $\blacksquare$
\end{remark} 

\begin{remark}[Larger Investors] \label{rem:TI:L}
In our analysis of the TI scenario above, we consider investors with limited budgets to fund the startup, leading to complementary roles for the two relatively small investors, i.e., the entrepreneur would benefit from the full investment by both investors. In an alternative setting, each investor may have sufficient capital to unilaterally provide the startup with adequate amount of funding. In this case, the two investors are larger than before and are effectively substitutes in the sense that if the entrepreneur receives the full investment from one, they will not need the investment from the other. We call this scenario TI-L. To conserve space, the analytical results for this scenario are provided in Online Appendix~\ref{sec:TIL_App}.\hfill $\blacksquare$
\end{remark}


\subsubsection{Preferred Stock Contract \label{sec:poor:ent:contracts}} 
With Preferred Stock contracts investors receive downside protection in the form of liquidation preferences. Specifically, we set downside protection to be \textit{exactly equal} to the investment amount. This means, in the high state of the world ($\alpha=\alpha_H > 1$), the value $V$ is divided according to the negotiated shares as long as the investors' share is sufficient to cover their investment amount. If it is not, investor(s) receive their investment amount back. Further, in the low state of the world ($\alpha=\alpha_L = 1$) investor(s) receive their investment back. Thus, under Preferred Stock contracts investor(s) are insured against potential losses in both states of the world.\endnote{In practice, the extent to which the investor is protected from potential losses (sometimes referred to as ``Liquidation Multiple'') may be set endogenously by the negotiators. To simplify the analysis and the experiment, we assume an exogenous liquidation multiple of 1.} Apart from this downside protection for investor(s), the model remains the same as in the Common Stock contract. 

To distinguish between contracts, we use ${s}_i^t$ and $I_i^t$ (resp., $\tilde{s}_i^t$ and $\tilde{I}_i^t$) to denote the equilibrium share and investment amount for Common Stock (resp., Preferred Stock) contracts with $i\in\{e, 0, 1, 2\}$ and $t\in\{SI,~TI\}$.  Much of the analysis is analogous to the Common Stock contracts. Therefore, we only present the main results below. Detailed analysis and proofs are available in Appx. \ref{sec:protection_appendix}.

\textbf{Single Investor (SI)}  Under Preferred Stock contracts Investor 0 is paid up to $I_0$ before the entrepreneur receives any proceeds. This is true in both states of the world. Recall that the low state multiplier $\alpha_L=1$. Thus, in the low state of the world, Investor 0 receives exactly $I_0$ while the entrepreneur receives nothing.  The following proposition summarizes the bargaining outcome.
\begin{proposition}[Single investor bargaining]\label{prop:single_protection}
The investor invests $\tilde{I}_0^{SI}=e$. The shares are as follows:
\begin{align*}
\tilde{s}_0^{SI}=\frac{\alpha_H    + 1}{2 \alpha_H }-\frac{ d_\text{e}}{2 e \alpha_H  p}, \quad \tilde{s}_\text{e}^{SI}=1-\tilde{s}_0^{SI}=\frac{\alpha_H  - 1}{ 2 \alpha_H }+\frac{ d_\text{e}}{2 e \alpha_H  p}.
\end{align*}
\end{proposition}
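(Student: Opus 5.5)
The plan follows the proof of Proposition \ref{prop:single}. I will write out the Preferred Stock expected payoffs, reduce the Nash product to a transferable-utility problem, choose $\tilde I_0$ to maximize the joint surplus, and then split that surplus equally.

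\textbf{Payoffs.} In the high state the investor receives $\max\{s_0\alpha_H I_0,\, I_0\}$ and the entrepreneur receives $\alpha_H I_0$ minus this amount. In the low state ($\alpha_L=1$) the investor receives $I_0$ and the entrepreneur receives nothing.

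I will first restrict attention to the region $s_0\alpha_H\ge 1$, where the liquidation preference does not bind in the high state. There the payoffs are
\begin{align*}
\pi_0(I_0,s_0)&=p s_0\alpha_H I_0+(1-p)I_0+e-I_0,\\
\pi_\text{e}(I_0,s_0)&=p(1-s_0)\alpha_H I_0 .
\end{align*}
For fixed $I_0$, the share $s_0$ transfers expected payoff one-for-one between the two players. Their sum is $e+p(\alpha_H-1)I_0$, which does not depend on $s_0$.

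\textbf{Investment.} Because utility is transferable, the Nash product is maximized by first maximizing the joint gain
\[
p(\alpha_H-1)I_0-d_\text{e}
\]
over $I_0\in[0,e]$, and then splitting that gain equally. The joint gain is strictly increasing in $I_0$ since $\alpha_H>1$, which gives $\tilde I_0^{SI}=e$.

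\textbf{Shares.} Equal gains with $d_0=e$ means $\pi_0-e=\pi_\text{e}-d_\text{e}$. At $I_0=e$ this reads
\[
p s_0\alpha_H e-pe=p(1-s_0)\alpha_H e-d_\text{e}.
\]
Solving this linear equation gives the stated $\tilde s_0^{SI}$, and $\tilde s_\text{e}^{SI}=1-\tilde s_0^{SI}$ follows directly.

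\textbf{Main obstacle: the kink.} The step needing care is justifying the restriction $s_0\alpha_H\ge 1$. I would argue it in two parts.

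\emph{Outside the region.} If $s_0\alpha_H<1$, the investor receives exactly $I_0$ in both states. Their gain over the disagreement point is then zero, so the Nash product equals zero. This cannot be the maximizer whenever a positive Nash product is attainable.

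\emph{Inside the region.} The derived solution actually lies in the region: $\tilde s_0^{SI}\alpha_H\ge1$ holds if and only if $d_\text{e}\le ep(\alpha_H-1)$. This is exactly the condition that total gains from trade are nonnegative, which the paper assumes in the RichEnt case. The same condition also delivers the participation constraints $\pi_0\ge d_0$ and $\pi_\text{e}\ge d_\text{e}$.

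\textbf{Uniqueness.} On the relevant region the Nash product is strictly concave in $s_0$ once $I_0=e$ is fixed. This gives uniqueness, mirroring the single-investor Common Stock case.
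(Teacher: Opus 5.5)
Your proposal is correct and follows essentially the same route as the paper: restrict to $s_0\alpha_H\ge 1$, write the resulting linear payoffs, maximize the joint surplus to get $\tilde I_0^{SI}=e$, then split the gains equally to solve for $\tilde s_0^{SI}$. Your handling of the kink is more careful than the paper's, which only asserts that the investor would not invest when $s_0<1/\alpha_H$; you show the Nash product is zero there, and you check that $\tilde s_0^{SI}\alpha_H\ge 1$ holds if and only if $d_\text{e}\le ep(\alpha_H-1)$.
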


\textbf{Two Investors (TI)} Under Preferred Stock contracts both investors, if they choose to invest, receive at least their endowments back in both states of the world. The following proposition summarizes the equilibrium bargaining outcomes in these scenarios.
\begin{proposition}[Two investor bargaining]   \label{prop:sim_protection}
There exists an equilibrium bargaining outcome in which both investors invest; i.e., $\tilde{I}_i^{TI}=\nicefrac{e}{2}$ for $i\in\{1,2\}$. The equilibrium shares are as follows:
\begin{align*}
\tilde{s}_i^{TI}=\frac{\alpha_H+1}{5\alpha_H}-\frac{d_\text{e} }{5 e \alpha_H p  },~i=1,2, \quad \tilde{s}_\text{e}^{TI}=1-\tilde{s}_1^{TI}-\tilde{s}_2^{TI}=\frac{3 \alpha_H -2}{5 \alpha_H}+\frac{2 d_\text{e} }{5 e \alpha_H p }.
\end{align*}

\end{proposition}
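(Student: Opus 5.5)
The plan is to mirror the proof of Proposition \ref{prop:sim}: guess a symmetric candidate and verify that it solves both bilateral Nash problems in (4) simultaneously, with the payoffs replaced by their Preferred Stock counterparts. Since the statement only asserts \emph{existence} of an equilibrium with full investment, I do not need uniqueness. I will fix $\tilde I_j=\nicefrac{e}{2}$ and $\tilde s_j$ for the other investor, and show that $(\tilde I_i,\tilde s_i)=(\nicefrac{e}{2},\tilde s_i^{TI})$ is a best response for pair $i$.

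\textbf{Step 1: payoffs on the relevant region.} In the low state, $\alpha_L=1$ implies $V=I_1+I_2$. Each investor recovers exactly $I_i$ and the entrepreneur receives nothing. In the high state, investor $i$ receives $\max\{s_i\alpha_H(I_1+I_2),\,I_i\}$. I will restrict attention to the region where the liquidation preference does not bind in the high state, so that $s_i\alpha_H(I_1+I_2)\ge I_i$. The candidate must then be checked to lie in this region, which gives a parameter condition; this is why the proposition is stated as existence. On this region:
\begin{itemize}
\item investor $i$'s gain over $d_i=\nicefrac{e}{2}$ is $p\,[s_i\alpha_H(I_1+I_2)-I_i]$;
\item the entrepreneur's expected profit is $p\,\alpha_H(I_1+I_2)(1-s_1-s_2)$.
\end{itemize}
The factor $p$ multiplies both gains. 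This explains why $\mu_\alpha$ in Proposition \ref{prop:sim} is replaced by $\alpha_H$, and why $d_\text{e}$ appears divided by $p$.

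\textbf{Step 2: disagreement point $d_\text{e}^{-i}$.} I will use the same passive-beliefs convention as in Proposition \ref{prop:sim}: the agreement with investor $j$ stands if pair $i$ disagrees. The entrepreneur's payoff from the stand-alone deal $(\nicefrac{e}{2},\tilde s_j)$ is then evaluated under the preference rule. In the high state, if $\tilde s_j\alpha_H\ge 1$ this gives $p(1-\tilde s_j)\alpha_H\nicefrac{e}{2}$. The outside option must then be added in exactly the way it enters the common-stock derivation in the appendix. That derivation contributes a $d_\text{e}/2$ term, which is what produces the $-d_\text{e}/(5e\mu_\alpha)$ correction in Proposition \ref{prop:sim}.

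\textbf{Step 3: first-order conditions.} For fixed $I_i$, the Nash product is a product of two functions of $s_i$, one increasing and one decreasing, each linear. The interior maximizer therefore equalizes the two gains:
\[
p\,\bigl(s_i\alpha_H e-\tfrac{e}{2}\bigr)=p\,\alpha_H e(1-s_i-s_j)-d_\text{e}^{-i}.
\]
Imposing symmetry $s_1=s_2=s$ and substituting $d_\text{e}^{-i}$ from Step 2 gives
\[
\tfrac{5}{2}\alpha_H s=\tfrac{\alpha_H+1}{2}-\tfrac{d_\text{e}}{2pe},
\]
which is the stated $\tilde s_i^{TI}$. The entrepreneur's share follows as $1-2\tilde s_i^{TI}$.

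\textbf{Step 4: optimality of full investment.} I will next show that, given the optimal split, the Nash product is nondecreasing in $I_i$ on $[0,\nicefrac{e}{2}]$. The joint surplus of pair $i$ is $p(\alpha_H-1)I_i$ plus terms independent of $I_i$, and $\alpha_H>1$, so the maximized product increases with $I_i$. The same efficiency argument as in Proposition \ref{prop:single_protection} then yields $\tilde I_i=\nicefrac{e}{2}$.

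\textbf{Step 5: verification.} Finally I will check three things:
\begin{itemize}
\item both participation constraints hold, which amounts to positive equal gains;
\item the candidate satisfies $\tilde s_i^{TI}\alpha_H e\ge\nicefrac{e}{2}$ and $\tilde s_j\alpha_H\ge1$, or the appropriate stand-alone analogue, so it lies in the region assumed in Step 1;
\item no deviation into the kinked region, where the preference binds, raises the Nash product.
\end{itemize}
For the last point, in the kinked region investor $i$'s high-state payoff is flat at $I_i$. Its gain is then zero, so the Nash product is zero there and such deviations cannot be profitable.

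\textbf{Main obstacle.} I expect the main difficulty to be Step 2 together with the region checks. Getting the exact form of $d_\text{e}^{-i}$ under the liquidation preference right, including how $d_\text{e}$ enters, determines the $d_\text{e}/(5e\alpha_H p)$ term. The max-kink also makes the problem non-smooth, so I will first try to confine everything to the non-binding region and then rule out the kinked branch directly.
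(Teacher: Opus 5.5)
Your route is essentially the paper's, specialized to $\theta_i=\tfrac12$ and to the one branch that matters. The appendix derives piecewise best responses for general $\theta_i$ and solves for all regimes; you guess and verify in the branch where no preference binds, which is enough for existence. Steps 1--3 and the equal-gains computation are correct. Your region check $\tilde s_j\alpha_H\ge 1$ is equivalent to $\alpha_H\ge 4+\frac{d_\text{e}}{ep}$, which is exactly the condition of the last case of Proposition~\ref{prop:sim_appx_protection} at $\theta_1=\theta_2=\tfrac12$. State it explicitly. The main-text statement suppresses it; it does hold at the experimental parameters. Below this threshold, investor $j$'s stand-alone deal triggers the preference, so $d_\text{e}^{-i}=\tfrac{d_\text{e}}{2}+p(\alpha_H-1)\tfrac e2$ and the shares are different.

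The genuine error is in Step 4. The joint surplus of pair $i$ does not grow at rate $p(\alpha_H-1)$ in $I_i$. Investor $j$'s share $s_j$ is a fraction of the whole firm and is held fixed in pair $i$'s problem, so investor $j$ captures $s_j$ of the high-state return on every unit that investor $i$ contributes. With your Step 1 payoffs,
\[
\frac{\partial}{\partial I_i}\Bigl(\pi_i+\pi_\text{e}-d_i-d_\text{e}^{-i}\Bigr)=p\bigl[\alpha_H(1-s_j)-1\bigr],
\]
so full investment is a best response only when $s_j\le\frac{\alpha_H-1}{\alpha_H}$. This is precisely the paper's best-response condition \eqref{eq:simultaneous_3_protection}, and it is the channel through which one-investor equilibria arise, so it cannot be skipped. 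At your candidate it holds: it reduces to $4\alpha_H\ge 6-\frac{d_\text{e}}{ep}$, which is implied by $\alpha_H>2$. The repair is therefore short: replace the slope, and add $\tilde s_j\le(\alpha_H-1)/\alpha_H$ to your Step 5 checks. The paper's cross-compensation case, handled by Lemma~\ref{lemma:no-low-share}, needs no separate treatment in your verification. For feasible deviations with $s_i+s_j\le 1$, the condition $\tilde s_j\alpha_H\ge 1$ already rules it out.
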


\noindent Note that Propositions \ref{prop:sim} and \ref{prop:sim_protection} provide existence results for two investor scenarios. For the parameter values used in our experiments, these equilibria are also unique. Details are provided in Appx.~\ref{sec:appx}.

\subsection{Experiment Overview \label{sec:overview:design}}
To test the Nash-in-Nash theory we conducted a series of incentivized human-subjects experiments. Below we present the experiment design and the hypotheses derived from our theoretical analysis.

\subsubsection{Experimental Design\label{sec:parameters}}
The experiments were organized into a two (PoorEnt vs RichEnt, varied between-subjects) $\times$ two (Single Investor vs. Two investors, varied between-subjects) $\times$ two (Common Stock vs Preferred Stock contracts, varied within-subjects) design. Table \ref{tab:treat} summarizes the design and reports session numbers and sample sizes. A total of 290 subjects participated in the experiments across 24 independent sessions. Subjects were recruited at a large public US university. Each subject experienced both Common and Preferred Stock contracts. Sessions were conducted on weekdays across multiple days of the week and times of day, with pre-scheduled session slots randomly allocated to treatments to mitigate potential confounds from session timing. To ensure that our results are robust to learning, we conducted four additional sessions that included three practice rounds before the main negotiation rounds. These sessions were conducted for the SI treatment. In the practice rounds, negotiators were presented with a simplified version of the problem. Additionally, to ensure that there were no order effects of contracts, three of the six regular SI sessions used a reverse sequence of contracts (first Preferred Stock, then Common Stock). Please see Appx. \ref{sec:RobustnessRegression} for detailed analysis, which shows that our results are robust to both the inclusion of practice rounds and to an alternative sequence of contracts. 

\begin{table}[b!] 
\footnotesize
\centering
{
  \caption{Summary of Treatments, Rounds and Subject Numbers  \label{tab:treat}}  
    \begin{tabular}{L{2.5cm} L{6cm} C{2cm}  C{2cm}}
    \toprule
  \textbf{Treatment arm  }      & \textbf{Treatment condition  } & \textbf{Number of sessions}  & \textbf{Number of subjects} \\
    \midrule
              \addlinespace              
\multirow{4}{2.5cm}{PoorEnt} & Single Investor (SI) &    \\
      & \quad Regular Sessions &  6 & 74 \\
      & \quad Additional sessions with practice rounds &   4 & 40 \\
      \addlinespace
      & Two Investors (TI)   & 6 & 72 \\
      \addlinespace
      \hdashline
      \addlinespace
\multirow{2}{2.5cm}{RichEnt} & Single Investor (SI)   & 4 & 50 \\ 
      & Two Investors (TI) & 4 & 54 \\       
   
    \bottomrule
             
        \end{tabular}
        
        \smallskip
\begin{minipage}{0.8\textwidth}\scriptsize \textit{Notes:}  Of the ten sessions of the SI-PoorEnt treatment, four sessions were conducted with a practice round phase. The remaining six sessions were conducted without the practice rounds, with half of the sessions having a reverse ordering of the contracts (Preferred $\longrightarrow$ Common). This was done to check for robustness to learning. 
\end{minipage}
}
\end{table}%

\subsubsection{Experimental Parameters and Theory-Motivated Hypotheses\label{sec:parameters}}
To test the theory predictions, we chose the following parameters. First, in all scenarios, the total amount to invest is 200 units. In SI, this comes entirely from the single investor, while in TI, each investor can only provide half of the maximum funding, i.e., 100 units. In Appx. \ref{sec:TI-Alt} we present the experiment design and analysis for the alternative case with two larger investors, i.e., TI-L mentioned in Remark \ref{rem:TI:L}, where each investor can alone provide the maximum funding. If the negotiations fail, the respective investor's payoff is simply their initial endowment. We set parameter values $(\alpha_H,\alpha_L,p) = (11,1,0.2)$. That is, the value of the start-up is 11 times the invested amount if it is successful and equal to the invested amount if it is unsuccessful. Further, the probability of success is 20\%.  A low value of $p$ and a high value of $\alpha_H$ are reflective of the entrepreneurial context in which there is a small probability of large profits and a large probability of failure. The expected return on investment is $\mathbbm{E}[\alpha]=0.2\times11 + 0.8\times1=3$; thus the expected size of the pie is also held constant at $200\times3 = 600$ in all treatments, provided that all agreements are secured. These parameters were chosen to (a) yield unique equilibrium predictions (b) produce noticeable differences in the anticipated treatment effects, and (c) facilitate calculations and intuition building for untrained participants in the lab. 
 
In addition to the number of investors (one vs. two), the key experimental manipulations are the entrepreneur's outside option (PoorEnt vs RichEnt) and the contract type (Common vs Preferred). In PoorEnt, the entrepreneur's outside option, $d_\text{e}$ is set to 0. Thus, if the negotiations fail, the entrepreneur walks away with nothing. In RichEnt, the entrepreneur's outside option, $d_\text{e}$ is set to 160. This is close to the investors' (combined) outside option of 200, resulting in similar bargaining powers between the entrepreneur and the investor(s). At the same time, walking away from the negotiations yields a substantially lower payoff than the expected value of the venture under full investment (under most reasonable splits), resulting in an incentive for both parties to come to an agreement. Under Common Stock contracts each party is paid based on their share, while under Preferred Stock, the investor is guaranteed to be paid back their initial investment, before the entrepreneur receives any money.\endnote{To account for the possibility of partial investments, the outside options of both parties are pro-rated. For example, in the TI case, if the entrepreneur only secures 100 out of the maximum of 200 units, then the entrepreneur retains half of their outside option, i.e., 80 units.} 
  
Under these parameterizations, Corollary \ref{corollary: ent_share} holds for both PoorEnt and RichEnt scenarios: the entrepreneur obtains a larger share in the two-investor scenario (TI) than in the single investor scenario (SI). Further, a straightforward comparison of the investor shares in Propositions 1 and 3 (for SI) and in Propositions 2 and 4 (for TI) shows that the entrepreneur's share should go up with Preferred Stock relative to Common Stock (See Table \ref{tab:theory:predictions} for calculations). Thus, we hypothesize: 

\vspace{0.2cm}

\noindent \textbf{H1 (Number of Investors):} \textit{In both PoorEnt and RichEnt, the entrepreneur's share is higher under TI than under SI.} 

\vspace{0.2cm}
\noindent \textbf{H2 (Contracts):} \textit{Holding the size of the entrepreneur's outside option and the number of investors constant, the entrepreneur obtains a smaller share under Common Stock contracts than under Preferred Stock contracts.}

In addition to looking at shares, we will examine expected profits in each treatment. Because the theory prediction is that investors should always invest in equilibrium and total investment amounts are held constant across treatments, a comparison of expected profits should yield the same result as a comparison of shares (conditional on the contract). 

\section{How Does the Number of Investors Affect Bargaining? \label{sec:num:inv}}
In this section, we experimentally examine the effects of the number of investors on the negotiation outcomes (i.e., test Hypothesis 1).  We first examine scenarios where the entrepreneur is ``poor'' and has no outside options if the negotiations fail. This is to represent the bargaining dynamics that is tipped towards the investors, as is often the case for an early-stage startup. We then examine scenarios in which the entrepreneur has a non-zero outside option and is on a more equal footing with the investor(s). This section focuses on Common Stock contracts (Preferred Stock contracts are deferred to Section \ref{sec:preferred}).\endnote{We do not report detailed results for TI vs. SI comparisons under Preferred Stock contracts to conserve space. However, we note that all H1 test results hold under Preferred Stock contracts, with the same significance levels as under Common Stock.}  

\subsection{Experimental Setup}
\subsubsection*{Experimental Protocols}
At the beginning of each session, subjects were assigned the role of either an entrepreneur or an investor. Subjects kept that role for the duration of the experiment. For brevity, we will refer to subjects by their role: entrepreneur or investor. The experiment was programmed in oTree \citep{chen2016otree} and conducted virtually via Zoom, using a protocol that was adapted from \citet{Zoom} and \citet{Zoom-Mich}.  Each subject could participate in only one session and, within a session, participated in several rounds of the same treatment. At the beginning of each round subjects were randomly matched into a dyad (SI treatments) or triad (TI treatments).\endnote{The number of rounds played by each subject varies by treatment (between 6 and 10); this was done to ensure that each experimental session would last no longer than 90 minutes. The analysis presented in the main text uses two-sided comparisons of subject-level averages to test hypotheses. However, controlling for round and experimental session does not affect the direction or significance of our results.} Subjects were paid for one randomly selected round and we did not reveal the realized startup value in any round until after all rounds were completed. This was to avoid wealth effects and ensure incentive compatibility \citep{CSS98,Azrieli}. Average dollar earnings were \$17.50 (min. \$5; max. \$48.40).

\subsubsection*{Negotiation Format}
In each condition (SI, TI), the entrepreneur engages in bilateral negotiations with each investor present. Importantly, in TI, investor $i$ cannot see the offers (or any agreement) made between the entrepreneur and investor $j$. The negotiation format is semi-structured: players can make or accept offers specifying a share of the realized startup value (between 0 and 100\%) that the investor will receive in exchange for their investment. Further, investments are all or nothing. That is, if the investor and entrepreneur reach an agreement, it is for the full amount of their endowment. No structure is placed on who may propose first, or on the order of proposals. Players do not see each other and may not exchange verbal messages during the negotiations. At the end of each round, the results of all negotiations are announced to all players in a dyad/triad. See Fig. \ref{fig:screen:SI}-\ref{fig:screen:TI:S} for screenshots of the negotiation interface. 

In each period, the players in a negotiation dyad/triad have 90 seconds in the SI treatment or 180 seconds in the TI treatment to reach an agreement.\endnote{We chose 90 seconds based on the bilateral bargaining times commonly used in the experimental literature \citep{Isoni}. We opted for a doubling of the bargaining time in the TI treatments, relative to SI. The rationale for this design choice is that the TI treatments included two bilateral negotiations, between the entrepreneur and investor $i$, as well as between the entrepreneur and investor $j$.} If no agreement is reached with any investor, then all players receive their outside options. In the two investor cases, the entrepreneur can secure deals with zero, one or both investors.  If negotiations succeed with only one investor, then the excluded investor receives their initial endowment, while the entrepreneur and investor who did reach an agreement are paid according to the terms of the agreement and the realized value of the startup. The complete set of experimental instructions and procedures is reproduced in Appx. \ref{sec:instructions}.

\subsection{Results\label{sec:shares:results}}
Recall that in equilibrium, the entrepreneur and the investor(s) always come to an agreement (Propositions \ref{prop:single}-\ref{prop:sim}). Further, equilibrium total investment amounts  and expected size of the pie (to be divided) are set to be the same in all treatments.  Thus, H1 applies to both the share of the startup received by the entrepreneur, as well as to the expected profits of the entrepreneur. We will examine both metrics when reporting our results. 

\subsubsection*{PoorEnt\label{sec:poor:ent:results}}
Table \ref{tab:poor:ent:summary} provides a summary of results for the PoorEnt case. Consider first the frequency of agreements in the left panel of Table \ref{tab:poor:ent:summary}. We decompose agreements into subcategories for full/efficient investment ($I_T = 200$), partial/inefficient investment ($I_T = 100$) and disagreement (i.e., $I_T = 0$).  As can be seen, full agreements are achieved between 77\% and 83\% of the time. Not surprisingly, full agreement is directionally less common in TI treatments ($p=0.134$). At the same time, negotiating with multiple investors has the advantage that complete disagreement is exceedingly rare in the TI treatments (2.08\% of the time vs. 17.08\% in SI, $p \ll 0.01$). That is, with two investors the entrepreneur is almost always able to secure at least a partial investment.\endnote{The increase in agreement rates can also be interpreted as an increase in total expected surplus. This is because, given our parameterizations, the expected value of the startup is always greater than the sum of the outside options of the bargaining parties.}

Consider next the middle panel, which shows the final share negotiated by the entrepreneur. Focusing on efficient agreements, the results show that the two investor case leads to a worse outcome for the entrepreneur relative to the single investor case ($p=0.006$), with a difference of about 8.5 percentage points. Finally, consider the right panel in which we compute expected profits for the entrepreneur. The first row shows expected profits conditional on agreement. Given that these conditional profits are computed by multiplying the shares by a constant factor, the comparison is identical to the comparison of shares in the middle panel  ($p=0.006$). However, if we consider overall expected profits (after accounting for the partial agreements), these are not significantly different between treatments ($p=0.246$). This null result is driven by a relatively high rate of partial agreements in TI, which allows the entrepreneur to receive a non-zero payoff even when one of the investors does not invest. 

\subsubsection*{RichEnt\label{sec:rich:ent:results}}
Table \ref{tab:rich:ent:summary} repeats the analysis for the RichEnt treatment arm. As shown in the left panel, the bargaining environment becomes considerably more challenging: efficient agreements occur less often in RichEnt compared to PoorEnt ($p = 0.007$), while partial agreements and disagreements are more frequent in RichEnt ($p = 0.007$ and $p = 0.756$, respectively). Although not predicted by theory, this is unsurprising. As  the entrepreneur's outside option $d_e$ increases from 0 (in PoorEnt) to 160 (in RichEnt), the range of agreements benefiting both parties shrinks substantially, reducing incentives to reach an agreement. This effect is amplified by uncertainty in the startup's value, making potential losses a real concern for both parties. (Recall that the startup fails with an 80\% chance.)

\begin{table}[t]
\centering
\footnotesize
\caption{\label{tab:poor:ent:summary} Summary of Agreements, Shares and Profits in PoorEnt Treatments}
\begin{tabular}{@{}lcccC{0.15cm}cccC{0.15cm}ccc@{}}
\toprule
 &
  \multicolumn{3}{c}{\textbf{Frequency of Outcome (\%)}} &
  \textbf{} &
  \multicolumn{3}{c}{\textbf{Entrepreneur Share (\%)}} &
  \textbf{} &
  \multicolumn{3}{c}{\textbf{Entrepreneur Expected Profit}} \\ \cmidrule{2-4}\cmidrule{6-8} \cmidrule{10-12}
\textbf{Outcome Type} & SI    & TI    & $p-$value &  & SI     & TI     & $p-$value &  & SI     & TI     & $p-$value \\ \midrule
Full Investment               & 82.92 & 77.08 & 0.134     &  & 43.30  & 34.81  & 0.006     &  & 259.78 & 208.87 & 0.006     \\
Partial Investment          & ---  & 20.83 & ---       &  & ---    & 71.86  & ---       &  & ---    & 215.57 & ---       \\
No Investment         & 17.08 & 2.08  & $\ll0.001$       &  &  --- & --- & ---       &  & 0.00   & 0.00   & ---       \\
Overall  &  ---  & ---   & ---       &  & ---    & ---    & ---       &  & 215.56 & 195.38 & 0.246     \\ \bottomrule
\end{tabular}
    \vspace{-3mm}

\begin{minipage}{0.98\textwidth} \scriptsize
Note: $p-$values are based on $t-$tests on subject averages (focusing on entrepreneurs only to avoid double-counting).
\end{minipage}
\end{table}

\begin{table}[t]
\centering
\footnotesize
\caption{\label{tab:rich:ent:summary} Summary of Agreements, Shares and Profits in RichEnt Treatments}
\begin{tabular}{@{}lcccC{0.15cm}cccC{0.15cm}ccc@{}}
\toprule
 &
  \multicolumn{3}{c}{\textbf{Frequency of Outcome (\%)}} &
  \textbf{} &
  \multicolumn{3}{c}{\textbf{Entrepreneur Share (\%)}} &
  \textbf{} &
  \multicolumn{3}{c}{\textbf{Entrepreneur Expected Profit}} \\ \cmidrule{2-4}\cmidrule{6-8} \cmidrule{10-12}
\textbf{Outcome Type} & SI    & TI    & $p-$value &  & SI     & TI     & $p-$value   &  & SI     & TI     & $p-$value   \\ \midrule
Full Investment             & 77.60 & 72.53 & 0.284     &  & 48.35  & 34.37  & $\ll 0.001$ &  & 290.08 & 206.20 & $\ll 0.001$ \\
Partial Investment          & ---   & 23.30 & ---       &  & ---    & 68.05  & ---         &  & ---    & 284.14 & ---         \\
No Investment         & 22.40 & 4.17  & $\ll0.001$       &  & --- & --- & ---         &  & 160.00 & 160.00 & ---         \\
Overall   & ---   & ---   & ---       &  & ---    & ---    & ---         &  & 260.67 & 221.12 & 0.005       \\ \bottomrule
\end{tabular}
    \vspace{-3mm}
    
\begin{minipage}{0.98\textwidth} \scriptsize
Note: $p-$values are based on $t-$tests on subject averages (focusing on entrepreneurs only to avoid double-counting).
\end{minipage}
\end{table}

Next, consider the middle panel. As in the PoorEnt case, if we compare the final negotiated shares, entrepreneurs are worse off in TI relative to SI ($p=0.000$). Indeed, the gap between TI and SI grows from 8.5 percentage points in PoorEnt to about 14 percentage points in RichEnt. This is driven largely by an increase in the entrepreneur's bargaining power in the SI treatment, where the entrepreneur is now able to secure 48.35\% of equity compared to 43.30\% in the SI treatment in the PoorEnt case, i.e., an increase of about 5 percentage points ($p=0.027$). In contrast, the TI shares are not significantly different between PoorEnt and RichEnt ($p=0.918$). As before, it is also informative to examine the treatment differences in expected profits.  With increased bargaining power, expected profit (in the right panel) results now mirror negotiated share results (in the middle panel): the entrepreneur is uniformly better off with a single large investor than with two smaller investors.\endnote{Tables 2 and 3 include the data from all experimental rounds. However, it is plausible that learning affects the results differently in each treatment. To ensure that our results were not affected by learning, we replicated the analysis using the last round in each treatment. Our test results fully replicated under this alternative specification.}
\begin{result} \label{res:shares}
H1 is not supported. Contrary to H1, the entrepreneur's share is higher when negotiating with a single large investor (SI) than with two investors (TI). In addition, the entrepreneur's expected profits are higher under SI when the entrepreneur is rich. However, due to a frequent occurrence of partial investments, expected profits are not significantly different between SI and TI when the entrepreneur is poor. 
\end{result}

\begin{remark}[Results with Larger Investors]
    \label{rem:TI:L:results} In Appx. \ref{sec:TI-Alt} we present  results for the large investor case (TI-L). The richer set of equilibria in that environment (characterized in EC.1.1.3) makes the theory predictions contingent on the inclusion of each investor into the final agreement. However, what stands out is that entrepreneurs do best with exclusionary agreements in which one investor provides full investment amount. In this case, the entrepreneur obtains a higher share in TI-L than in SI. In contrast, for agreements in which both investors invest, the results are not qualitatively different from TI.
\end{remark}

\subsection{Bargaining Process}
To better understand Result \ref{res:shares} we examine the bargaining processes in each treatment (detailed analysis of the bargaining process is available in Appendix \ref{sec:bargaining:appendix}. We first show that the differences in bargaining outcomes between SI and TI conditions are primarily driven by initial offers rather than differences in concession behavior. When entrepreneurs make more generous initial offers (i.e., propose larger shares to investors), they significantly increase the chances of successfully reaching an agreement. Conversely, higher initial demands (asking for a larger share) by investors tend to lower agreement rates, significantly so in the SI condition. Moreover, initial positions strongly predict final outcomes: more generous initial offers by entrepreneurs, as well as higher initial demands by investors, lead to investors ultimately obtaining larger equity shares and vice versa for entrepreneurs. Final agreements tend to converge around the midpoint between the entrepreneur's and investor's initial proposals, suggesting that both parties make substantial concessions. This suggests a strategic trade-off faced by negotiators: setting generous initial terms facilitates reaching an agreement but reduces one’s own final payoff. The stronger presence of this trade-off in SI helps explain the observed differences in agreement rates between SI and TI.

The analysis also suggests that entrepreneurs and investors adjust their opening strategies based on relative bargaining power, which drives the observed final outcomes in PoorEnt and RichEnt. Investors demand roughly six percentage points less when entrepreneurs have a positive outside option (RichEnt) than when entrepreneurs have no outside options (PoorEnt), indicating some responsiveness to the shift in bargaining power. However, these adjustments are minimal in the TI condition -- less than one percentage point -- suggesting that entrepreneurs are unable to leverage their improved outside option effectively when negotiating with two investors. This lack of strategic adjustment in TI, combined with the convergence of outcomes towards the midpoint of the negotiators' initial positions, explains the difference in final outcomes and highlights the limits of the entrepreneur's leverage in multi-party bargaining.

\section{How do Common/Preferred Stock Contracts Affect Bargaining?  \label{sec:preferred}}
We now turn our attention to the effects of contract type (Common/Preferred) on bargaining outcomes. Recall that Preferred Stock contracts eliminate the risk to investors of losing money. In Section  \ref{sec:theory}, we used Nash-in-Nash theory to show that this risk transfer should theoretically reduce investors' shares (Hypothesis 2). We next present a test of this hypothesis.

\subsection{Results}
As before, we report treatment means for agreement rates and entrepreneur shares. The results are summarized in Table \ref{tab:preferred:summary}. Consider first the frequency of agreements reported in the left part of the table. Agreement rates are somewhat higher under Preferred relative to Common Stock contracts in the PoorEnt treatment arm ($p=0.091$) but not in the RichEnt treatment arm ($p=0.497$). Further, consistent with the results in Section \ref{sec:shares:results} we observe a drop in agreement rates in RichEnt relative to PoorEnt for both the SI and TI scenarios under Preferred Stock contracts  (SI: $p < 0.001$; TI: $p =0.762$). A larger outside option for the entrepreneur makes it more challenging for the bargaining parties to come to a mutually beneficial agreement.

The middle panel of Table \ref{tab:preferred:summary} shows the final shares obtained by the entrepreneur in each scenario. Several observations are in order. First, unsurprisingly, in three out of four pairwise comparisons, the entrepreneur's share goes up as the entrepreneur's outside option goes up. 
This is a useful manipulation check that confirms that negotiations respond to the presence of outside options consistent with what bargaining theory would predict. Second, recall that H2 states that the entrepreneur's share should be higher under Preferred Stock than under Common Stock. We find only partial support for this hypothesis. Indeed, the contract effect depends on whether the entrepreneur is poor or rich. In the PoorEnt case, the shares are at most 1.5 percentage points apart between contracts conditional on the number of investors (SI: $p=0.837$, TI: $p=0.590$, and Pooled: $p=0.948$). In RichEnt, however, the entrepreneur is able to obtain a significantly higher share under Preferred than under Common Stock contracts (SI: $p=0.087$, TI: $p=0.017$, Pooled: $p=0.007$). These results suggest that the entrepreneur's outside option serves as an important moderator on the effect of contracts on equity division. 

\begin{table}[t!]
\centering
\footnotesize
\caption{\label{tab:preferred:summary} Summary of Agreements and Shares with Common and Preferred Stock Contracts}
    \begin{tabular}{llccrccrC{3.6cm}}
    \toprule
        &        & \multicolumn{2}{C{3.2cm}}{\textbf{Frequency of efficient agreements (\%)}} &       & \multicolumn{2}{C{3cm}}{\textbf{Entrepreneur share (\%)}} &       & \textbf{Entrepreneur share comparisons ($p$-value)} \\  
          &       &  Common &  Preferred &       & Common & Preferred &       & Common vs. Preferred \\
          \cmidrule{3-4}\cmidrule{6-7}\cmidrule{9-9}   
   \textbf{Treatment arm} & \textbf{Condition}       &    &    &       &   &       &  \\
             \cmidrule{1-2}
\multirow{3}{*}{PoorEnt} & SI      &  82.92 &  87.84 &       & 44.30  & 42.97 &       & 0.837 \\
                         & TI  &  77.08 &  79.29 &       & 34.81 & 36.09 &       & 0.590 \\
                         & Pooled  & 80.72  & 84.58  &       & 41.13 & 41.22 &       & 0.948 \\                                       
\cmidrule{1-2}
\multirow{3}{*}{RichEnt} & SI    & 77.60 &  76.00 &       & 48.35 & 52.28 &       & 0.087 \\
                         & TI  & 72.53 &  77.78 &       & 34.37 & 38.71 &       & 0.017 \\
                         & Pooled& 74.97 &  76.92 &       & 42.49 & 46.61 &       & 0.007 \\
    \bottomrule
    \end{tabular}%
    \vspace{0.1cm}
\begin{minipage}{0.9\textwidth} \scriptsize
Note: The $p-$values are derived from paired $t-$tests on subject averages of the average entrepreneur share by contract type for the relevant treatment condition/arm.
\end{minipage}
\end{table}
  \begin{result}
  \label{res:common:preferred} H2 is partially supported. Preferred Stock contracts lead to higher entrepreneur shares, but only when the entrepreneur has a strong outside option.
  \end{result}
 \subsection{Bargaining Process and Fairness}
\subsubsection*{Bargaining Process}
To better understand Result \ref{res:common:preferred}, in Appendix \ref{sec:bargaining:appendix} we examine the bargaining behaviors in each treatment and under each contract. As before, we see that the differences in final outcomes emerge primarily from the differences in opening offers. Entrepreneurs make lower first offers to investors under Preferred Stock contracts relative to Common Stock contracts (except in the SI-PoorEnt scenario). This behavior is consistent with theory, which predicts that entrepreneurs should be compensated for the additional risk they carry under Preferred Stock contracts. However, investors do not appear to respond to contracts in the same way. Indeed, their opening demands are virtually identical across contracts in the SI treatments ($p=0.500$ and $p=0.858$). Even more surprisingly, in the TI scenarios investors make significantly \textit{more} aggressive opening demands under Preferred Stock than under Common Stock (increasing from approximately 44\% to approximately 49\% in both PoorEnt and RichEnt conditions, both $p<0.01$). The latter behavior contradicts theoretical expectations but may be plausibly explained by peer-induced fairness -- each investor's desire to outperform their peer investor \citep[See][for similar behaviors in ultimatum games and in supply chains]{ho2009,ho2014}. Without a peer investor in SI scenarios, this aggressive behavior diminishes. To better understand the drivers of these behaviors, we next examine the fairness beliefs of the negotiators. 

\subsubsection*{Fairness Beliefs}
At the end of the experiment, we administered a short survey about the subjects' fairness perceptions. The survey questions were based on \citet{BLIC1995} and asked subjects: \textit{``According to your opinion, from the vantage point of a non-involved neutral arbitrator, what would be a `fair' share of the business that should go to the entrepreneur?''} This question was asked separately for Common and Preferred Stock contracts. Figure \ref{fig:fairness} summarizes the results by entrepreneur's outside option (PoorEnt vs. RichEnt) and contract (Common vs. Preferred). All measurements are with respect to the entrepreneur's share, with the bars showing negotiation outcomes and the horizontal lines indicating fairness beliefs.\endnote{About three percent of participants in TI  reported fairness beliefs contracts that did not sum up to 100\% (Three participants for Common Stock, five participants for Preferred Stock contracts). Their data is excluded from the analysis of fairness beliefs.}
\begin{figure}[tbh!]
\centering\caption{Fairness vs. Negotiation Outcomes \label{fig:fairness}}
\includegraphics[width=\textwidth]{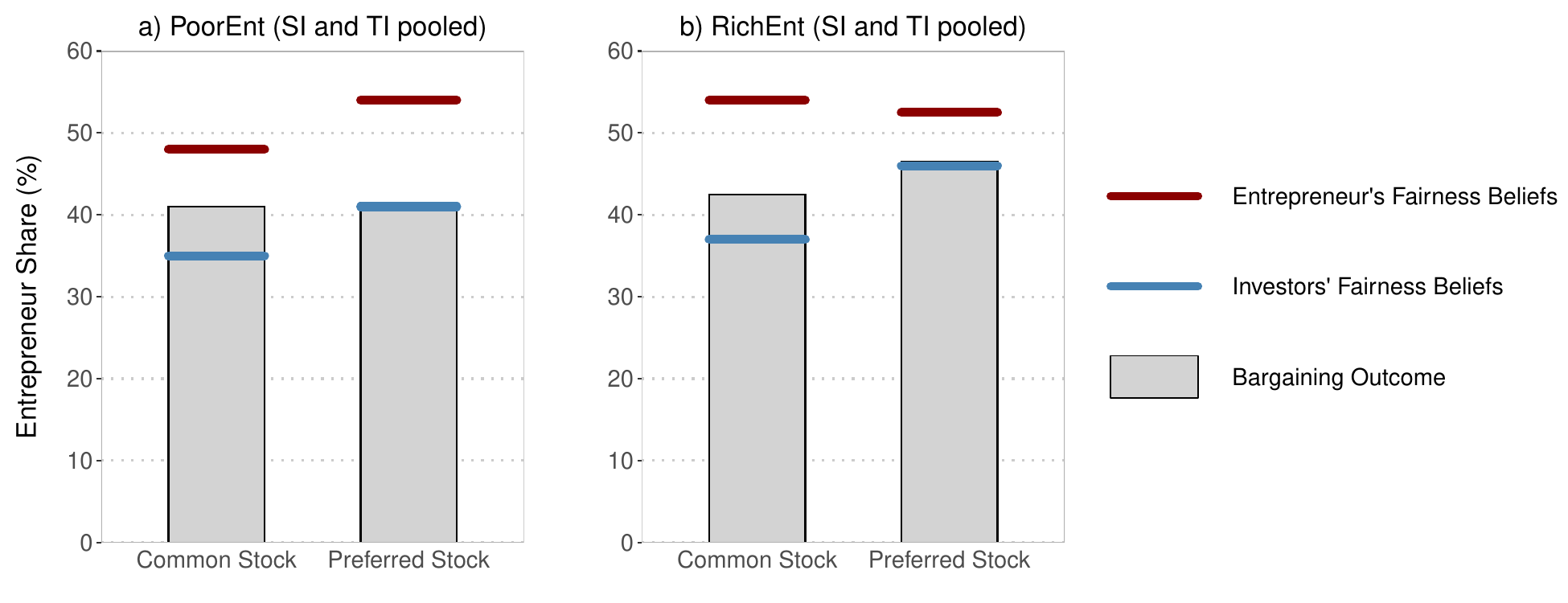}
\end{figure}

Several observations are in order. First, unsurprisingly, in all four cases, the investors' fairness beliefs are substantially below those of the entrepreneurs' (average across all treatments and contracts, 39.10 vs 52.35, $p\ll0.01$).  Second, in all four cases, the final bargaining outcome is closer to the investors' fairness beliefs than to those of the entrepreneur. That is, investors were better able to employ tactics to pull the outcome closer to their perceived fair outcome.  In particular, for entrepreneurs, agreements are between 5.8 and 13.1 points below their fairness beliefs (all $p \ll 0.01$). For investors, agreements were above their fairness beliefs under Common Stock contracts, and close to their fairness beliefs under Preferred Stock contracts (with neither difference being statistically significant). Third, while contract type does not always affect the negotiation outcomes (Table \ref{tab:preferred:summary}), it does significantly shift what the parties consider fair: fairness beliefs for the entrepreneurial share increase significantly as we go from Common to Preferred Stock contracts,  for both the PoorEnt (40.53 vs 46.60, $p\ll0.01$) and the RichEnt case (44.87 vs 48.96, $p=0.048$),  which is consistent with Hypothesis 2.\endnote{The one exception is for entrepreneurs in the RichEnt treatments, where entrepreneurs' fairness beliefs are higher for Common than for Preferred Stock contracts.}

Taken together, the analysis of fairness beliefs suggests that investors know that entrepreneurs \textit{should} receive a larger share under Preferred Stock contracts. Nevertheless, investors bargain aggressively to maximize their own share. While this behavior may appear counterintuitive at first, it is consistent with \cite{cui2016}, who show that institutional norms and game structure (in their case, being a first-mover in a sequential game) can affect fairness perceptions. In our context, the institutional norm of investor downside protection in Preferred Stock may grant more aggressive bargaining despite investors' awareness of the entrepreneur's weaker position.\endnote{Indeed, such behavior was even noted by investors in the exit survey, with one investor reporting, \textit{``It was great as I can freely invest without fear of [losing] my investment. Can take a lot of risk in this scenario.''} Another investor said, \textit{``I tried to be more aggressive. I would get back my money either way if it failed.''}}  This has implications for real-world bargaining. First, contract structure alone does not determine outcomes; contextual factors like entrepreneur resources and investor competition significantly moderate effects. Second, investors may strategically compensate for theoretical disadvantages through more aggressive opening positions when competing with other investors. This suggests that entrepreneurs should carefully consider how to position themselves in multi-investor negotiations, where investors might counterintuitively adopt more extreme positions despite theoretical bargaining disadvantages. In the next section, we will examine alternative ways to model beliefs in three-party negotiations which offer a theory-based explanation and a theory refinement that reconciles observed results.
 
\section{Theory Refinements \label{sec:discussion}}
Our results so far offer mixed support for the Nash-in-Nash model predictions. In this section, we revisit the model and propose several refinements that help reconcile the model with our experimental results. First, in Section  \ref{sec:revised:theory:beliefs} we propose an alternative approach for modeling off-equilibrium beliefs and show that this approach more closely aligns model predictions with data (Result \ref{res:shares}). Second, in Section  \ref{sec:revised:theory:contracts} we propose a risk exposure framework that provides a plausible explanation for the underreaction to contract type when the entrepreneur is ``poor'' (Result \ref{res:common:preferred}).

\subsection{Off-Equilibrium Belief Modeling \label{sec:revised:theory:beliefs}}
Contrary to model predictions, access to multiple investors does not necessarily translate into a negotiation advantage for the entrepreneur (Result 1). To arrive at this prediction, our theoretical development in Section  \ref{sec:theory} made two important assumptions. First, we assumed equal bargaining powers, i.e., $\theta_i=0.5$ across all treatments. This assumption is standard in bargaining experiments \citep[see e.g.,][among others]{AnbarciFeltovich2013,AnbarciFeltovich2018,EHR2020}. However, the presence of the entrepreneurial context may signal to the negotiators which party holds more leverage, potentially leading to unequal bargaining power. Therefore, it is informative to examine a more flexible specification in which bargaining power is estimated from the data rather than imposed ex-ante. 

Relaxing the equal bargaining power assumption by explicitly estimating bargaining power parameters ($\theta_i$) within the Nash-in-Nash framework  substantially improves model fit. Evidence of this improvement is provided in Table \ref{tab:leverage}, where we compare the mean squared errors (MSE) of the original model (equal bargaining power) with Revised model (i), which incorporates estimated bargaining parameters. As shown, MSE substantially decreases in both SI (from 0.055 to 0.040, a 27.78\% reduction) and TI (from 0.102 to 0.066, a 35.29\% reduction). 

In addition to MSE (a measure of model fit), it is also useful to examine the estimated bargaining power parameter $\hat{\theta}_i$. In the SI treatment, estimating bargaining power yields $\hat{\theta}_i = 0.355$ (or, equivalently, $\hat{\theta}_e = 0.645$), indicating relatively high entrepreneur bargaining power. This likely occurs because the bargaining parties often follow a 50-50 norm, i.e., equal splitting of firm ownership. In contrast, theory predictions specify that investors should receive a larger share due to their capital being put at risk, particularly in the PoorEnt case. Further, in the TI treatment, the estimated parameter is $\hat{\theta}_i = 1$, suggesting complete investor dominance. Such extreme estimates likely indicate model misspecification and highlight fundamental limitations in the descriptive validity of classical Nash-in-Nash theory in multiple-investor contexts. Given the random assignment of subject roles, one would expect roughly equal bargaining power across experimental roles. Furthermore, if the theory is correct, there is no reason to expect bargaining power to differ so substantially between the SI and TI treatments.

\begin{table}[bt]
\small
\centering\caption{Model Fit (Common Stock Contracts Treatments Only) \label{tab:leverage}}
\begin{tabular}{L{3.2cm}L{4.4cm}d{2.4}@{}cC{0.1cm}d{2.4}@{}c} \toprule
&& \multicolumn{2}{c}{SI} && \multicolumn{2}{c}{TI}\\  \cmidrule{3-4}\cmidrule{6-7}
Model   & Description  & \multicolumn{1}{c}{MSE} & \multicolumn{1}{c}{$\hat{\theta}_i$} &&
 \multicolumn{1}{c}{MSE} & \multicolumn{1}{c}{$\hat{\theta}_i$} \\  \cmidrule{1-2} \cmidrule{3-4}\cmidrule{6-7}
Original model (Section  \ref{sec:theory}) & Nash-in-Nash, \newline  Equal bargaining powers   & 0.055 & 0.500 && 0.102 &  0.500 \\  \addlinespace
Revised model (i) & Nash-in-Nash, \newline  Best fitting bargaining powers & 0.040 & 0.355       && 0.066 & 1.000\\  \addlinespace
Revised model (ii) & Nash-in-Nash, \newline Best fitting bargaining powers	& \multicolumn{2}{c}{No change}	&&	0.045 & 0.462  \\\bottomrule
\end{tabular}

\begin{minipage}{0.78\textwidth}\scriptsize
  Notes: The investor's bargaining power $\hat{\theta}_i$ is either set to 0.5 (Original predictions) or set to the bargaining power with the best fit for a given treatment (Revised models (i) and (ii)). In SI, $\hat{\theta}_i$ denotes the bargaining power of the single investor. In these treatments, models (i) and (ii) predictions coincide because there is no possibility of partial disagreement, and therefore no belief model. In the TI treatment, $\hat{\theta}_i$ denotes the average bargaining power of the two investors.
\end{minipage}
\end{table}
To explore a potentially more accurate modeling approach, we reconsider the second key assumption in our model: namely, that the agreement between the entrepreneur and investor $i$ is unaffected by whether the entrepreneur reaches agreement with investor $j$. This assumption gives the entrepreneur theoretical leverage in the TI treatment, since the entrepreneur can walk away from one negotiation without adversely affecting the other. Although common in the literature \citep{horn1988bilateral,Yurukoglu}, this assumption may oversimplify actual bargaining dynamics. An alternative assumption about off-the-equilibrium-path beliefs in TI is that disagreement with one investor implies disagreement with both investors. Under this revised assumption, the entrepreneur’s bargaining leverage is diminished, as walking away from one negotiation necessarily terminates the other, reducing the entrepreneur’s ability to pit investors against each other. Incorporating this modified assumption yields “Revised model (ii)” in Table \ref{tab:leverage} (model details and proofs are available in Appendix \ref{RB_sec:appx_alternative}). As shown in the last row of the table, this specification achieves a further reduction in MSE (by 31.82\%) and, importantly, yields more plausible estimates of bargaining power ($\hat{\theta}_i$).\endnote{While we do not report estimates for Preferred Stock contracts, the same general patterns emerge, but the estimates of $\theta_i$ are higher, indicating more investor bargaining power, consistent with our previous results.}

The takeaway from this exercise is that the standard modeling approach of Nash-in-Nash bargaining with equal power and independent bilateral agreements may not adequately capture the complexities of multi-party entrepreneurial negotiations. How parties form beliefs about outcomes in alternative negotiation scenarios is an important consideration, particularly in settings where agreements are not independent. Our revised model that incorporates more realistic off-equilibrium beliefs offers a better fit to the experimental data and provides more plausible estimates of bargaining power.

\subsection{Risk Exposure and Contracts \label{sec:revised:theory:contracts}}
Our second result was that the downside protection for investors (afforded by ``Preferred Stock'' contracts) was only reflected in the equity split when the entrepreneur was ``rich.'' In contrast, we did not see a fair reflection of contracts in the negotiated terms when the entrepreneur was ``poor.''

Why do poor entrepreneurs struggle to see a fair reflection of investor liquidation preferences in their contractual terms? One possibility, not currently considered in our theoretical development, is that fairness norms depend on how each party's payoffs compare to their outside option. Specifically, we can distinguish between being in the ``gain domain'' (payoff always at or above outside option) versus the ``gain or loss domain'' (payoff may fall below outside option). The gain/loss domain positions for each treatment are summarized in Table~\ref{tab:risk:exposure}.

Notably, the entrepreneur's share is significantly higher only in the RichEnt + Preferred condition (46.6\%) compared to the other three conditions (41.1\%, 41.2\%, 42.5\%), which are not significantly different from each other. This might suggest that investor concessions require two conditions to be met: (1) the entrepreneur must have skin in the game, meaning they can lose relative to their outside option, and (2) the investor must be protected from downside risk. Only RichEnt + Preferred satisfies both conditions. In PoorEnt, entrepreneurs have no outside option ($d_e = 0$) and thus always remain in the gain domain regardless of contract type, failing condition 1. In RichEnt + Common, the investor also faces potential losses, failing condition 2. The fairness beliefs reported in Section 5.2 support this interpretation: investors' stated fair shares are closer across contract types in PoorEnt than in RichEnt, and closer across outside option conditions under Common Stock than under Preferred Stock. That is, investors perceive it to be fair to concede equity only when both conditions are met.

These findings have implications for entrepreneurs negotiating with investors. Early-stage entrepreneurs with limited outside options may not benefit from offering investor protections, as the resulting risk transfer is unlikely to be reflected in more favorable equity terms. In contrast, more established entrepreneurs with viable alternatives may find that offering Preferred Stock allows them to retain a larger share of the venture. Future analytical work could further examine these negotiation dynamics by incorporating reference-dependent preferences into the Nash-in-Nash framework.\endnote{See \cite{shalev2002} and \cite{karagozouglu2018} for similar refinements in simpler, bilateral bargaining models.}

\begin{table}[bt!]
\small
  \centering \caption{\label{tab:risk:exposure}Risk Exposure Framework}
  \begin{tabular}{cccccc} \toprule
  & & Common Stock & & Preferred Stock \\  \midrule
  \multirow{6}{*}[0.15em]{\parbox{1.1in}{\centering Entrepreneur's \newline Outside Option}}
  & \multirow{2}{*}[0.5em]{$d_\text{e}=0$ (PoorEnt)}
    & \fbox{\parbox{1.7in}{\centering Entrepreneur: Gain \\ Investor(s): Gain or Loss}}
    & $\approx$
    & \fbox{\parbox{1.7in}{\centering Entrepreneur: Gain \\ Investor(s): Gain}} \\[0.15cm]
  & & $\rotatebox{-90}{$\approx$}$ & & $\rotatebox{-90}{$<$}$\raisebox{-0.3ex}{\scriptsize **}
 \\[0.45cm]
  & \multirow{2}{*}[0.5em]{$d_\text{e}>0$ (RichEnt)}
    & \fbox{\parbox{1.7in}{\centering Entrepreneur: Gain or Loss \\ Investor(s): Gain or Loss}}
    & $<$\textsuperscript{\scriptsize **}
    & \fbox{\parbox{1.7in}{\centering Entrepreneur: Gain or Loss \\ Investor(s): Gain}} \\
  \bottomrule
 \end{tabular}
\vspace{0.1cm}

 \begin{minipage}{0.99\textwidth}\scriptsize
 Note: ``Gain'' means the party's payoff is always at or above their outside option. ``Gain or Loss'' means the party's payoff may fall below their outside option depending on the startup outcome. Inequality signs indicate pairwise comparisons of entrepreneur's share between conditions; $\approx$ indicates no significant difference. Significance levels based on analysis reported in Section~5.1. \textsuperscript{**}~$p<0.05$.
 \end{minipage}
\end{table}

\section{Conclusions}
Equity negotiations are an essential part of entrepreneurial growth. In this paper, we explicitly modeled the key features of entrepreneur-investor bargaining, which includes the intrinsic value of the startup, uncertain valuation, multiple investors and the contract structures common in the industry. We used the Nash-in-Nash framework to uncover the theoretical sources of leverage and to develop hypotheses regarding the split of shares between the entrepreneur and the investors. We then conducted (virtual) lab experiments to test whether the leverage that is available in theory is exploitable in practice.

\subsection{Summary of Results and Contributions}
Our investigation offers several insights into equity bargaining. Consistent with theory, entrepreneurs benefit from having a stronger outside option. Although this result is not surprising, it is still worth emphasizing. Going beyond this simple insight, our theory also predicts that entrepreneurs should be able to benefit from negotiating with multiple investors and that they should receive a higher share when offering investors downside protection via Preferred Stock contracts. Neither of these predictions was fully supported by our data. Instead, we saw that multiple investors strongly dilute the entrepreneur's stake in the company. Further, we saw  that Preferred Stock contracts can increase the entrepreneur's share when the entrepreneur has a positive outside option in negotiations. However, Preferred Stock contracts can be harmful for the entrepreneur when the entrepreneur has no outside options.

Our results contribute to the bargaining literature by examining negotiation behavior and fairness norms in settings with uncertainty, multilateral negotiations, and risk-contingent division rules.  To reconcile our results with theory, we propose several refinements to the classic Nash-in-Nash model, as well as suggest a novel framework for bargaining under risk. Methodologically, the full-circle approach of analyzing a standard model, testing it in the lab, and then revising its key assumptions, as we have done in this paper, can serve as a template for future research in this area. 

Our results also contribute to the entrepreneurship literature. Observational data on contract structure and entrepreneurial exits \citep[][and references therein]{gornall2020,ewens2022} suggest that equity contract terms significantly affect venture performance and entrepreneurial profitability. In particular, \cite{ewens2022} show that excessive liquidation preferences tend to reduce both entrepreneur profits and overall venture valuation, even after accounting for the quality of entrepreneurs and investors. Our experimental findings in Section \ref{sec:preferred} suggest a plausible causal mechanism for this result. We show that bargaining dynamics, especially the limited negotiating leverage of entrepreneurs without viable outside options, constrain entrepreneurs’ ability to secure a favorable share of equity. Over time, perceived unfairness from such inequitable bargaining outcomes may reduce entrepreneurial motivation, ultimately leading to lower venture performance and diminished total value creation.

\subsection{Practical Implications}
Our findings inform entrepreneurs seeking equity funding. The conventional wisdom that ``the more investors, the better'', while consistent with classical bargaining models, is not uniformly true according to our data. The effects of having access to multiple investors can depend on factors such the entrepreneur's outside options (i.e., the intrinsic value of the firm) as well as on contract. Therefore, entrepreneurs should carefully consider their individual circumstances when developing negotiation strategies and proposing terms. More mature startups are particularly well-positioned to leverage their outside options with a single than with multiple investors. Further, while early-stage entrepreneurs should be cautious about downside protection in the term sheet, more mature startups will generally see a fairer reflection of their startup value in the contractual terms. 

\subsection{Limitations and Future Work\label{sec:discussion:extensions}}
Our investigation does not consider several bargaining features that may play a role in negotiations. First, in our experiment, bargaining outcomes are public once bargaining is completed, but the bilateral offer exchange is private. It may be interesting to explore behavior in a setting where the offer exchanges with the other investor can be observed by other prospective investors. This may also be more reflective of entrepreneurial pitch competitions and other large events where offers to invest can be made publicly and observed by others. Second, our experiments do not examine the matching process between entrepreneurs and investors \citep{bengtsson2010,ewens2022}. Future studies may be able to extend our setting by including a matching market, in which a heterogeneous set of investors is endogenously matched with a heterogeneous set of founders \citep[See, for example,][for similar matching experiments in the supply chain context]{leider2016}. Other interesting extensions include richer negotiation settings where investors receive some control rights in addition to equity, as well as settings with informational asymmetries about the value of the startup, or settings with a collaborative stage (in addition to the negotiation stage), where the parties invest costly effort into the venture.

\theendnotes

\begin{APPENDIX}{Additional Analysis and Results}
\counterwithin{table}{section}
\renewcommand{\thetable}{\thesection.\arabic{table}}

\section{Robustness Regressions \label{sec:RobustnessRegression}}
In this section we replicate our analysis for PoorEnt treatment (Table \ref{tab:poor:ent:summary}) using a series of random effects regressions.  Further, age and gender do not have significant effects on outcomes. Participants with entrepreneurial experience enter fewer agreements. There are also some marginally significant effects of the sequence of contracts and the presence of practice rounds (all at $p<0.1$). Despite these differences, the regression results show that the main treatment effect of TI vs. SI persists (in size and statistical significance) after including the controls.

\begin{table}[htbp]
\centering \small
\caption{Replication of Table \ref{tab:poor:ent:summary} (Using Regressions with Control Variables) \label{tab:robustness:regressions}}
{
\def\sym#1{\ifmmode^{#1}\else\(^{#1}\)\fi}
\begin{tabular}{l*{4}{cc}}
\toprule

                &\multicolumn{2}{c}{Efficient Outcome}&\multicolumn{2}{c}{Disagreement}&\multicolumn{2}{c}{Ent Share}&\multicolumn{2}{c}{Ent Exp Profit}\\
\midrule
TI              &   -0.101         &  (0.072)&   -0.110\sym{***}&  (0.025)&  -13.591\sym{**} &  (5.307)&  -81.549\sym{**} & (31.843)\\
Age             &   -0.004         &  (0.004)&    0.003         &  (0.003)&    0.118         &  (0.361)&    0.711         &  (2.163)\\
Female          &    0.045         &  (0.035)&   -0.012         &  (0.018)&    1.538         &  (2.371)&    9.227         & (14.226)\\
Ent. Experience &   -0.153\sym{***}&  (0.036)&    0.136\sym{***}&  (0.033)&    0.751         &  (2.076)&    4.505         & (12.458)\\
CommonFirst-Practice&   -0.033         &  (0.048)&    0.035         &  (0.051)&   -8.846\sym{*}  &  (4.974)&  -53.078\sym{*}  & (29.846)\\
PreferredFirst-NoPractice&   -0.058\sym{*}  &  (0.033)&    0.063\sym{*}  &  (0.037)&   -0.433         &  (5.049)&   -2.596         & (30.297)\\
Constant        &    0.960\sym{***}&  (0.105)&    0.043         &  (0.075)&   43.001\sym{***}& (10.752)&  258.006\sym{***}& (64.510)\\
\midrule
Observations    &      634         &         &      634         &         &      240         &         &      240         &         \\
$R^2$              &    0.043         &         &    0.069         &         &    0.121         &         &    0.121         &         \\
\bottomrule
\end{tabular}
}
\vspace{0.1cm} 

\begin{minipage}{\textwidth}\scriptsize
Note: Standard errors accounting for clustering at the session level in parentheses, * p$<$0.10, ** p$<$0.05, *** p$<$0.01. The third and fourth columns restrict attention to the case in which an efficient agreement was reached.
\end{minipage}
\end{table}

\section{TI-L: Two Investors With Large Endowments. Experimental Design and Results\label{sec:TI-Alt}}

This section presents the results for TI-L -- the alternative regime with two investors, where the overall endowment of the two investors exceeds the maximum investment amount. In particular, we examine the case where the sum of the two endowments is 400, i.e., twice the amount of the maximum investment of 200. A total of 96 participants were recruited for this treatment. A total of eight sessions were conducted (four sessions for PoorEnt, four sessions for RichEnt). The protocols, student pool, and other design details remained unchanged relative to the original SI and TI treatments.

Table \ref{tab:TI-ALT:Outcomes} provides a summary of the bargaining results for the TI-L treatments for both the PoorEnt and RichEnt cases. As before, we report two metrics: first, the frequency of each type of agreement, whether efficient or inefficient; second, the average entrepreneur share for each agreement type. Not surprisingly, full agreement is less common in the TI-L treatments than those reported in the main text. This is because investors can invest partial amounts. At the same time, the TI-L treatment has the advantage that complete disagreement is exceedingly rare (0.83\% in PoorEnt and 3.83\% in RichEnt). The entrepreneur is almost always able to secure at least a partial investment. Finally, of particular interest is that exclusionary agreements where one investor invests the full 200 units of capital are very common, occurring 46.80\% of the time in PoorEnt and 31.75\% of the time in RichEnt. 

\begin{table}[t]
    \centering \small
    \caption{Bargaining Outcomes in TI-L}\label{tab:TI-ALT:Outcomes}
    \begin{tabular}{L{3.5cm}lrrrr}
\toprule
    &       & \multicolumn{2}{c}{\textbf{PoorEnt}} & \multicolumn{2}{c}{\textbf{RichEnt}} \\
    \cmidrule(lr){3-4} \cmidrule(lr){5-6}
    &       & Frequency & Entrepreneur & Frequency & Entrpreneur \\
    &       & (\%)  & Share (\%) & (\%) & Share (\%) \\
\midrule
\textbf{Efficient Agreement} & \textbf{All} & \textbf{72.22}$~~$ & \textbf{50.02}$~~$ & \textbf{53.77}$~~$ & \textbf{42.54}$~~$ \\
$~~$Exclusionary & $~~(200,0)$ & 46.80 & 55.66 & 31.75 & 51.55 \\
$~~$Asymmetric & $~~(150,50)$ & 16.65 & 35.63 & 11.85 & 26.25 \\
$~~$Symmetric & $~~(100,100)$ & 8.77 & 40.24 & 10.17 & 35.50 \\
    &       &       &       &       &  \\
\textbf{Inefficient} & \textbf{All} & \textbf{27.78}$~~$ & \textbf{50.93}$~~$ & \textbf{46.23}$~~$ & \textbf{49.34}$~~$ \\
$~~$Partial & $~~0 < I_T < 200$ & 26.95 & 50.93 & 42.40 & 49.34 \\
$~~$Disagreement & $~~I_T = 0$ & 0.83 & --- & 3.83 & --- \\
\bottomrule
\end{tabular}
\end{table}
Consider next the final share negotiated by the entrepreneur. Focusing on efficient agreements, we see that entrepreneurs receive an approximately 50\% share in PoorEnt and 42.54\% in RichEnt. In the PoorEnt case, this is statistically significantly higher than the shares in both SI and TI (SI vs. TI-L: $p = 0.054$; TI vs. TI-L: $p = 0.010$). In the RichEnt case, the entrepreneur receives a larger share than in TI ($p = 0.051$) but a smaller share than in SI ($p = 0.098$). In fact, in both PoorEnt and RichEnt, the entrepreneur's share is especially high for exclusionary agreements, though the difference is only statistically significant when comparing with SI in the PoorEnt case. (Recall that exclusionary efficient agreements were not possible in TI, making such a comparison impossible.) Looking at symmetric cases where each investor invests 100 units of capital, the entrepreneur's share is higher under not statistically significantly different than in the corresponding TI cases (PoorEnt: $p=0.460$); RichEnt: $p=0.807$). In sum, there may be clear advantages to negotiating with two substitutable entrepreneurs relative to either a single investor or two complementary investors. This is especially so if the entrepreneur can negotiate an efficient and exclusionary agreement with just one investor. 

In addition to comparing outcomes, we also examined bargaining behavior. Poor entrepreneurs strongly favored exclusionary offers (52\% requested the full 200 investment), while rich entrepreneurs shifted toward smaller investments (only 25\% requested 200; the difference between PoorEnt and RichEnt is significant at $p=0.011$).  TI-L exhibited strong anchoring effects similar to other treatments -- more aggressive opening offers reduced agreement likelihood but improved outcomes conditional on agreement. Critically, entrepreneurs wanting large investments faced a trade-off: higher investment requests led to giving up more equity, but the net effect on expected profit still favored aggressive maximum-amount strategies where entrepreneurs could play the two investors against each other.
\end{APPENDIX}

\newpage
\clearpage

\ECSwitch


\ECHead{Supplementary Materials (Electronic Companion)}


\setcounter{lemma}{0}
	\renewcommand{\thelemma}{EC\arabic{lemma}}

	\setcounter{proposition}{0}
	\renewcommand{\theproposition}{EC\arabic{proposition}}

	\setcounter{corollary}{0}
	\renewcommand{\thecorollary}{EC\arabic{corollary}}
	
	\setcounter{equation}{0}
	\renewcommand{\theequation}{EC-\arabic{equation}}
	
	\setcounter{table}{0}
	\renewcommand{\thetable}{EC\arabic{table}}
	
		\setcounter{figure}{0}
	\renewcommand{\thefigure}{EC\arabic{figure}}
	
	\setcounter{section}{0}
	\renewcommand{\thesection}{EC.\arabic{section}}
	
	\setcounter{subsection}{0}
	\renewcommand{\thesubsection}{EC.\arabic{section}.\arabic{subsection}}

\section{Theory\label{sec:appx}}
We begin by presenting theoretical predictions for Common Stock contracts. The analysis of Preferred Stock contracts follows. The notation used in our theoretical exposition is listed in Table \ref{tab:notation}.

\begin{table}[b!]
\centering
\footnotesize
\caption{Notation and Definitions}
\label{tab:notation}
\begin{tabular}{p{0.22\textwidth} p{0.73\textwidth}}
\toprule
\multicolumn{2}{l}{\textbf{Model Parameters}} \\
\midrule
$\mathcal{I}$ 
  & Set of all potential investors. \\
  $\mathcal{I}' \subseteq \mathcal{I}$ 
  & Set of investors who actually invest. \\
$t \in \{\mathrm{SI}, \mathrm{TI}, \mathrm{TI-L}\}$ 
  & Bargaining institution: Single Investor (SI), Two Investors (TI), or Two Large Investors (TI-L). \\
$e$ 
  & Total capital requirement. \\
$\bar{I}^t$ 
  & Endowment of each investor under institution $t$. \\
$I^t = [0,\bar{I}^t]$ 
  & Feasible set of investments for an investor under $t$. \\
$p$ 
  & Probability that the startup succeeds. \\
$\alpha$ 
  & Multiplier that can realize as $\alpha_H$ or $\alpha_L$ indicating high/low states, with $\alpha_L < \alpha_H$. \\
$\mu_\alpha$ 
  & Expected multiplier, $\mu_\alpha = \mathbb{E}[\alpha] = p\,\alpha_H + (1-p)\,\alpha_L$. \\
  $V = \alpha \sum_{i \in \mathcal{I}'} I_i^t$ 
  & Realized value of the startup. \\
$d_e \ge 0$ 
  & Entrepreneur's outside option. \\
  $d_i$ 
  & Investor~$i$'s outside option. \\
$\theta_i \in [0,1]$ 
  & Investor $i$'s relative bargaining power. \\
\midrule
\multicolumn{2}{l}{\textbf{Decision Variables (Common Stock)}} \\
\midrule
$I_i^t \in I^t$ 
  & Investment chosen by investor $i$ under institution $t$. \\
$s_i^t \in [0,1]$ 
  & Equity share allocated to investor $i$ under institution $t$. \\
\midrule
\multicolumn{2}{l}{\textbf{Decision Variables (Preferred Stock)}} \\
\midrule
$\tilde{I}_i^t \in I^t$ 
  & Investment chosen by investor $i$ (with downside protection). \\
$\tilde{s}_i^t \in [0,1]$ 
  & Equity share allocated to investor $i$ (with downside protection). \\
\midrule
\multicolumn{2}{l}{\textbf{Additional Notation}} \\
\midrule
$\pi_i(\bm{I}, \bm{s})$ 
  & Expected payoff of investor $i$. \\
$\pi_e(\bm{I}, \bm{s})$ 
  & Expected payoff of the entrepreneur. \\
\bottomrule
\end{tabular}
\end{table}

We present the results and the proofs with the general bargaining powers of the investor(s) relative to the entrepreneur. Specifically, let $\theta_0 \in (0,1)$ denote the bargaining power of the single investor relative to the entrepreneur (i.e., the entrepreneur's bargaining power is $1 - \theta_0$) in the SI setting. Let $\theta_i\in(0,1)$, $i\in\{1,2\}$, denote the bargaining power of Investor~$i$ relative to the entrepreneur  (i.e., the entrepreneur's bargaining power is $ 1 - \theta_i$) in the TI setting. To obtain the results when the investor(s) have equal bargaining power relative to the entrepreneur, we set $\theta_i=\nicefrac{1}{2}$, $i\in\{s,1,2\}$. Recall that $\alpha$ follows a two-point distribution: $\alpha_H>1$ w.p. $p\in(0,1)$ and $\alpha_L\leq 1$ w.p. $1-p$.
\begin{assumption}\label{assumption1}
Assume that the expected investment multiplier ${\mu_\alpha}=\alpha_H p +\alpha_L (1-p) \geq 2$.
\end{assumption} 
In the analysis, if the bargaining unit between the entrepreneur and Investor~$i$ is indifferent among multiple investment levels in equilibrium, we assume that the largest investment level is made. 

\subsection{Common Stock Contracts}\label{sec:no_protection_appendix}
We consider the setting of Common Stock contracts in this section.

\subsubsection{The Single Investor Model \label{sec:single:investor_appx}}

The investment $I_0$ and the share $s_0$ maximize the following Nash product:
\begin{align}\label{eq:single_1}
\max_{I_0\in[0, e],~ s_0\in[0, 1]}~&\left[\pi_0(I_0, s_0)-d_0\right]^{\theta_0}\left[\pi_\text{e}(I_0, s_0)-d_\text{e}\right]^{1-\theta_0}\\
&\pi_0(I_0, s_0)\geq d_0,~\pi_\text{e}(I_0, s_0)\geq d_\text{e}.\nonumber
\end{align}

The following proposition is Proposition~\ref{prop:single} under general bargaining powers.
\begin{proposition}[Single investor]\label{prop:single_appx}
The investor invests $I_0^{SI}=e$. The share of the investor is
\begin{align*}
&s_0^{SI}=\frac{({{\mu_\alpha}} -1)\theta_0+1}{{\mu_\alpha}}-\frac{\theta_0 d_\text{e}}{e {\mu_\alpha}}.
\end{align*}
The corresponding entrepreneur's share is
\begin{align*}
s_\text{e}^{SI}=1-s_0^{SI}=\frac{({\mu_\alpha} -1)(1-\theta_0)}{ {\mu_\alpha}}+\frac{\theta_0 d_\text{e}}{e {\mu_\alpha}}.
\end{align*}
\end{proposition}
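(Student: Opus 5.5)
We solve the generalized Nash product \eqref{eq:single_1} in two stages: first fix $I_0$ and optimize over $s_0$, then optimize the resulting value over $I_0$. Write $\Delta_0=\pi_0-d_0={\mu_\alpha} I_0 s_0 - I_0$ and $\Delta_\text{e}=\pi_\text{e}-d_\text{e}={\mu_\alpha} I_0(1-s_0)-d_\text{e}$. The key observation is that $\Delta_0+\Delta_\text{e}=({\mu_\alpha}-1)I_0-d_\text{e}$ does not depend on $s_0$. Changing $s_0$ therefore only transfers surplus between the parties at a one-for-one rate (scaled by ${\mu_\alpha} I_0$).

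For fixed $I_0>0$ with $S(I_0):=({\mu_\alpha}-1)I_0-d_\text{e}>0$, maximizing $\Delta_0^{\theta_0}\Delta_\text{e}^{1-\theta_0}$ over $s_0$ is a standard Cobb--Douglas split of a fixed pie. The objective is strictly log-concave on the interior of the feasible set. The first-order condition $\theta_0/\Delta_0=(1-\theta_0)/\Delta_\text{e}$ gives $\Delta_0=\theta_0 S(I_0)$ and $\Delta_\text{e}=(1-\theta_0)S(I_0)$. The maximized Nash product is then $\theta_0^{\theta_0}(1-\theta_0)^{1-\theta_0}S(I_0)$. This is increasing in $I_0$ because ${\mu_\alpha}>1$ (indeed ${\mu_\alpha}\ge 2$ by Assumption~\ref{assumption1}), so the optimum is $I_0^{SI}=e$.

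At $I_0=e$ we solve ${\mu_\alpha} e s_0-e=\theta_0\big(({\mu_\alpha}-1)e-d_\text{e}\big)$, which gives
\begin{align*}
s_0^{SI}=\frac{({\mu_\alpha}-1)\theta_0+1}{{\mu_\alpha}}-\frac{\theta_0 d_\text{e}}{e{\mu_\alpha}},
\end{align*}
and $s_\text{e}^{SI}=1-s_0^{SI}$ matches the stated expression.

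The main obstacle is feasibility. We need $S(e)>0$, i.e.\ $d_\text{e}<({\mu_\alpha}-1)e$, which holds under the maintained assumption that agreement creates gains for both sides. We also need $s_0^{SI}\in[0,1]$. This follows because $s_0^{SI}$ is a convex combination: it equals $(1-\theta_0)\cdot\frac{1}{{\mu_\alpha}}+\theta_0\cdot\frac{{\mu_\alpha} e-d_\text{e}}{{\mu_\alpha} e}$, and both endpoints lie in $[0,1]$ when $0\le d_\text{e}\le({\mu_\alpha}-1)e$. Finally, we rule out $I_0$ values with $S(I_0)\le 0$ (the product there is nonpositive or infeasible), which confirms that $I_0=e$ is the global maximizer. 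Under the paper's tie-breaking rule it is also the selected one.
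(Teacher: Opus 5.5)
Your proof is correct and follows essentially the same route as the paper. The paper likewise writes the surplus-splitting conditions $\pi_0-d_0=\theta_0(\pi_\text{e}+\pi_0-d_\text{e}-d_0)$, sets $I_0^{SI}=e$ by maximizing the joint surplus, and then solves for $s_0$. Your version is somewhat more careful in three ways, all of which the paper leaves implicit:
\begin{itemize}
\item You justify why maximizing the Nash product over $I_0$ reduces to maximizing the joint surplus, since the maximized product equals $\theta_0^{\theta_0}(1-\theta_0)^{1-\theta_0}S(I_0)$.
\item You note that ${\mu_\alpha}>1$ suffices, rather than the paper's ${\mu_\alpha}>2$.
\item You check feasibility, namely $d_\text{e}<({\mu_\alpha}-1)e$ and $s_0^{SI}\in[0,1]$.
\end{itemize}
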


\noindent \textbf{Proof of Proposition~\ref{prop:single_appx}.} Recall that $d_0=e$. We also have that the expected profit of the entrepreneur is $$\pi_\text{e}(I_0, s_0)={\mu_\alpha} I_0  (1-s_0);$$
the expected profit of investor~$s$ is $$\pi_0(I_0, s_0)={\mu_\alpha} I_0  s_0+ e -I_0.$$

Solving the problem \eqref{eq:single_1} above, we have that,
\begin{align}\label{eq:single_2}
&\pi_0(I_0, s_0)-d_0=\theta_0\left(\pi_\text{e}(I_0, s_0)+\pi_0(I_0, s_0)-d_\text{e}-d_0 \right); \\
&\pi_\text{e}(I_0, s_0)-d_\text{e}=(1-\theta_0)\left(\pi_\text{e}(I_0, s_0)+\pi_0(I_0, s_0)-d_\text{e}-d_0 \right).\nonumber
\end{align}
Recall that  ${\mu_\alpha}>2$, and we have that
\begin{align*}
&I_0^{SI}=\arg\max_{I_0\in[0, e]} \left\{\pi_\text{e}(I_0, s_0)+\pi_0(I_0, s_0)-d_\text{e}-d_0\right\}= e.
\end{align*}
By Eq.~\eqref{eq:single_2}, we have that
\begin{align*}
&s_0^{SI}=\frac{({\mu_\alpha}-1)\theta_0+1}{{\mu_\alpha}}-\frac{\theta_0 d_\text{e}}{e {\mu_\alpha}}.
\end{align*}
\hfill $\blacksquare$

\subsubsection{The Two Investors Model \label{sec:simultaneous_appx}}

The investments $I_i$ and the share $s_i$ maximize the following Nash product simultaneously:
\begin{align}\label{eq:simultaneous_1}
\max_{I_i\in[0, \nicefrac{e}{2}],~ s_i\in[0, 1]}~&\left[\pi_i(\bm{I}, \bm{s})-d_i\right]^{\theta_i}\left[\pi_\text{e}(\bm{I}, \bm{s})-d_\text{e}^{-i}\right]^{1-\theta_i}\\
&\pi_i(\bm{I}, \bm{s})\geq d_i,~\pi_\text{e}(\bm{I}, \bm{s})\geq d_\text{e}^{-i}.\nonumber
\end{align}

The following proposition is Proposition~\ref{prop:sim} under general bargaining powers.
\begin{proposition}[Two investors]~  \label{prop:sim_appx} There are two types of equilibrium bargaining outcomes: both investors investing and only one investor investing. 
\begin{itemize}
\item In the both-investor-investing equilibrium, the investors invest the endowed capital; i.e., $I_i^{TI}=\nicefrac{e}{2}$ for $i\in\{1,2\}$. The equilibrium share of investor $i$ is
\begin{align}\label{eq:simultaneous_7}
&s_i^{TI}=\frac{(3-2 {\mu_\alpha})(2-\theta_i)}{{\mu_\alpha}(4-\theta_1\theta_2)}+\frac{{\mu_\alpha}-1}{ {\mu_\alpha}}-\frac{d_\text{e}(2\theta_i-\theta_1\theta_2)}{e {\mu_\alpha}(4-\theta_1\theta_2)}.
\end{align}
\item If ${\mu_\alpha}<\frac{2-\theta_i}{1-\theta_i}-\frac{2 d_\text{e} \theta_i}{ e (1-\theta_i)}$, the one-investor-investing equilibrium exists. The equilibrium investment level $I_i^{TI}=\nicefrac{e}{2}$ and $I_j^{TI}=0$ for $i,j\in\{1,2\}$ and $i\neq j$, and the equilibrium share of Investor~$i$ is
 \begin{align*}
&s_i^{TI}=\frac{({\mu_\alpha}-1)\theta_i+1}{{\mu_\alpha}}-\frac{2 d_\text{e} \theta_i}{e {\mu_\alpha}}.
\end{align*}

\end{itemize}
\end{proposition}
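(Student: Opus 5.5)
The plan is to treat each bilateral Nash problem in \eqref{eq:simultaneous_1} as a transferable-utility problem given the other unit's outcome, exactly as in the proof of Proposition~\ref{prop:single_appx}. Both $\pi_i$ and $\pi_\text{e}$ are linear in $s_i$, with opposite coefficients $\pm{\mu_\alpha}(I_1+I_2)$. So for a fixed investment profile, the first-order condition in $s_i$ yields the surplus-splitting rule
\[
\pi_i-d_i=\theta_i\bigl(\pi_i+\pi_\text{e}-d_i-d_\text{e}^{-i}\bigr),
\]
while $I_i$ is chosen to maximize the joint unit surplus $\pi_i+\pi_\text{e}-d_i-d_\text{e}^{-i}$, taking $(I_j,s_j)$ as given. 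Following the passive-beliefs assumption of Section~\ref{sec:theory:common}, I take $d_i=\nicefrac{e}{2}$. I take $d_\text{e}^{-i}$ to be the entrepreneur's payoff when only the agreement $(I_j,s_j)$ is in place. For $I_j=\nicefrac{e}{2}$ this is ${\mu_\alpha}\frac{e}{2}(1-s_j)+\frac{d_\text{e}}{2}$, using the pro-rated outside option; this is the specification that reproduces the stated $d_\text{e}$ terms. For $I_j=0$ it is $d_\text{e}$.

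\emph{Both-investing equilibrium.} With $I_j=\nicefrac{e}{2}$, the joint surplus of unit $i$ as a function of $I_i$ is
\[
{\mu_\alpha}\Bigl(I_i+\tfrac{e}{2}\Bigr)(1-s_j)-{\mu_\alpha}\tfrac{e}{2}(1-s_j)-I_i-\tfrac{d_\text{e}}{2}.
\]
This is linear in $I_i$ with slope ${\mu_\alpha}(1-s_j)-1$. I will check that this slope is positive at the candidate $s_j$, using Assumption~\ref{assumption1}; ties are broken toward the largest investment. This gives $I_i=\nicefrac{e}{2}$. Substituting $I_1=I_2=\nicefrac{e}{2}$, the joint surplus becomes
\[
G_i=\tfrac{e}{2}\bigl({\mu_\alpha}(1-s_j)-1\bigr)-\tfrac{d_\text{e}}{2},
\]
and the splitting rule reads ${\mu_\alpha} e s_i-\nicefrac{e}{2}=\theta_i G_i$. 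This gives two linear equations,
\[
{\mu_\alpha} s_i+\tfrac{\theta_i{\mu_\alpha}}{2}s_j=\tfrac12+\tfrac{\theta_i({\mu_\alpha}-1)}{2}-\tfrac{\theta_i d_\text{e}}{2e},\qquad i\neq j.
\]
Solving by Cramer's rule (determinant ${\mu_\alpha}^2(4-\theta_1\theta_2)/4$) and rearranging should produce \eqref{eq:simultaneous_7}. As a sanity check, setting $\theta_i=\nicefrac12$ recovers Proposition~\ref{prop:sim}. Finally, I verify the participation constraints $\pi_i\ge d_i$ and $\pi_\text{e}\ge d_\text{e}^{-i}$, which hold as long as $G_i\ge0$.

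\emph{One-investor equilibrium.} Suppose $I_j=0$. Then unit $i$ is a bilateral problem with entrepreneur disagreement point $d_\text{e}$, pie ${\mu_\alpha} I_i$, and investor outside option $\nicefrac{e}{2}$. Repeating the argument of Proposition~\ref{prop:single_appx}, with $e$ replaced by $\nicefrac{e}{2}$ for investment and the full $d_\text{e}$ lost on agreement, gives $I_i=\nicefrac{e}{2}$ and the stated $s_i$. The term $2d_\text{e}\theta_i/(e{\mu_\alpha})$ arises because $d_\text{e}$ is divided by the investment $\nicefrac{e}{2}$ rather than $e$. It remains to show that unit $j$, taking $(I_i,s_i)$ as given, prefers $I_j=0$. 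By the largest-investment tie-breaking rule, this requires the joint surplus from $I_j=\nicefrac{e}{2}$ to be strictly negative. That surplus equals
\[
\tfrac{e}{2}\bigl({\mu_\alpha}(1-s_i)-1\bigr)-\tfrac{d_\text{e}}{2}.
\]
Substituting the one-investor $s_i$, so that ${\mu_\alpha}(1-s_i)={\mu_\alpha}-1-\theta_i({\mu_\alpha}-1)+\frac{2d_\text{e}\theta_i}{e}$, I then rearrange the strict inequality into ${\mu_\alpha}<\frac{2-\theta_i}{1-\theta_i}-\frac{2d_\text{e}\theta_i}{e(1-\theta_i)}$.

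The main obstacle is bookkeeping the off-path disagreement points consistently, in particular the pro-rating of $d_\text{e}$ and which payoff applies when $I_j=0$ versus $\nicefrac{e}{2}$. I expect the exact form of the existence condition to hinge on this. The positivity-of-slope check in the both-investing case also needs care, since it has to hold for all $\theta$ under Assumption~\ref{assumption1}.
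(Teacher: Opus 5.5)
Your both-investing argument is the paper's argument. It uses the same surplus-splitting rule, the same joint surplus linear in $I_i$ with slope ${\mu_\alpha}(1-s_j)-1$, and the same pair of linear equations, and Cramer's rule does return the stated $s_i^{TI}$. The slope check is immediate, because at the candidate
\[
{\mu_\alpha}(1-s_j)-1=\frac{(2{\mu_\alpha}-3)(2-\theta_j)}{4-\theta_1\theta_2}+\frac{d_\text{e}(2\theta_j-\theta_1\theta_2)}{e(4-\theta_1\theta_2)}>0
\]
whenever ${\mu_\alpha}\ge 2$. Be aware, though, that your participation check $G_i\ge 0$ is equivalent to ${\mu_\alpha}\ge\frac32+\frac{d_\text{e}}{e}$. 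Assumption~\ref{assumption1} does not deliver this for large $d_\text{e}$. The paper's proof never checks participation, so this is an implicit restriction on $d_\text{e}$ (it holds at the experimental parameters), not something you can derive.

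The genuine error is the final rearrangement in the one-investor case. Substituting ${\mu_\alpha}(1-s_i)=({\mu_\alpha}-1)(1-\theta_i)+\frac{2d_\text{e}\theta_i}{e}$ into $\frac{e}{2}\bigl({\mu_\alpha}(1-s_i)-1\bigr)-\frac{d_\text{e}}{2}<0$ gives
\[
{\mu_\alpha}<\frac{2-\theta_i}{1-\theta_i}+\frac{d_\text{e}(1-2\theta_i)}{e(1-\theta_i)}.
\]
Its right-hand side exceeds the stated threshold by $\frac{d_\text{e}}{e(1-\theta_i)}$, so the two agree only when $d_\text{e}=0$.

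The mismatch comes from changing criteria midway. In the both-investing case you (like the paper) choose the investment by the sign of the slope of the joint-surplus expression, where the constant $-\frac{d_\text{e}}{2}$ plays no role. For unit $j$'s deviation, however, you test the level of the surplus, which keeps that constant. The paper applies the slope rule throughout: unit $j$'s best response is $I_j=0$ iff ${\mu_\alpha}(1-s_i)<1$. This rearranges exactly to ${\mu_\alpha}<\frac{2-\theta_i}{1-\theta_i}-\frac{2d_\text{e}\theta_i}{e(1-\theta_i)}$.

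You have two ways to repair this:
\begin{itemize}
\item Use the paper's slope rule, which makes the equivalence you claimed true.
\item Keep your level test and replace ``rearrange into'' by ``is implied by.'' The stated threshold is the smaller one, so the proposition's hypothesis still yields your strict inequality, and that is all an existence (``if'') claim requires.
\end{itemize}
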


\noindent{\bf Proof of Proposition~\ref{prop:sim_appx}.} Recall that the expected profit of the entrepreneur is
\begin{align}\label{eq:profit-ent}
\pi_\text{e}(\bm{I}, \bm{s})= {\mu_\alpha} (I_1+I_2)  (1-s_1-s_2),
\end{align}
and the expected profit of Investor~$i$ is
\begin{align}\label{eq:profit_inv}
\pi_i(\bm{I}, \bm{s})= {\mu_\alpha} (I_1+I_2)  s_i+ \nicefrac{e}{2} -I_i.
\end{align}
The disagreement point of the entrepreneur when negotiating with Investor~$1$ is $$d_\text{e}^{-1}=\pi_\text{e}(0, I_2, 0, s_2)=\frac{d_\text{e} (e-I_2)}{e}+{\mu_\alpha} I_2  (1-s_2),$$ which is the sum of the prorated outside option and the profit of the entrepreneur when Investor~$2$ is the only investor. Similarly, the disagreement point of the entrepreneur when negotiating with Investor~$2$ is $$d_\text{e}^{-2}=\pi_\text{e}(I_1, 0, s_1, 0)=\frac{d_\text{e} (e-I_1)}{e}+{\mu_\alpha} I_1  (1-s_1).$$  The disagreement point of Investor~$i$ is $d_i=\nicefrac{e}{2}$ since the investor has $\nicefrac{e}{2}$ units of capital as the endowment.

We first solve the bargaining problem between the entrepreneur and Investor~$1$ as specified in \eqref{eq:simultaneous_1}. Following the similar analysis as in the proof of Proposition~\ref{prop:single_appx}, we have that
\begin{align}\label{eq:simultaneous_2}
&\pi_1(\bm{I}, \bm{s})-d_1= \theta_1 \left( \pi_1(\bm{I}, \bm{s})+\pi_\text{e}(\bm{I}, \bm{s})-d_1-d_\text{e}^{-1}\right);\\
&\pi_\text{e}(\bm{I}, \bm{s})-d_\text{e}^{-1}= (1-\theta_1) \left( \pi_1(\bm{I}, \bm{s})+\pi_\text{e}(\bm{I}, \bm{s})-d_1-d_\text{e}^{-1}\right).\nonumber
\end{align}
Note that the best-response investment level
\begin{align}\label{eq:simultaneous_3}
&I_1(I_2, s_2)=\arg\max_{I_1\in[0, e/2]} \left\{\pi_1(\bm{I}, \bm{s})+\pi_\text{e}(\bm{I}, \bm{s})-d_1-d_\text{e}^{-1}\right\}=\left\{
\begin{array}{ll}
 \nicefrac{e}{2}&\text{ if } {\mu_\alpha} (1-s_2)\geq 1;\\[1ex]
 0&\text{ otherwise}.\\
\end{array}\right.
\end{align}
By Eq.~\eqref{eq:simultaneous_2}, the best-response share for Investor~$1$ is
\begin{align}\label{eq:simultaneous_4}
s_1(I_2, s_2)=\left\{
\begin{array}{ll}
\frac{\theta_1\left[{\mu_\alpha} e^2  (1-s_2)-e^2 - 2 d_\text{e}(e - I_2)\right]+e^2}{{\mu_\alpha}e (e+ 2 I_2)}.&\text{ if } {\mu_\alpha} (1-s_2)\geq 1;\\[1ex]
 0&\text{ otherwise}.\\
\end{array}\right.
\end{align}

Similarly, we have that the best-response investment level and share for Investor~$2$ are
\begin{align}\label{eq:simultaneous_5}
&I_2(I_1, s_1)=\left\{
\begin{array}{ll}
 \nicefrac{e}{2}&\quad\quad\quad\quad\quad\quad\quad\text{ if } {\mu_\alpha} (1-s_1)\geq1;\\[1ex]
 0&\quad\quad\quad\quad\quad\quad\quad\text{ otherwise};\\[1ex]
\end{array}\right.\\\label{eq:simultaneous_6}
&s_2(I_1, s_1)=\left\{
\begin{array}{ll}
 \frac{\theta_2\left[ {\mu_\alpha} e^2  (1-s_1)-e^2 - 2 d_\text{e} (e-I_1)\right]+e^2}{{\mu_\alpha}e(e+ 2 I_1)}&\text{ if }{\mu_\alpha} (1-s_1)\geq 1;\\[1ex]
 0&\text{ otherwise}.\\
\end{array}\right.
\end{align}
Solving the system of the best-response functions Eqs.~\eqref{eq:simultaneous_3} through \eqref{eq:simultaneous_6}, we have that if ${\mu_\alpha}\geq \max \bigg\{\frac{3}{2}-\frac{d_\text{e}(2\theta_1-\theta_1\theta_2)}{ 2 e (2-\theta_1)},~\frac{3}{2}-\frac{d_\text{e}(2\theta_2-\theta_1\theta_2)}{ 2 e (2-\theta_2)}\bigg\}$  (note that this condition is satisfied by Assumption~\ref{assumption1}), there exists an equilibrium in which both investors invest $I_i^{TI}=\nicefrac{e}{2}$ with the share for Investor~$i$ as
\begin{align*}
s_i^{TI}=\frac{(3-2{\mu_\alpha})(2-\theta_i)}{{\mu_\alpha} (4-\theta_1 \theta_2)}+\frac{{\mu_\alpha}-1}{{\mu_\alpha}}-\frac{d_\text{e}(2\theta_i-\theta_1\theta_2)}{ e {\mu_\alpha}(4-\theta_1\theta_2)}.
\end{align*}
Similarly, we have that, if ${\mu_\alpha}<\frac{2-\theta_i}{1-\theta_i}-\frac{2 d_\text{e} \theta_i}{ e (1-\theta_i)}$, there exists an equilibrium in which Investor~$i$ is the only investor with the investment level $I_i^{TI}=\nicefrac{e}{2}$ in equilibrium and the share for Investor~$i$ is
\begin{align*}
s_i^{TI}=
\frac{\left({\mu_\alpha} -1\right)\theta_i+1}{{\mu_\alpha}}-\frac{2 d_\text{e} \theta_i}{ e {\mu_\alpha}}.
\end{align*}
\hfill $\blacksquare$

\subsubsection{The Two Large Investors Model  
}\label{sec:TIL_App}

Since the startup only needs $e$ units of capital at this stage, at most one investor can invest all the endowment. Therefore, there is an additional constraint that $I_1+I_2\leq e$ imposed on the bargaining problem. Then, the investments $\bm{I}$ and the shares $\bm{s}$ maximize the following Nash product simultaneously:
\begin{align}
\begin{split}
\max_{I_i\in[0, e],~ s_i\in[0, 1]}~&\left[\pi_i(\bm{I}, \bm{s})-d_i\right]^{\theta_i}\left[\pi_\text{e}(\bm{I}, \bm{s})-d_\text{e}^{-i}\right]^{1-\theta_i}\\
&\pi_i(\bm{I}, \bm{s})\geq d_i,~\pi_\text{e}(\bm{I}, \bm{s})\geq d_\text{e}^{-i}, ~i \in \{1,2\},\\
&I_1+I_2\leq e.
\end{split}
\end{align}

Solving the problem, we have the following proposition.
\begin{proposition}[Two large investors]~ \label{prop:sim_appx_rich_investor}
 Consider $(I_1^{TI-L}, I_2^{TI-L}, s_1^{TI-L}, s_2^{TI-L})$ such that  $I_1^{TI-L}+I_2^{TI-L}=e$, $I_i^{TI-L}\geq 0$, $i=1,2$, and
\begin{align}\label{eq:simultaneous_rich_investor}
&s_1^{TI-L}=\frac{I_1^{TI-L} \bigg( I_1^{TI-L}   e \theta_1  ( 1- \theta_2 + {\mu_\alpha}   \theta_2 )-e^2 \Big( \theta_1  \big( 2 - {\mu_\alpha}  (1 - \theta_2)- \theta_2 \big)-1\Big) -d_e\Big(\theta_1(1-\theta_2) e +I_1^{TI-L} \theta_1 \theta_2\Big)\bigg)}{ {\mu_\alpha} e ( e^2 - I_1^{TI-L}  I_2^{TI-L}  \theta_1   \theta_2  )},\nonumber\\[1ex]
&s_2^{TI-L}=\frac{I_2^{TI-L} \bigg( I_2^{TI-L}  e \theta_2  (1 - \theta_1 + {\mu_\alpha}   \theta_1 ) - e^2 \Big( \theta_2  \big( 2 - {\mu_\alpha}  (1 - \theta_1)- \theta_1 \big)-1\Big) -d_e\Big(\theta_2(1-\theta_1) e +I_2^{TI-L} \theta_1 \theta_2\Big) \bigg)}{ {\mu_\alpha} e  (  e^2 -I_1^{TI-L}  I_2^{TI-L}  \theta_1   \theta_2  )};
\end{align}
if ${\mu_\alpha}\geq \max_{i=1,2} \bigg\{\frac{e^3+e^2 I_i^{TI-L} - 2 e^2 I_i^{TI-L}\theta_i+ e(I_i^{TI-L})^2 \theta_i - d_e \big(e I_i^{TI-L}  (\theta_i-\theta_1\theta_2)+(I_i^{TI-L})^2 \theta_1\theta_2\big)}{e^3 - e^2 I_i^{TI-L} \theta_i}\bigg\}$, then $(I_1^{TI-L}, I_2^{TI-L}, s_1^{TI-L}, s_2^{TI-L})$ is an equilibrium bargaining outcome with Investor~$i$ investing $I_i^{TI-L}$ for the share of $s_i^{TI-L}$.
\begin{itemize}
\item If ${\mu_\alpha}<1+\frac{1}{1-\theta_1}-\frac{d_e \theta_1}{e(1-\theta_1)}$, there exists an equilibrium with the investment: $I_1^{TI-L}=e$ and  $I_2^{TI-L}=0$, and the share of investor~$i$: $s_1^{TI-L}=\frac{e+ e ({\mu_\alpha} -1)  \theta_1 -d_e \theta_1}{ e {\mu_\alpha}  }$ and  $s_2^{TI-L}=0$. 
\item If ${\mu_\alpha}<1+\frac{1}{1-\theta_2}-\frac{d_e \theta_2}{e(1-\theta_2)}$, there exists an equilibrium with the investment: $I_1^{TI-L}=0$ and  $I_2^{TI-L}=e$, and the share of investor~$i$: $s_1^{TI-L}=0$ and $s_2^{TI-L}=\frac{e+ e({\mu_\alpha}  -1)\theta_2 -d_e \theta_2 }{ e{\mu_\alpha}  }.$
\end{itemize}
\end{proposition}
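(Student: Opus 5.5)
The plan is to mirror the proof of Proposition~\ref{prop:sim_appx}, with two changes. Each investor now has endowment $e$, so $d_i=e$. The coupling constraint $I_1+I_2\le e$ replaces the individual caps $I_i\le e/2$. I would first write down the profits: the entrepreneur earns $\pi_\text{e}(\bm I,\bm s)={\mu_\alpha}(I_1+I_2)(1-s_1-s_2)$, and Investor~$i$ earns $\pi_i(\bm I,\bm s)={\mu_\alpha}(I_1+I_2)s_i+e-I_i$. Under the maintained belief that the agreement with Investor~$j$ is unchanged if unit $i$ breaks down, the entrepreneur's disagreement point against $i$ (with the outside option prorated) is
$$d_\text{e}^{-i}=\frac{d_\text{e}(e-I_j)}{e}+{\mu_\alpha} I_j(1-s_j).$$
As in \eqref{eq:simultaneous_2}, the generalized Nash product for unit $i$ is maximized over $s_i$ by splitting the unit's joint surplus in proportions $\theta_i:(1-\theta_i)$. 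That is,
$$\pi_i-d_i=\theta_i\left(\pi_i+\pi_\text{e}-d_i-d_\text{e}^{-i}\right),$$
and because $s_i$ is a pure transfer inside the unit, the investment level is chosen to maximize the joint surplus.

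\emph{Interior (both-investing) equilibrium, shares.} Fix $(I_1,I_2)$ with $I_1+I_2=e$. Substituting the profits into the surplus-splitting equation gives one equation per unit. Each is linear in $(s_1,s_2)$, because $s_j$ enters both through $\pi_\text{e}$ and through $d_\text{e}^{-i}$. The two equations form a $2\times 2$ linear system. I expect its determinant to be proportional to $e^2-I_1I_2\theta_1\theta_2$: the cross term arises from $s_j$ appearing with coefficient ${\mu_\alpha} I_j$ in $d_\text{e}^{-i}$ and with coefficient ${\mu_\alpha} e$ in $\pi_\text{e}$, scaled by $\theta_i$. Solving by Cramer's rule and simplifying using $I_1+I_2=e$ should reproduce \eqref{eq:simultaneous_rich_investor}. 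A useful check is the symmetric case $I_1=I_2=e/2$ with $\theta_i=\tfrac12$, which should collapse to the TI formula of Proposition~\ref{prop:sim_appx}.

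\emph{Interior equilibrium, investments.} Hold $(I_j,s_j)$ fixed. Then $d_\text{e}^{-i}$ does not depend on $I_i$, and unit $i$'s joint surplus is linear in $I_i$ with slope ${\mu_\alpha}(1-s_j)-1$. The best response is therefore the largest feasible level, $I_i=e-I_j$, whenever ${\mu_\alpha}(1-s_j)\ge 1$; this is consistent with the tie-breaking assumption that the largest investment is made. Plugging the closed form for $s_j$ into ${\mu_\alpha}(1-s_j)\ge1$ and solving for ${\mu_\alpha}$ should yield exactly the stated lower bound on ${\mu_\alpha}$, taken as a maximum over $i$. This step is routine but algebra-heavy.

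\emph{Corner equilibria.} Take $I_1=e$ and $I_2=0$, so that $s_2=0$. Then $d_\text{e}^{-1}=d_\text{e}$, and unit 1 faces exactly the single-investor problem of Proposition~\ref{prop:single_appx}. This gives
$$s_1=\frac{e+e({\mu_\alpha}-1)\theta_1-d_\text{e}\theta_1}{e{\mu_\alpha}}.$$
Stability then requires that unit 2 not gain from investing, i.e.\ from displacing part of Investor~1's capital. By the slope argument above, this holds iff ${\mu_\alpha}(1-s_1)<1$. Substituting gives
$$(\mu_\alpha-1)(1-\theta_1)+\frac{d_\text{e}\theta_1}{e}<1,$$
which rearranges to the stated condition ${\mu_\alpha}<1+\frac{1}{1-\theta_1}-\frac{d_\text{e}\theta_1}{e(1-\theta_1)}$. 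The case $I_1=0$, $I_2=e$ follows by symmetry.

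The main obstacle I expect is handling unilateral deviations under the joint constraint $I_1+I_2\le e$ consistently. In a Nash-in-Nash equilibrium, a unit's deviation must respect feasibility given the other unit's fixed agreement. So I need to specify what a deviation by unit 2 means when $I_1=e$: I would interpret it as unit 2 proposing a positive $I_2$ with Investor~1's agreement rescaled or displaced, and I would verify that the slope criterion ${\mu_\alpha}(1-s_1)$ is the right test under that interpretation. If that interpretation fails, I would fall back on the relaxed problem in which unit $i$ optimizes over $I_i\in[0,e]$ and the constraint is imposed only on the equilibrium profile. I would then confirm that the stated conditions still match the resulting profitability inequalities.
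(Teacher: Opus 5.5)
Your proposal is correct and follows the paper's proof. The shared steps are: the same profits and prorated disagreement points, the surplus-splitting conditions, and the constrained best responses $I_i=e-I_j$ iff ${\mu_\alpha}(1-s_j)\ge 1$ with shares $s_i=\frac{\theta_i[{\mu_\alpha} e(1-s_j)-e-d_e]+e}{{\mu_\alpha}e^2}(e-I_j)$. These form a linear system whose fixed point gives the stated shares; its determinant is proportional to $e^2-I_1I_2\theta_1\theta_2$, as you anticipate. The ${\mu_\alpha}$ lower bound is then read off ${\mu_\alpha}(1-s_j)\ge 1$, and the corner conditions off ${\mu_\alpha}(1-s_1)<1$.

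The paper does not resolve the corner-deviation subtlety you flag either. In its constrained best-response system, unit 2's investment at $I_1=e$ is $0$ on both branches, so the stated condition just says the ``invest'' branch is inactive. Your slope-criterion reading therefore reproduces exactly what the paper states.
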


\noindent{\bf Proof of Proposition~\ref{prop:sim_appx_rich_investor}.} Recall that the expected profit of the entrepreneur is
\begin{align*}
\pi_\text{e}(\bm{I}, \bm{s})= {\mu_\alpha} (I_1+I_2)  (1-s_1-s_2),
\end{align*}
and the expected profit of Investor~$i$ is
\begin{align*}
\pi_i(\bm{I}, \bm{s})= {\mu_\alpha} (I_1+I_2)  s_i+ e -I_i.
\end{align*}
The disagreement point of the entrepreneur when negotiating with Investor~$1$ is $$d_\text{e}^{-1}=\pi_\text{e}(0, I_2, 0, s_2)=\frac{d_\text{e}(e-I_2)}{e}+{\mu_\alpha} I_2  (1-s_2),$$ which is the sum of the prorated outside option and the profit of the entrepreneur when Investor~$2$ is the only investor. Similarly, the disagreement point of the entrepreneur when negotiating with Investor~$2$ is $$d_\text{e}^{-2}=\pi_\text{e}(I_1, 0, s_1, 0)=\frac{d_\text{e}(e-I_1)}{e}+{\mu_\alpha} I_1  (1-s_1).$$  The disagreement point of Investor~$i$ is $d_i=e$ since the investor has $e$ units of capital as the endowment.

We first solve the bargaining problem between the entrepreneur and Investor~1. 
Following the similar analysis as in the proof of Proposition~\ref{prop:single_appx}, we have that
\begin{align}\label{eq:simultaneous_2_rich_investor}
&\pi_1(\bm{I}, \bm{s})-d_1= \theta_1 \left( \pi_1(\bm{I}, \bm{s})+\pi_\text{e}(\bm{I}, \bm{s})-d_1-d_\text{e}^{-1}\right);\\
&\pi_\text{e}(\bm{I}, \bm{s})-d_\text{e}^{-1}= (1-\theta_1) \left( \pi_1(\bm{I}, \bm{s})+\pi_\text{e}(\bm{I}, \bm{s})-d_1-d_\text{e}^{-1}\right).\nonumber
\end{align}
Note that the best-response investment level
\begin{align}\label{eq:simultaneous_3_rich_investor}
&I_1(I_2, s_2)=\arg\max_{I_1\in[0, e], I_1+I_2\leq e} \left\{\pi_1(\bm{I}, \bm{s})+\pi_\text{e}(\bm{I}, \bm{s})-d_1-d_\text{e}^{-1}\right\}=\left\{
\begin{array}{ll}
 e - I_2&\text{ if } {\mu_\alpha} (1-s_2)\geq 1;\\[1ex]
 0&\text{ otherwise}.\\
\end{array}\right.
\end{align}
By Eq.~\eqref{eq:simultaneous_2_rich_investor}, the best-response share for Investor~$1$ is
\begin{align}\label{eq:simultaneous_4_rich_investor}
s_1(I_2, s_2)=\left\{
\begin{array}{ll}
\frac{\theta_1\left[{\mu_\alpha}  e (1-s_2)-e -d_e \right]+e}{{\mu_\alpha}e^2}(e-I_2).&\text{ if } {\mu_\alpha} (1-s_2)\geq 1;\\[1ex]
 0&\text{ otherwise}.\\
\end{array}\right.
\end{align}
Similarly, we have that the best-response investment level and share for Investor~$2$ are
\begin{align}\label{eq:simultaneous_5_rich_investor}
&I_2(I_1, s_1)=\left\{
\begin{array}{ll}
 e-I_1&\quad\quad\quad\quad\quad\quad\quad\text{ if } {\mu_\alpha} (1-s_1)\geq1;\\[1ex]
 0&\quad\quad\quad\quad\quad\quad\quad\text{ otherwise};\\[1ex]
\end{array}\right.\\\label{eq:simultaneous_6_rich_investor}
&s_2(I_1, s_1)=\left\{
\begin{array}{ll}
 \frac{\theta_2\left[ {\mu_\alpha} e   (1-s_1)-e - d_e \right]+e}{{\mu_\alpha} e^2 }(e-I_1)&\text{ if }{\mu_\alpha} (1-s_1)\geq 1;\\[1ex]
 0&\text{ otherwise}.\\
\end{array}\right.
\end{align}
Solving the system of the best-response functions Eqs.~\eqref{eq:simultaneous_3_rich_investor} through \eqref{eq:simultaneous_6_rich_investor}, we have that for $(I_1^{TI-L}, I_2^{TI-L}, s_1^{TI-L}, s_2^{TI-L})$ such that  $I_1^{TI-L}+I_2^{TI-L}=e$, $I_i^{TI-L}\geq 0$, $i=1,2$, and
\begin{align}\label{eq:simultaneous_rich_investor}
&s_1^{TI-L}=\frac{I_1^{TI-L} \bigg( I_1^{TI-L}   e \theta_1  ( 1- \theta_2 + {\mu_\alpha}   \theta_2 )-e^2 \Big( \theta_1  \big( 2 - {\mu_\alpha}  (1 - \theta_2)- \theta_2 \big)-1\Big) -d_e\Big(\theta_1(1-\theta_2) e +I_1^{TI-L} \theta_1 \theta_2\Big)\bigg)}{ {\mu_\alpha} e ( e^2 - I_1^{TI-L}  I_2^{TI-L}  \theta_1   \theta_2  )},\nonumber\\[1ex]
&s_2^{TI-L}=\frac{I_2^{TI-L} \bigg( I_2^{TI-L}  e \theta_2  (1 - \theta_1 + {\mu_\alpha}   \theta_1 ) - e^2 \Big( \theta_2  \big( 2 - {\mu_\alpha}  (1 - \theta_1)- \theta_1 \big)-1\Big) -d_e\Big(\theta_2(1-\theta_1) e +I_2^{TI-L} \theta_1 \theta_2\Big) \bigg)}{ {\mu_\alpha} e  (  e^2 -I_1^{TI-L}  I_2^{TI-L}  \theta_1   \theta_2  )};
\end{align}
if ${\mu_\alpha}\geq \max_{i=1,2} \bigg\{\frac{e^3+e^2 I_i^{TI-L} - 2 e^2 I_i^{TI-L}\theta_i+ e(I_i^{TI-L})^2 \theta_i - d_e \big(e I_i^{TI-L}  (\theta_i-\theta_1\theta_2)+(I_i^{TI-L})^2 \theta_1\theta_2\big)}{e^3 - e^2 I_i^{TI-L} \theta_i}\bigg\}$, then $(I_1^{TI-L}, I_2^{TI-L}, s_1^{TI-L}, s_2^{TI-L})$ is an equilibrium bargaining outcome with Investor~$i$ investing $I_i^{TI-L}$ for the share of $s_i^{TI-L}$.

If ${\mu_\alpha}<1+\frac{1}{1-\theta_1}-\frac{d_e \theta_1}{e(1-\theta_1)}$, there exists an equilibrium with the investment: $I_1^{TI-L}=e$ and  $I_2^{TI-L}=0$, and the share of investor~$i$: $s_1^{TI-L}=\frac{e+ e ({\mu_\alpha} -1)  \theta_1 -d_e \theta_1}{ e {\mu_\alpha}  }$ and  $s_2^{TI-L}=0$. 

If ${\mu_\alpha}<1+\frac{1}{1-\theta_2}-\frac{d_e \theta_2}{e(1-\theta_2)}$, there exists an equilibrium with the investment: $I_1^{TI-L}=0$ and  $I_2^{TI-L}=e$, and the share of investor~$i$: $s_1^{TI-L}=0$ and $s_2^{TI-L}=\frac{e+ e({\mu_\alpha}  -1)\theta_2 -d_e \theta_2 }{ e{\mu_\alpha}  }.$
\hfill $\blacksquare$

\medskip

\subsection{Preferred Stock Contracts}\label{sec:protection_appendix}

We consider the setting with Preferred Stock contracts and $\alpha_L=1$. In this case, Assumption~\ref{assumption1} reduces to $\alpha_H p +(1-p)\geq 2$. It follows that $\alpha_H\geq 1+
\frac{1}{p}>2$.

\subsubsection{The Single Investor Model \label{sec:single:investor_appx_protection}}

The investment $I_0$ and the share $s_0$ maximize the following Nash product with $\pi_\text{e}(I_0, s_0)=E\left[\min\{\alpha (1-s_0),\alpha-1\} \right] I_0$ and $\pi_0(I_0,  s_0)=E\left[\max\{\alpha s_0, 1\} \right]I_0+e- I_0$:
\begin{align}\label{eq:single_1_protection}
\max_{I_0\in[0, e],~ s_0\in[0, 1]}~&\left[\pi_0(I_0, s_0)-d_0\right]^{\theta_0}\left[\pi_\text{e}(I_0, s_0)-d_\text{e}\right]^{1-\theta_0}\\
&\pi_0(I_0, s_0)\geq d_0,~\pi_\text{e}(I_0, s_0)\geq d_\text{e}.\nonumber
\end{align}

The following proposition is Proposition~\ref{prop:single_protection} under general bargaining powers.
\begin{proposition}[Single investor]\label{prop:single_appx_protection}
The investor invests $\tilde{I}_0^{SI}=e$. The share of the investor is
\begin{align*}
&\tilde{s}_0^{SI}=\frac{\theta_0 (\alpha_H  - 1 ) + 1}{\alpha_H }-\frac{\theta_0 d_\text{e}}{e \alpha_H  p}.
\end{align*}
The corresponding entrepreneur's share is
\begin{align*}
\tilde{s}_\text{e}^{SI}=1-\tilde{s}_0^{SI}=\frac{(\alpha_H  - 1)(1-\theta_0)}{ \alpha_H }+\frac{\theta_0 d_\text{e}}{e \alpha_H  p}.
\end{align*}
\end{proposition}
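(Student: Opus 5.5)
The proof follows the template of Proposition~\ref{prop:single_appx}, with the piecewise payoffs of the Preferred Stock contract made explicit. I first guess that at the solution the investor's share in the high state covers the investment, i.e.\ $\alpha_H s_0\ge 1$, and check this at the end. In the low state $\alpha_L=1$, so $\max\{s_0,1\}=1$ and the investor receives exactly $I_0$ while the entrepreneur receives $\min\{1-s_0,0\}I_0=0$. Under the guess the expected profits are therefore linear in $s_0$:
\begin{align*}
\pi_0(I_0,s_0)&=\big[p\,\alpha_H s_0+(1-p)\big]I_0+e-I_0,\\
\pi_\text{e}(I_0,s_0)&=p\,\alpha_H(1-s_0)\,I_0 .
\end{align*}
The joint surplus is
\[
\pi_0+\pi_\text{e}-d_0-d_\text{e}=\big(p\alpha_H+(1-p)-1\big)I_0-d_\text{e}=(\mu_\alpha-1)I_0-d_\text{e},
\]
which does not depend on $s_0$. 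The share $s_0$ is thus a pure transfer between the two parties.

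The argument then proceeds in three steps.

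\textbf{Step 1: investment level.} Since the share only moves surplus between the parties, the generalized Nash product is maximized by first maximizing joint surplus over $I_0$ and then splitting it. Because $\mu_\alpha\ge 2>1$ by Assumption~\ref{assumption1}, joint surplus is increasing in $I_0$, so $\tilde I_0^{SI}=e$.

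\textbf{Step 2: the split.} The first-order conditions of \eqref{eq:single_1_protection} give the analogue of \eqref{eq:single_2}: the entrepreneur's gain equals $(1-\theta_0)$ times the joint surplus,
\[
p\,\alpha_H(1-s_0)\,e-d_\text{e}=(1-\theta_0)\big[(\mu_\alpha-1)e-d_\text{e}\big].
\]
With $\alpha_L=1$ we have $\mu_\alpha-1=p(\alpha_H-1)$. Dividing by $p\,\alpha_H e$ gives
\[
1-s_0=\frac{(1-\theta_0)(\alpha_H-1)}{\alpha_H}+\frac{\theta_0 d_\text{e}}{e\alpha_H p},
\]
which is exactly the stated $\tilde s_\text{e}^{SI}$. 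The formula for $\tilde s_0^{SI}$ then follows from $\tilde s_0^{SI}=1-\tilde s_\text{e}^{SI}$.

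\textbf{Step 3: verifying the guess.} This is the main obstacle: I must show that $\alpha_H\tilde s_0^{SI}\ge 1$ and that both participation constraints hold. Computing directly,
\[
\alpha_H\tilde s_0^{SI}=1+\theta_0\Big(\alpha_H-1-\frac{d_\text{e}}{ep}\Big).
\]
This is at least $1$ exactly when $d_\text{e}\le (\mu_\alpha-1)e$, which is the condition that joint surplus is nonnegative, i.e.\ that gains from trade exist. I will invoke this condition as the standing assumption that agreement is mutually beneficial, since the RichEnt setting assumes such values of $d_\text{e}$. The participation constraints also follow from nonnegative joint surplus together with $\theta_0\in(0,1)$.

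The remaining concern is the other branch, $\alpha_H s_0<1$. There the investor is paid only the liquidation preference in the high state, and the entrepreneur receives $p(\alpha_H-1)I_0$ regardless of $s_0$. Payoffs on this branch are constant in $s_0$, so the investor's payoff can only rise by moving into the region $\alpha_H s_0\ge 1$. A direct comparison should show that the Nash product on this branch is weakly dominated by its value at the interior solution, which rules this branch out.
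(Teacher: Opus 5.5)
Your proposal is correct and follows essentially the same route as the paper: restrict to $\alpha_H s_0 \ge 1$, so that payoffs are linear in $s_0$ and utility is transferable; obtain $\tilde I_0^{SI}=e$ by maximizing joint surplus; and read the share off the bargaining-power split of that surplus. Your explicit check that $\alpha_H \tilde s_0^{SI}\ge 1$ holds exactly when $d_\text{e}\le(\mu_\alpha-1)e$ makes precise what the paper dismisses in one line, and the excluded branch is even easier than you suggest, since there $\pi_0=e=d_0$ and so the Nash product is identically zero.
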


\noindent \textbf{Proof of Proposition~\ref{prop:single_appx_protection}.} Recall that $d_0=e$. Also, we have that $\alpha_H> 2$. We focus on the scenario where $s_0\geq 1/\alpha_H$. Otherwise, the investor does not have incentives to invest. Thus, the expected profit of the entrepreneur should be
\begin{align*}
\pi_\text{e}(I_0, s_0)&=E\left[\min\{\alpha (1-s_0),\alpha-1\} \right] I_0 =\alpha_H   (1-s_0) I_0 p,
\end{align*}
and the expected profit of investor~$s$ should be
\begin{align*}
\pi_0(I_0,  s_0)&=E\left[\max\{\alpha s_0, 1\} \right]I_0+e- I_0 =\alpha_H   s_0 I_0 p + e -I_0 p .
\end{align*}
Solving the problem \eqref{eq:single_1_protection} above, we have that,
\begin{align}\label{eq:single_2_protection}
&\pi_0(I_0, s_0)-d_0=\theta_0\left(\pi_\text{e}(I_0, s_0)+\pi_0(I_0, s_0)-d_\text{e}-d_0 \right); \\
&\pi_\text{e}(I_0, s_0)-d_\text{e}=(1-\theta_0)\left(\pi_\text{e}(I_0, s_0)+\pi_0(I_0, s_0)-d_\text{e}-d_0 \right).\nonumber
\end{align}
Recall that  $\alpha_H >2$, and we have that
\begin{align*}
&\tilde{I}_0^{SI}=\arg\max_{I_0\in[0, 2e]} \left\{\pi_\text{e}(I_0, s_0)+\pi_0(I_0, s_0)-d_\text{e}-d_0\right\}= e.
\end{align*}
By Eq.~\eqref{eq:single_2_protection}, we have that
\begin{align*}
\tilde{s}_0^{SI}=
\frac{\theta_0 (\alpha_H  - 1 ) + 1}{\alpha_H }-\frac{\theta_0 d_\text{e}}{e \alpha_H  p}.
\end{align*}
\hfill $\blacksquare$

\subsubsection{The Two Investors Model \label{sec:simultaneous_appx_protection}}
In this case, both investors simultaneously negotiate with the entrepreneur. The bargaining outcome is a pair of the share $s_i$ for Investor~$i$ in return for the investment $I_i$. With the downside protection for the investors, when the return of the startup is realized as $\alpha_L=1$, the investors are able to recover their investment. That is, in addition to the negotiated $\alpha_L s_i$, Investor~$i$ is able to recover his potential loss $\alpha_L(1-s_i)$ from the entrepreneur. Thus, both investors obtain their investment back and the entrepreneur earns zero. When the return of the startup is realized as $\alpha_H$, the protection for the investors is invoked only if one investor negotiated for a share that is significantly low. In such an event, the investor who invoked the protection will first be compensated by the profit of the entrepreneur, and then by the profit of the other investor (if the other investor invests as well).

Therefore, the expected profit of the entrepreneur is
\begin{align}\label{eq:profit-ent_protection}
\pi_\text{e}(\bm{I}, \bm{s})
&=\alpha_H(I_1+I_2)p-\sum_{i=1}^2\max\bigg\{\alpha_H (I_1+I_2)s_i-\bigg(I_{3-i}-\alpha_H (I_1+I_2)(1-s_i)\bigg)^+, I_i\bigg\} p ,
\end{align}
and the expected profit of Investor~$i$ is
\begin{align}\label{eq:profit_inv_protection}
\pi_i(\bm{I}, \bm{s})
&=\max\bigg\{\alpha_H (I_1+I_2)s_i-\bigg(I_{3-i}-\alpha_H (I_1+I_2)(1-s_i)\bigg)^+, I_i\bigg\} p + I_i(1-p) + \frac{e}{2} -I_i\nonumber\\
&=\max\bigg\{\alpha_H (I_1+I_2)s_i-\bigg(I_{3-i}-\alpha_H (I_1+I_2)(1-s_i)\bigg)^+, I_i\bigg\} p + \frac{e}{2} -I_i p.
\end{align}
The disagreement point of the entrepreneur when negotiating with Investor~$1$ is
$$d_\text{e}^{-1}=\pi_\text{e}(0, I_2, 0, s_2)=\frac{d_\text{e}(e-I_2)}{e}+\alpha_H I_2 p - \max\bigg\{\alpha_H I_2  s_2, I_2\bigg\} p,$$
which is the sum of the prorated outside option and the profit of the entrepreneur when Investor~$2$ is the only investor. Similarly, the disagreement point of the entrepreneur when negotiating with Investor~$2$ is
$$d_\text{e}^{-2}=\pi_\text{e}(I_1, 0, s_1, 0)=\frac{d_\text{e}(e-I_1)}{e}+\alpha_H I_1 p - \max\bigg\{ \alpha_H I_1  s_1, I_1\bigg\} p.$$
The disagreement point of Investor~$i$ is $d_i=\nicefrac{e}{2}$ since the investor has $\nicefrac{e}{2}$ units of capital as the endowment.

Then, the investments $\bm{I}$ and the shares $\bm{s}$ maximize the following Nash product simultaneously:
\begin{align}\label{eq:simultaneous_1_protection}
\max_{I_i\in[0, e/2],~ s_i\in[0, 1]}~&\left[\pi_i(\bm{I}, \bm{s})-d_i\right]^{\theta_i}\left[\pi_\text{e}(\bm{I}, \bm{s})-d_\text{e}^{-i}\right]^{1-\theta_i}\\
&\pi_i(\bm{I}, \bm{s})\geq d_i,~\pi_\text{e}(\bm{I}, \bm{s})\geq d_\text{e}^{-i}.\nonumber
\end{align}

We first establish a lemma, which helps us simplify the profits in Eqs.~\eqref{eq:profit-ent_protection} and \eqref{eq:profit_inv_protection}.
\medskip

\begin{lemma}\label{lemma:no-low-share} In any equilibrium bargaining outcome $(\bm{I}, \bm{s})$, the following conditions are satisfied:
\begin{align*}
\alpha_H (I_1+I_2)(1-s_i)\geq I_{3-i},~i=1,2.
\end{align*}
\end{lemma}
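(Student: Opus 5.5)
The plan is to argue by contradiction. Fix an equilibrium $(\bm{I},\bm{s})$ of the simultaneous Nash products \eqref{eq:simultaneous_1_protection} and suppose that for some $i$ we have $\alpha_H(I_1+I_2)(1-s_i)<I_{3-i}$. If $I_{3-i}=0$ this is impossible, since the left side is nonnegative. Hence $I_{3-i}>0$, and the violation says the following: in the high state, after Investor $3-i$ has been made whole, the residual $\alpha_H(I_1+I_2)(1-s_i)$ is too small. In words, Investor $i$'s claim $s_i$ is so large that the protection of Investor $3-i$ must be financed out of Investor $i$'s payoff. This is exactly the $(\cdot)^+$ term in \eqref{eq:profit_inv_protection}. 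The goal is to show that the bargaining unit $(e,3-i)$ would then not agree to invest on these terms, or that the unit $(e,i)$ would deviate. Either way this contradicts that $(I_{3-i},s_{3-i})$ and $(I_i,s_i)$ solve their respective Nash problems.

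The first step is to compute the joint surplus of unit $(e,3-i)$:
\[
\pi_{3-i}+\pi_\text{e}-d_{3-i}-d_\text{e}^{-(3-i)}.
\]
Take the other unit's terms $(I_i,s_i)$ as given, and use the expressions \eqref{eq:profit-ent_protection}–\eqref{eq:profit_inv_protection} together with the disagreement point $d_\text{e}^{-(3-i)}$, which involves only Investor $i$'s contract. In the violating region, the entrepreneur's high-state payoff is already zero once Investor $i$ is paid $\alpha_H(I_1+I_2)s_i$. Investor $3-i$ gets exactly $I_{3-i}$ in both states, so its gain over the outside option $d_{3-i}=e/2$ is zero. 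The entrepreneur gets $\alpha_H(I_1+I_2)p$ minus the total payouts to investors, which is at most what remains after Investor $i$'s claim.

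Meanwhile $d_\text{e}^{-(3-i)}$ is the entrepreneur's payoff with Investor $i$ alone. That is $\frac{d_\text{e}(e-I_i)}{e}+\alpha_H I_i p-\max\{\alpha_H I_i s_i,I_i\}p$. I would compare the two expressions directly. The claim is that, under the violation, the unit's surplus is weakly lower at $I_{3-i}>0$ than at $I_{3-i}=0$, which makes Investor $3-i$'s contribution worthless. This should follow because the extra value created, $\alpha_H I_{3-i}p$, is entirely absorbed by Investor $i$'s enlarged claim $\alpha_H I_{3-i} s_i p$ plus the repayment of $I_{3-i}$ out of Investor $i$'s portion. There is also the lost prorated outside option $d_\text{e}I_{3-i}/e$.

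Given the convention that the largest investment is made only under indifference, I then need a \emph{strict} inequality. Alternatively I would fall back on the second unit. Under the violation, Investor $i$'s payoff is reduced by $I_{3-i}-\alpha_H(I_1+I_2)(1-s_i)>0$. Unit $(e,i)$ could lower $s_i$ slightly to the threshold value $1-I_{3-i}/(\alpha_H(I_1+I_2))$. This loses nothing for Investor $i$, because the reduction of the $(\cdot)^+$ term exactly offsets the lower share, and it strictly raises the entrepreneur's payoff by ending the cross-subsidy of Investor $3-i$. Such a change strictly raises the Nash product when both gains are positive. Alternatively, the split condition analogous to \eqref{eq:single_2_protection} fails, since $\pi_i-d_i$ stays fixed while the total surplus rises.

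I expect the main obstacle to be bookkeeping of the nested $\max$ and $(\cdot)^+$ terms, including corner cases: Investor $i$ itself being at its protection floor ($\alpha_H(I_1+I_2)s_i<I_i$), or a zero Nash product. I would handle these by splitting into cases according to which branch of each $\max$ is active. For each case I would verify the strict improvement for unit $(e,i)$, using $\alpha_H>2$ from Assumption~\ref{assumption1} where needed to ensure positivity.
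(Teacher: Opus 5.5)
Neither of your two routes closes, so there is a genuine gap.

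\textbf{The fallback deviation fails.} Because $s_{3-i}\le 1-s_i$, the whole region $s_i\ge 1-I_{3-i}/(\alpha_H(I_1+I_2))$, threshold included, forces $\alpha_H(I_1+I_2)s_{3-i}\le I_{3-i}$. Throughout that region, in the high state:
\begin{itemize}
\item Investor $3-i$ gets exactly $I_{3-i}$;
\item Investor $i$ gets $\alpha_H(I_1+I_2)-I_{3-i}$;
\item the entrepreneur gets $0$.
\end{itemize}
So every payoff is constant in $s_i$ on this region. The cross-subsidy of Investor $3-i$ is paid by Investor $i$, not by the entrepreneur. Removing it by moving $s_i$ to the threshold therefore leaves $\pi_\text{e}=0$ unchanged, contrary to your claim that it strictly raises the entrepreneur's payoff. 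To raise $\pi_\text{e}$ you must push $s_i$ strictly below the threshold, where Investor $i$ loses one-for-one. That is not a Pareto improvement. Your Nash-product comparison presupposes both gains are positive, so it has nothing to work with. The split-condition variant fails for the same reason: total surplus does not rise.

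\textbf{The first route fails too.} Going through the entrepreneur--Investor $3-i$ unit gives joint surplus $-d_\text{e}^{-(3-i)}$. But $d_\text{e}^{-(3-i)}=\frac{d_\text{e}(e-I_i)}{e}+\alpha_H I_i p-\max\{\alpha_H I_i s_i,I_i\}p$ can vanish, for example when $d_\text{e}=0$ and $s_i=1$. So, as you feared, that route yields no contradiction.

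\textbf{The missing idea} is to use the participation constraint of the entrepreneur--Investor $i$ unit rather than a deviation.
\begin{itemize}
\item You already have $\pi_\text{e}(\bm{I},\bm{s})=0$ in the violating configuration.
\item The entrepreneur's disagreement point in that unit is his payoff from Investor $3-i$ alone, on that investor's terms.
\item Since $s_{3-i}<I_{3-i}/(\alpha_H(I_1+I_2))\le 1/\alpha_H$, this equals $d_\text{e}^{-i}=\frac{d_\text{e}(e-I_{3-i})}{e}+(\alpha_H-1)I_{3-i}p$, which is strictly positive because $I_{3-i}>0$.
\end{itemize}
Hence $\pi_\text{e}<d_\text{e}^{-i}$. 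The outcome $(\bm{I},\bm{s})$ violates the constraint $\pi_\text{e}\ge d_\text{e}^{-i}$ in \eqref{eq:simultaneous_1_protection} for unit $i$, so it cannot be an equilibrium. This is the paper's argument, and it needs no strictness or tie-breaking considerations.

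You should also state explicitly that the violation can hold for at most one $i$. Adding the two violations would give $\alpha_H(I_1+I_2)(2-s_1-s_2)<I_1+I_2$, which is impossible when $s_1+s_2\le 1$ and $\alpha_H>1$. You use this implicitly when you set Investor $3-i$'s $(\cdot)^+$ term to zero.
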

\noindent{\bf Proof of Lemma~\ref{lemma:no-low-share}.} It is easy to observe that the result holds if only one investor invests in equilibrium. The following proof focuses on the case where both investors invest in equilibrium.

 We first observe that $\alpha_H (I_1+I_2)(1-s_i)< I_{3-i}$ when the entrepreneur's profit is not enough to cover the compensation to protect investor~$(3-i)$. Since $\alpha_H>1$ and $s_1+s_2<1$, we note that $\alpha_H (I_1+I_2)(1-s_i)< I_{3-i}$ can hold for at most one bargaining unit. We next prove the lemma by showing that in any bargaining outcome $(\bm{I}, \bm{s})=(I_1, I_2, s_1, s_2)$, if $\alpha_H (I_1+I_2)(1-s_1)< I_{2}$ and  $\alpha_H (I_1+I_2)(1-s_2)\geq  I_{1}$, then $(\bm{I}, \bm{s})$ is not feasible for the bargaining problem between the entrepreneur and Investor~$1$.

Since $\alpha_H (I_1+I_2)(1-s_1)< I_{2}$ and  $\alpha_H (I_1+I_2)(1-s_2)\geq  I_{1}$, by Eqs.~\eqref{eq:profit-ent_protection}, we have
\begin{align*}
\pi_\text{e}(\bm{I}, \bm{s})
=&\alpha_H(I_1+I_2)p- \max\bigg\{\alpha_H (I_1+I_2)-I_2, I_1\bigg\} p-\max\bigg\{\alpha_H (I_1+I_2)s_2, I_2\bigg\} p  \\
=&\alpha_H(I_1+I_2)p- \bigg(\alpha_H (I_1+I_2)-I_2 \bigg) p - I_2 p  \\
=&0,
\end{align*}
Note that the disagreement point of the entrepreneur is that
$$d_\text{e}^{-1}=\pi_\text{e}(0, I_2, 0, s_2)=\frac{d_\text{e} (e-I_2)}{e}+\alpha_H I_2 p - \max\bigg\{\alpha_H I_2  s_2, I_2\bigg\} p=\frac{d_\text{e}(e-I_2)}{e}+(\alpha_H-1)I_2 p>0.$$
It follows that $\pi_\text{e}(\bm{I}, \bm{s})<d_\text{e}^{-1}$ and therefore, $(\bm{I}, \bm{s})$ is not feasible for the bargaining problem between the entrepreneur and Investor~$1$. \hfill $\blacksquare$

\medskip

By Lemma~\ref{lemma:no-low-share}, we can further simplify the expected profit of the entrepreneur as
\begin{align}\label{eq:profit-ent_protection_new}
\pi_\text{e}(\bm{I}, \bm{s})
&=\alpha_H(I_1+I_2)p-\sum_{i=1}^2\max\bigg\{\alpha_H (I_1+I_2)s_i, I_i\bigg\} p ,
\end{align}
and the expected profit of Investor~$i$ as
\begin{align}\label{eq:profit_inv_protection_new}
\pi_i(\bm{I}, \bm{s})
&=\max\bigg\{\alpha_H (I_1+I_2)s_i, I_i\bigg\} p + \frac{e}{2} -I_i p.
\end{align}

The following proposition is Proposition~\ref{prop:sim_protection} under general bargaining powers.
\begin{proposition}[Two investors]   \label{prop:sim_appx_protection} There are two types of equilibrium bargaining outcomes: both investors investing and only one investor investing. 
\begin{itemize}
\item In the both-investor-investing equilibrium, the investors invest the endowed capital; i.e., $\tilde{I}_i^{TI}=\nicefrac{e}{2}$ for $i\in\{1,2\}$, and the equilibrium share of investor $i$ is as follows.
\begin{itemize}
\item If $\alpha_H\leq \min\left\{\frac{1-2 \theta_1\theta_2 +\theta_1}{\theta_1-\theta_1\theta_2}+\frac{d_\text{e}}{e p}, \frac{1-2 \theta_1\theta_2 +\theta_2}{\theta_2-\theta_1\theta_2}+\frac{d_\text{e}}{e p}\right\}$,
\begin{align}\label{eq:sim_protection_1}
&\tilde{s}_1^{TI}=\frac{1-\alpha_H \theta_1 \theta_2+\alpha_H \theta_1-\theta_1}{2 \alpha_H (1-\theta_1 \theta_2)}-\frac{d_\text{e}(\theta_1-\theta_1\theta_2)}{2 \alpha_H p e (1-\theta_1 \theta_2)},\\
&\tilde{s}_2^{TI}=\frac{1-\alpha_H \theta_1 \theta_2+\alpha_H \theta_2-\theta_2}{2 \alpha_H (1-\theta_1 \theta_2)}-\frac{d_\text{e}(\theta_2-\theta_1\theta_2)}{2 \alpha_H p e (1-\theta_1 \theta_2)}
\end{align}
\item If $\frac{1-2 \theta_1\theta_2+\theta_1}{\theta_1-\theta_1\theta_2}+\frac{d_\text{e}}{ e p}<\alpha_H \leq \frac{2 +3 \theta_2 - 2 \theta_1 \theta_2}{2\theta_2-\theta_1\theta_2}+\frac{d_\text{e}}{e p}$
\begin{align}\label{eq:sim_protection_2}
&\tilde{s}_1^{TI}=\frac{1-\alpha_H \theta_1 \theta_2+\alpha_H \theta_1+\theta_1 \theta_2-\theta_1}{\alpha_H (2 - \theta_1 \theta_2)}-\frac{d_\text{e} (\theta_1-\theta_1\theta_2)}{ \alpha_H p e (2- \theta_1\theta_2)},\\
&\tilde{s}_2^{TI}=\frac{2-\alpha_H \theta_1 \theta_2+2 \alpha_H \theta_2-3 \theta_2}{2 \alpha_H (2 - \theta_1 \theta_2)}-\frac{d_\text{e}(2\theta_2-\theta_1\theta_2)}{2 \alpha_H p e (2-\theta_1\theta_2)}
\end{align}
\item If  $\frac{1-2 \theta_1\theta_2+\theta_2}{\theta_2-\theta_1\theta_2} +\frac{d_\text{e}}{ e p}<\alpha_H \leq \frac{2 +3 \theta_1 - 2 \theta_1 \theta_2}{2 \theta_1-\theta_1\theta_2}+\frac{d_\text{e}}{ e p}$
\begin{align}\label{eq:sim_protection_3}
&\tilde{s}_1^{TI}=\frac{2-\alpha_H \theta_1 \theta_2+2 \alpha_H \theta_1-3 \theta_1}{2 \alpha_H (2 - \theta_1 \theta_2)}-\frac{d_\text{e}(2\theta_1-\theta_1\theta_2)}{2 \alpha_H p e (2-\theta_1\theta_2)},\\
&\tilde{s}_2^{TI}=\frac{1-\alpha_H \theta_1 \theta_2+\alpha_H \theta_2+\theta_1 \theta_2-\theta_2}{\alpha_H (2 - \theta_1 \theta_2)}-\frac{d_\text{e} (\theta_2-\theta_1\theta_2)}{ \alpha_H p e (2- \theta_1\theta_2)}
\end{align}
\item If $\alpha_H>\max\left\{\frac{2+3\theta_1-2\theta_1\theta_2}{2\theta_1-\theta_1\theta_2}+\frac{d_\text{e}}{ e p}, \frac{2+3\theta_2-2\theta_1\theta_2}{2\theta_2-\theta_1\theta_2}+\frac{d_\text{e}}{ e p}\right\}$
\begin{align}\label{eq:sim_protection_4}
&\tilde{s}_1^{TI}=\frac{2-\alpha_H \theta_1 \theta_2+2 \alpha_H \theta_1+\theta_1 \theta_2-3 \theta_1}{\alpha_H (4 - \theta_1 \theta_2)}-\frac{d_\text{e} (2 \theta_1 - \theta_1 \theta_2)}{ \alpha_H p e (4-\theta_1\theta_2)},\\
&\tilde{s}_2^{TI}=\frac{2-\alpha_H \theta_1 \theta_2+2 \alpha_H \theta_2+\theta_1 \theta_2-3 \theta_2}{\alpha_H ( 4 - \theta_1 \theta_2)}-\frac{d_\text{e} (2 \theta_2 - \theta_1 \theta_2)}{ \alpha_H p e (4-\theta_1\theta_2)}
\end{align}
\end{itemize}

\item If $\alpha_H<\frac{2-\theta_i}{1-\theta_i}-\frac{2 d_\text{e} \theta_i}{ e p (1-\theta_i)}$, an equilibrium bargaining outcome in which only Investor~$i$ invests exists; i.e., $\tilde{I}_i^{TI}=\nicefrac{e}{2}$ and $\tilde{I}_j^{TI}=0$ for $i,j\in\{1,2\}$ and $i\neq j$, and the equilibrium share of Investor~$i$ is
 \begin{align*}
&\tilde{s}_i^{TI}=\frac{\left(\alpha_H -1\right)\theta_i+1}{\alpha_H}-\frac{2 \theta_i d_\text{e}}{ e \alpha_H  p}.
\end{align*}

\end{itemize}
\end{proposition}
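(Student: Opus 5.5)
The plan follows the proof of Proposition~\ref{prop:sim_appx}, now using the simplified profits \eqref{eq:profit-ent_protection_new}--\eqref{eq:profit_inv_protection_new}, which Lemma~\ref{lemma:no-low-share} justifies.

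\textbf{Step 1 (bilateral problem between the entrepreneur and Investor~$i$).} Fix the other unit's outcome $(I_j,s_j)$. The disagreement point is
\[
d_\text{e}^{-i}=\frac{d_\text{e}(e-I_j)}{e}+\alpha_H I_j p-\max\{\alpha_H I_j s_j, I_j\}p .
\]
Since transfers are made through $s_i$, the Nash product yields the usual surplus-splitting equations, as in \eqref{eq:simultaneous_2}. The joint surplus $\pi_i+\pi_\text{e}-d_i-d_\text{e}^{-i}$ has three parts:
\begin{itemize}
\item a term $\alpha_H(I_1+I_2)p-\max\{\alpha_H(I_1+I_2)s_j,I_j\}p$, which is increasing in $I_i$;
\item the loss of the prorated outside option $d_\text{e} I_i/e$;
\item the change in the protection term of Investor~$j$.
\end{itemize}
For parameters satisfying Assumption~\ref{assumption1} ($\alpha_H>2$), I expect the surplus to be piecewise linear in $I_i$, so the best-response investment is bang-bang: $I_i=e/2$ whenever the unit's surplus is nonnegative, and $0$ otherwise. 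This mirrors \eqref{eq:simultaneous_3}.

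\textbf{Step 2 (best-response shares).} With $I_i=e/2$, I solve $\pi_i-d_i=\theta_i(\text{surplus})$ for $s_i$. Two kinks appear, and I split cases on each:
\begin{itemize}
\item whether Investor~$i$'s own protection binds, i.e.\ $\alpha_H e s_i\le e/2$, so that $s_i\le 1/(2\alpha_H)$;
\item whether Investor~$j$'s protection binds when $j$ invests alone, i.e.\ $\alpha_H s_j\le 1$ in $d_\text{e}^{-i}$, and with both investing, $\alpha_H e s_j \le e/2$.
\end{itemize}
In each regime $s_i$ is an affine function of $s_j$, with slope involving $\theta_i$. This produces piecewise-linear best responses $s_i(s_j)$, analogous to \eqref{eq:simultaneous_4}.

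\textbf{Step 3 (fixed points of the two-dimensional linear systems).} I solve the system regime by regime.
\begin{itemize}
\item \emph{Neither protection binds:} the system mirrors the Common Stock computation with ${\mu_\alpha}$ replaced by $\alpha_H p$ and an extra $(1-p)$ offset. Solving it should give \eqref{eq:sim_protection_4}, with denominator $4-\theta_1\theta_2$.
\item \emph{One side's protection binds:} the corresponding equation changes slope. This should give the denominators $2-\theta_1\theta_2$ in \eqref{eq:sim_protection_2}--\eqref{eq:sim_protection_3}.
\item \emph{Both bind:} this should give $2(1-\theta_1\theta_2)$ in \eqref{eq:sim_protection_1}.
\end{itemize}
For each candidate I then verify consistency: the solved shares must satisfy the regime's defining inequalities. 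Rearranging those inequalities in $\alpha_H$ should give exactly the thresholds $\frac{1-2\theta_1\theta_2+\theta_i}{\theta_i-\theta_1\theta_2}+\frac{d_\text{e}}{ep}$ and $\frac{2+3\theta_i-2\theta_1\theta_2}{2\theta_i-\theta_1\theta_2}+\frac{d_\text{e}}{ep}$. I also check participation and that $I_i=e/2$ is a best response at the candidate, which holds under Assumption~\ref{assumption1}.

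\textbf{Step 4 (one-investor equilibrium).} Set $I_j=0$. Unit $i$ then reduces to the single-investor problem of Proposition~\ref{prop:single_appx_protection}, with capital $e/2$ and outside-option loss $d_\text{e}/2$. This gives
\[
\tilde s_i=\frac{(\alpha_H-1)\theta_i+1}{\alpha_H}-\frac{2\theta_i d_\text{e}}{e\alpha_H p}.
\]
The equilibrium requires that unit $j$ prefers not to invest given $\tilde s_i$, meaning its joint surplus from investing is negative. That condition rearranges to $\alpha_H<\frac{2-\theta_i}{1-\theta_i}-\frac{2d_\text{e}\theta_i}{ep(1-\theta_i)}$, by the same computation as in Proposition~\ref{prop:sim_appx}.

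The main obstacle is the regime bookkeeping in Step~3. The kinks in $\max\{\cdot\}$ enter both the realized profits and the off-path disagreement points, so each case must be checked for self-consistency to pin down the stated $\alpha_H$ intervals.
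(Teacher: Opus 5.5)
Your route is the paper's. You simplify payoffs via Lemma~\ref{lemma:no-low-share}, split each bilateral surplus, get bang-bang investment and piecewise-linear share best responses, and solve the four linear systems with consistency checks. Your regime-4 remark (the Common Stock system with ${\mu_\alpha}\to\alpha_H$, $d_\text{e}\to d_\text{e}/p$) and the $\alpha_H$-thresholds you predict are correct.

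\textbf{Step~4 does not deliver the stated result.} Feeding capital $e/2$ and outside-option loss $d_\text{e}/2$ into Proposition~\ref{prop:single_appx_protection} gives $\frac{(\alpha_H-1)\theta_i+1}{\alpha_H}-\frac{\theta_i d_\text{e}}{e\alpha_H p}$. That is off by a factor of two in the $d_\text{e}$ term.

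The stated formula rests on the paper's payoff convention. The agreement payoff $\pi_\text{e}$ contains no outside-option term at all, whereas $d_\text{e}^{-i}=d_\text{e}(e-I_j)/e+\cdots$. So with $I_j=0$ the entrepreneur forfeits the full $d_\text{e}$, and unit $i$'s surplus is $(\alpha_H-1)\frac{e}{2}p-d_\text{e}$.

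The same convention drives the existence condition. Unit $j$'s joint surplus is linear in $I_j$ with slope $\bigl(\alpha_H(1-s_i)-1\bigr)p$ and no $d_\text{e}$ term. The $-1$ comes from the investor's expected cost $-I_jp$, which your Step~1 decomposition omits. Hence $I_j=0$ iff $s_i>(\alpha_H-1)/\alpha_H$, and with the correct share this is exactly $\alpha_H<\frac{2-\theta_i}{1-\theta_i}-\frac{2d_\text{e}\theta_i}{ep(1-\theta_i)}$.

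If you keep your $-d_\text{e}I_i/e$ term instead, you get the cutoff $s_i>\bigl(\alpha_H-1-d_\text{e}/(ep)\bigr)/\alpha_H$ and the threshold $\alpha_H<\frac{2-\theta_i}{1-\theta_i}+\frac{d_\text{e}}{ep}$. That is not the statement. At $I_1=I_2=e/2$ both conventions charge $d_\text{e}/2$, so your both-invest regimes are unaffected.

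\textbf{The case split in Step~2 is also off.} The region $s_i<1/(2\alpha_H)$ is not a regime with an affine best response. There $\pi_i=e/2=d_i$, so the Nash product vanishes; the paper simply excludes this region and checks ex post that $\tilde s_i\ge 1/(2\alpha_H)$. The same argument applied to unit $j$ gives $s_j\ge 1/(2\alpha_H)$ at any equilibrium, so the kink $\alpha_H e s_j\le e/2$ in $\pi_\text{e}$ never matters.

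The only kink that generates the four cases is $s_j=1/\alpha_H$ inside $d_\text{e}^{-i}$, that is, whether $j$'s protection would bind in the off-path stand-alone payoff. Your ``both bind'' case, with denominator $2(1-\theta_1\theta_2)$, is therefore $s_1,s_2\in[1/(2\alpha_H),1/\alpha_H]$. In that case no investor's protection is actually triggered.
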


\noindent{\bf Proof of Proposition~\ref{prop:sim_appx_protection}.}
We solve the bargaining problem between the entrepreneur and Investor~$1$ as specified in \eqref{eq:simultaneous_1_protection} for the best-response investment level and the share of the investor. We first note that if $s_1< \frac{I_1}{\alpha_H(I_1+I_2)}$, it follows that $\pi_1(\bm{I}, \bm{s})=\nicefrac{e}{2}$ and the Nash product for the bargaining between the entrepreneur and Investor~$1$ in problem~\eqref{eq:simultaneous_1_protection} is zero. In the following analysis, we restrict attention to the case where Investor~$1$'s share $s_1\geq  \frac{I_1}{\alpha_H(I_1+I_2)}$ and later verify that the equilibrium bargaining outcome leads to a strictly positive Nash product. In this case, by the first order condition, we have that

\begin{align}\label{eq:simultaneous_2_protection}
&\pi_1(\bm{I}, \bm{s})-d_1= \theta_1 \left( \pi_1(\bm{I}, \bm{s})+\pi_\text{e}(\bm{I}, \bm{s})-d_1-d_\text{e}^{-1}\right);\\
&\pi_\text{e}(\bm{I}, \bm{s})-d_\text{e}^{-1}= (1-\theta_1) \left( \pi_1(\bm{I}, \bm{s})+\pi_\text{e}(\bm{I}, \bm{s})-d_1-d_\text{e}^{-1}\right).\nonumber
\end{align}
Note that the best-response investment level
\begin{align}\label{eq:simultaneous_3_protection}
\tilde{I}_1(I_2, s_2)&=\arg\max_{I_1\in[0, e/2]} \left\{\pi_1(\bm{I}, \bm{s})+\pi_\text{e}(\bm{I}, \bm{s})-d_1-d_\text{e}^{-1}\right\}
=\left\{
\begin{array}{ll}
 \nicefrac{e}{2}&\text{ if }  s_2 \leq \frac{\alpha_H-1}{\alpha_H} ;\\[1ex]%
 0 &\text{ if } s_2 > \frac{\alpha_H-1}{\alpha_H}.\\ 
\end{array}\right.
\end{align}

By Eq.~\eqref{eq:simultaneous_2_protection}, the best-response share for Investor~$1$ is

  \begin{align}\label{eq:simultaneous_4_protection_prime}
\tilde{s}_1(I_2, s_2)=\left\{
\begin{array}{ll}
\frac{\theta_1\left[\alpha_H (e+2 I_2)  (1-s_2)- e - 2 \alpha_H I_2  + 2 I_2 \right]+ e }{\alpha_H(e + 2 I_2)}-\frac{2 d_\text{e} (e-I_2) \theta_1}{ e \alpha_H p (e + 2 I_2)}&\text{ if } s_2 \leq \frac{1}{\alpha_H  } ;\\[1ex]
\frac{\theta_1\left[\alpha_H e  (1-s_2)- e  \right] + e }{\alpha_H( e + 2 I_2)}-\frac{2 d_\text{e} (e-I_2) \theta_1}{ e \alpha_H p (e + 2 I_2)}&\text{ if } \frac{1}{\alpha_H  } < s_2\leq \frac{\alpha_H-1}{\alpha_H} ;\\[1ex]
 0&\text{ if } s_2 > \frac{\alpha_H-1}{\alpha_H}.\\
\end{array}\right.
\end{align}

Similarly, we have that the best-response investment level and share for Investor~$2$ are
\begin{align}\label{eq:simultaneous_5_protection}
&\tilde{I}_2(I_1, s_1)=\left\{
\begin{array}{ll}
 \nicefrac{e}{2}&\text{ if }  s_1 \leq \frac{\alpha_H-1}{\alpha_H};\\[1ex]%
 0&\text{ if } s_1 > \frac{\alpha_H-1}{\alpha_H}.\\
\end{array}\right.\\[1ex]
\label{eq:simultaneous_6_protection}
&\tilde{s}_2(I_1, s_1)=\left\{
\begin{array}{ll}
\frac{\theta_2\left[\alpha_H (2 I_1+e)  (1-s_1) - e  - 2 \alpha_H I_1  + 2 I_1 \right] + e }{\alpha_H(2 I_1+e)}-\frac{2 d_\text{e} (e-I_1) \theta_2}{ e \alpha_H p (e + 2 I_1)}&\text{ if }s_1 \leq \frac{1}{\alpha_H  } ;\\[1ex]
\frac{\theta_2\left[\alpha_H e  (1-s_1) - e  \right] + e }{\alpha_H(2 I_1+ e)}-\frac{2 d_\text{e} (e-I_1) \theta_2}{ e \alpha_H p (e + 2I_1)}&\text{ if } \frac{1}{\alpha_H  } < s_1\leq \frac{\alpha_H-1}{\alpha_H} ;\\[1ex]
 0&\text{ if } s_1 > \frac{\alpha_H-1}{\alpha_H} .\\
\end{array}\right.
\end{align}

Solving the system of the best-response functions Eqs.~\eqref{eq:simultaneous_3_protection} through \eqref{eq:simultaneous_6_protection}, we have that there exists an equilibrium in which both investors invest $\tilde{I}_i^{TI}=\nicefrac{e}{2}$ with the share for Investor~$i$ as follows.
\begin{itemize}
\item If $\alpha_H\leq \min\left\{\frac{1-2 \theta_1\theta_2 +\theta_1}{\theta_1-\theta_1\theta_2}+\frac{d_\text{e}}{e p}, \frac{1-2 \theta_1\theta_2 +\theta_2}{\theta_2-\theta_1\theta_2}+\frac{d_\text{e}}{e p}\right\}$,
\begin{align*}
&\tilde{s}_1^{TI}=\frac{1-\alpha_H \theta_1 \theta_2+\alpha_H \theta_1-\theta_1}{2 \alpha_H (1-\theta_1 \theta_2)}-\frac{d_\text{e}(\theta_1-\theta_1\theta_2)}{2 \alpha_H p e (1-\theta_1 \theta_2)},\\
&\tilde{s}_2^{TI}=\frac{1-\alpha_H \theta_1 \theta_2+\alpha_H \theta_2-\theta_2}{2 \alpha_H (1-\theta_1 \theta_2)}-\frac{d_\text{e}(\theta_2-\theta_1\theta_2)}{2 \alpha_H p e (1-\theta_1 \theta_2)}
\end{align*}
\item If $\frac{1-2 \theta_1\theta_2+\theta_1}{\theta_1-\theta_1\theta_2}+\frac{d_\text{e}}{ e p}<\alpha_H \leq \frac{2 +3 \theta_2 - 2 \theta_1 \theta_2}{2\theta_2-\theta_1\theta_2}+\frac{d_\text{e}}{ e p}$
\begin{align*}
&\tilde{s}_1^{TI}=\frac{1-\alpha_H \theta_1 \theta_2+\alpha_H \theta_1+\theta_1 \theta_2-\theta_1}{\alpha_H (2 - \theta_1 \theta_2)}-\frac{d_\text{e} (\theta_1-\theta_1\theta_2)}{ \alpha_H p e (2- \theta_1\theta_2)},\\
&\tilde{s}_2^{TI}=\frac{2-\alpha_H \theta_1 \theta_2+2 \alpha_H \theta_2-3 \theta_2}{2 \alpha_H (2 - \theta_1 \theta_2)}-\frac{d_\text{e}(2\theta_2-\theta_1\theta_2)}{2 \alpha_H p e (2-\theta_1\theta_2)}
\end{align*}
\item If  $\frac{1-2 \theta_1\theta_2+\theta_2}{\theta_2-\theta_1\theta_2} +\frac{d_\text{e}}{ e p}<\alpha_H \leq \frac{2 +3 \theta_1 - 2 \theta_1 \theta_2}{2 \theta_1-\theta_1\theta_2}+\frac{d_\text{e}}{ e p}$
\begin{align*}
&\tilde{s}_1^{TI}=\frac{2-\alpha_H \theta_1 \theta_2+2 \alpha_H \theta_1-3 \theta_1}{2 \alpha_H (2 - \theta_1 \theta_2)}-\frac{d_\text{e}(2\theta_1-\theta_1\theta_2)}{2 \alpha_H p e (2-\theta_1\theta_2)},\\
&\tilde{s}_2^{TI}=\frac{1-\alpha_H \theta_1 \theta_2+\alpha_H \theta_2+\theta_1 \theta_2-\theta_2}{\alpha_H (2 - \theta_1 \theta_2)}-\frac{d_\text{e} (\theta_2-\theta_1\theta_2)}{ \alpha_H p e (2- \theta_1\theta_2)}
\end{align*}
\item If $\alpha_H>\max\left\{\frac{2+3\theta_1-2\theta_1\theta_2}{2\theta_1-\theta_1\theta_2}+\frac{d_\text{e}}{ e p}, \frac{2+3\theta_2-2\theta_1\theta_2}{2\theta_2-\theta_1\theta_2}+\frac{d_\text{e}}{ e p}\right\}$
\begin{align*}
&\tilde{s}_1^{TI}=\frac{2-\alpha_H \theta_1 \theta_2+2 \alpha_H \theta_1+\theta_1 \theta_2-3 \theta_1}{\alpha_H (4 - \theta_1 \theta_2)}-\frac{d_\text{e} (2 \theta_1 - \theta_1 \theta_2)}{ \alpha_H p e (4-\theta_1\theta_2)},\\
&\tilde{s}_2^{TI}=\frac{2-\alpha_H \theta_1 \theta_2+2 \alpha_H \theta_2+\theta_1 \theta_2-3 \theta_2}{\alpha_H ( 4 - \theta_1 \theta_2)}-\frac{d_\text{e} (2 \theta_2 - \theta_1 \theta_2)}{ \alpha_H p e (4-\theta_1\theta_2)}
\end{align*}
\end{itemize}
It is easy to verify that the equilibrium shares satisfies that $\tilde{s}_i^{TI}\geq \frac{I_i^{TI}}{\alpha_H(I_1^{TI}+I_2^{TI})}=\frac{1}{2\alpha_H}$.

Similarly, we have that if $\alpha_H<\frac{2-\theta_i}{1-\theta_i}-\frac{2 d_\text{e} \theta_i}{e p (1-\theta_i)}$, there exists an equilibrium in which Investor~$i$ is the only investor with the investment level $\tilde{I}_i^{TI}=\nicefrac{e}{2}$ in equilibrium and the share for Investor~$i$ is
\begin{align*}
\tilde{s}_i^{TI}=
\frac{\left(\alpha_H -1\right)\theta_i+1}{\alpha_H}-\frac{2 \theta_i d_\text{e}}{ e \alpha_H  p}.
\end{align*}
\hfill $\blacksquare$

\medskip

\subsection{Theoretical Predictions Given Experimental Parameterization \label{sec:theory:experiment}}
Table \ref{tab:theory:predictions} summarizes the theory predictions under the experimental parameters used in the experiment. All calculations are based on Propositions 1-4 in Section  \ref{sec:theory}. The parameters are in Section  \ref{sec:parameters}.

\begin{table}[H]
\centering
\caption{Theoretical Predictions of Entrepreneur's Share (in \%) Given Experimental Parameterization}
\label{tab:theory:predictions}
\small
\begin{tabular}{@{}lccp{0.1in}cc@{}}
\toprule
          & \multicolumn{2}{c}{\textbf{PoorEnt}} & \multicolumn{1}{l}{} & \multicolumn{2}{c}{\textbf{RichEnt}} \\ \cmidrule{2-3}\cmidrule{5-6}
Condition & \textbf{Common} & \textbf{Preferred} & \textbf{}            & \textbf{Common} & \textbf{Preferred} \\ \midrule
SI & 33.3 & 45.5 &  & 46.7 & 63.6 \\
TI & 46.7 & 56.4 &  & 57.3 & 70.9 \\ \bottomrule
\end{tabular}
\end{table}

\subsection{Model Robustness: Risk Aversion \label{sec:risk:aversion}}
For analytical tractability, the models above were solved under the assumption that all parties were risk neutral. However, it is natural to wonder how the results hold up, especially Corollary \ref{corollary: ent_share}, if the parties involved are risk averse. Unfortunately, the model becomes analytically intractable to solve. We are able to show that, for the parameters that we implement in the experiment, so long as risk aversion is not too great, there will still be equilibria in which both investors choose to invest and that the entrepreneur's ranking from Corollary \ref{corollary: ent_share} still holds. The following illustrates an example for the comparison of the entrepreneur's share under SI and TI when $d_\text{e}=0$. Specifically, let $u_i = x^{1-\rho_i}$ denote player $i$'s utility function, where $\rho_i = 0$ indicates risk neutrality and $\rho_i > 0$ indicates risk aversion. In the experiment, as we outlined in Section ~\ref{sec:theory}, we assume that $e = 200$ and $(\alpha_H,\alpha_L,p) = (11, 1 ,0.2)$. Table \ref{tab:ent:share:risk} gives the entrepreneur's share under various assumptions on risk preferences.

\begin{table}[htbp]
\small
  \centering\caption{\label{tab:ent:share:risk}The Entrepreneur's Share Under Risk Aversion}
\subfloat[Common Stock contracts]{
  \begin{tabular}{ccc} \hline
  Risk Parameters & SI-PoorEnt (\%) &   TI-PoorEnt (\%) \\ \hline
  $\rho_\text{e} = \rho_s = \rho_1 = \rho_2 = 0$          & 33.33 &   46.67 \\
  $\rho_\text{e} = 0$; $\rho_s = \rho_1 = \rho_2 = 0.25$  & 31.23 &   44.48 \\
$\rho_\text{e} = 0.25$; $\rho_s = \rho_1 = \rho_2 = 0$    & 28.57 &   46.44 \\
$\rho_\text{e} = 0.25$; $\rho_s = \rho_1 = \rho_2 = 0.25$ & 26.98 &   44.32 \\ \hline
  \end{tabular}
  }

  \subfloat[Preferred Stock contracts]{
  \begin{tabular}{cccc} \hline
  Risk Parameters & SI-PoorEnt (\%)   & TI-PoorEnt (\%) \\ \hline
  $\rho_\text{e} = \rho_s = \rho_1 = \rho_2 = 0$          & 45.46   & 56.36 \\
  $\rho_\text{e} = 0$; $\rho_s = \rho_1 = \rho_2 = 0.25$  & 49.15   & 58.45 \\
$\rho_\text{e} = 0.25$; $\rho_s = \rho_1 = \rho_2 = 0$    & 38.96   & 55.78 \\
$\rho_\text{e} = 0.25$; $\rho_s = \rho_1 = \rho_2 = 0.25$ & 42.80   & 57.83 \\ \hline
  \end{tabular}
  }
\end{table}

As can be seen, in all cases, the entrepreneur earns the least when bargaining against a single investor and the most when bargaining with two investors simultaneously. Note that entrepreneur risk aversion is detrimental to their share, but the effects are largest in the single investor case where the entrepreneur's bargaining power is weakest. It is also interesting to note that investor risk aversion is also detrimental to the entrepreneur under the Common Stock contracts but beneficial to the entrepreneur under the Preferred Stock contract. Under the Common Stock contracts, by investing in the business, the investor is putting money at risk and, therefore, requires compensation for that risk. Moreover, disagreement would also be a better outcome compared to successfully negotiating and having the business be a failure. \citet{RR1982} showed that increased risk aversion could, counterintuitively increase a player's share when disagreement is not the worst outcome. It seems that a similar result holds here. Under the Preferred Stock contracts, the investor's downside is protected and effectively the bargaining is regarding the state when the startup value is realized as $\alpha_H$. In this case, the entrepreneur is able to take advantage of the risk aversion of the investors and gain a higher share when bargaining with a more risk-averse investor.

\smallskip

\section{Theory with Alternative Belief\label{RB_sec:appx_alternative}}
We discuss the models under alternative belief about the disagreement points in this appendix.

\subsection{The Two Investors Model Under Common Stock Contract \label{RB_sec:simultaneous_appx_alternative}}

In this section, we present the results and the proofs with the general bargaining powers of the investor(s) relative to the entrepreneur under the alternative disagreement point of the entrepreneur set as $d_\text{e}^{-i}=d_\text{e}$ under the Common Stock contract. All other model settings are the same as in Section ~\ref{sec:theory}. We note that the alternative specification of the entrepreneur's disagreement point does not affect the result in the single investor bargaining problem. It only affects the two investor bargaining problem.

The investments $I_i$ and the share $s_i$ maximize the following Nash product simultaneously:
\begin{align}\label{RB_eq:simultaneous_1_alternative}
\max_{I_i\in[0, e],~ s_i\in[0, 1]}~&\left[\pi_i(\bm{I}, \bm{s})-d_i\right]^{\theta_i}\left[\pi_\text{e}(\bm{I}, \bm{s})-d_\text{e}^{-i}\right]^{1-\theta_i}\\
&\pi_i(\bm{I}, \bm{s})\geq d_i,~\pi_\text{e}(\bm{I}, \bm{s})\geq d_\text{e}^{-i}.\nonumber
\end{align}

\begin{proposition}[Two investors under alternative belief] ~ \label{RB_prop:sim_appx_alternative}
\begin{itemize}
\item When ${\mu_\alpha}\geq \max \bigg\{\frac{3-2 \theta_1 - \theta_1\theta_2}{2(1-\theta_1)}-\frac{d_\text{e} (\theta_1-\theta_1\theta_2)}{ e (1-\theta_1)},~\frac{3-2\theta_2-\theta_1\theta_2}{2(1-\theta_2)}-\frac{d_\text{e}(\theta_2-\theta_1\theta_2)}{ e (1-\theta_2)}\bigg\}$, there exists an equilibrium bargaining outcome in which both investors invest with $I_i^{TI}=\nicefrac{e}{2}$ for $i\in\{1,2\}$, and the equilibrium share of investor $i$ is
\begin{align}\label{RB_eq:simultaneous_7_alternative}
&s_i^{TI}=\frac{1-2 \theta_i +2 {\mu_\alpha} \theta_i +(1-2 {\mu_\alpha}) \theta_1\theta_2 }{2 {\mu_\alpha} (1-\theta_1\theta_2)}-\frac{d_\text{e}(\theta_i-\theta_1\theta_2)}{ e {\mu_\alpha}(1-\theta_1\theta_2)}.
\end{align}
\item There exists an equilibrium bargaining outcome in which only Investor~$i$ invests when ${\mu_\alpha}<\frac{2-\theta_i}{1-\theta_i}-\frac{2 d_\text{e} \theta_i}{ e (1-\theta_i)}$. The equilibrium investment level $I_i^{TI}=e$ and $I_j^{TI}=0$ for $i,j\in\{1,2\}$ and $i\neq j$, and the equilibrium share of Investor~$i$ is
 \begin{align*}
&s_i^{TI}=\frac{({\mu_\alpha}-1)\theta_i+1}{{\mu_\alpha}}-\frac{2 d_\text{e} \theta_i}{ e {\mu_\alpha}}.
\end{align*}

\end{itemize}
\end{proposition}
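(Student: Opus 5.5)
The plan is to mirror the proof of Proposition~\ref{prop:sim_appx}, with the entrepreneur's disagreement point against each investor set to $d_\text{e}^{-i}=d_\text{e}$, pro-rated as in the main text when only part of the capital is raised. First I fix $(I_j,s_j)$ for the other unit and solve the bilateral problem \eqref{RB_eq:simultaneous_1_alternative} for unit $i$. As in the proof of Proposition~\ref{prop:single_appx}, the first-order conditions of the generalized Nash product split the joint surplus $\pi_i+\pi_\text{e}-d_i-d_\text{e}^{-i}$ in proportions $\theta_i$ and $1-\theta_i$. Hence $I_i$ maximizes the joint surplus, and $s_i$ is then pinned down by $\pi_i-d_i=\theta_i(\cdot)$.

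\textbf{Best-response investment.} Since $\pi_i+\pi_\text{e}={\mu_\alpha}(I_1+I_2)(1-s_j)+\nicefrac{e}{2}-I_i$ and $d_i,d_\text{e}^{-i}$ do not depend on $s_i$, the joint surplus is linear in $I_i$ with slope ${\mu_\alpha}(1-s_j)-1$. I will check that pro-rating $d_\text{e}$ does not reverse this sign in the relevant region. So the best response is $I_i=\nicefrac{e}{2}$ if ${\mu_\alpha}(1-s_j)\ge 1$ and $0$ otherwise, exactly as in \eqref{eq:simultaneous_3}. The difference from Proposition~\ref{prop:sim_appx} is that $d_\text{e}^{-i}$ no longer contains the profit from unit $j$, which changes the surplus split.

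\textbf{Both-invest equilibrium.} Setting $I_1=I_2=\nicefrac{e}{2}$, the surplus condition becomes
\begin{align*}
{\mu_\alpha} e s_i-\tfrac{e}{2}=\theta_i\Big({\mu_\alpha} e(1-s_j)-\tfrac{e}{2}-d_\text{e}\Big),
\end{align*}
that is, ${\mu_\alpha} s_i+\theta_i{\mu_\alpha} s_j=\tfrac12+\theta_i({\mu_\alpha}-\tfrac12-d_\text{e}/e)$ for $i=1,2$. This is a $2\times2$ linear system with determinant ${\mu_\alpha}^2(1-\theta_1\theta_2)>0$, and solving it by Cramer's rule gives \eqref{RB_eq:simultaneous_7_alternative}. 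As a check, in the symmetric case the numerator factors as $(1-\theta)(1-\theta+2{\mu_\alpha}\theta)$, giving $s=(\tfrac12+\theta({\mu_\alpha}-\tfrac12))/({\mu_\alpha}(1+\theta))$ when $d_\text{e}=0$.

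\textbf{Existence conditions.} It remains to verify the investment condition ${\mu_\alpha}(1-s_j^{TI})\ge1$ for each $j$, together with the participation constraints. Substituting the closed form of $s_j^{TI}$ and rearranging linearly in ${\mu_\alpha}$ should yield exactly the stated thresholds. In the symmetric case with $d_\text{e}=0$ this reduces to ${\mu_\alpha}\ge(3+\theta)/2=\frac{3-2\theta-\theta^2}{2(1-\theta)}$, which is a useful check. I also need to confirm that both Nash products are strictly positive at the solution, so the interior first-order conditions are legitimate.

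\textbf{One-investor equilibrium.} Put $I_j=0$. Unit $i$ then reduces to a single-investor problem like Proposition~\ref{prop:single_appx}, with the pro-rated disagreement point. This gives the stated $s_i^{TI}$, which coincides with Proposition~\ref{prop:sim_appx}'s one-investor share. For this to be an equilibrium, unit $j$ must not wish to invest, which requires ${\mu_\alpha}(1-s_i^{TI})<1$. Substituting $s_i^{TI}$ gives ${\mu_\alpha}(1-\theta_i)<2-\theta_i-2d_\text{e}\theta_i/e$, which is the stated condition.

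The main obstacle I expect is bookkeeping rather than concept. One issue is the pro-rating of $d_\text{e}$ when an investor deviates, and whether the investment amount for the sole investor should be $\nicefrac{e}{2}$ or $e$. The other is making sure the threshold algebra produces the exact expressions with the $\theta_1\theta_2$ and $d_\text{e}$ terms; I would verify it symbolically, then specialize to the symmetric case.
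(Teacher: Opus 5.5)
Your proposal is correct and is essentially the paper's own proof: surplus-splitting first-order conditions, the best-response rule $I_i=\nicefrac{e}{2}$ iff ${\mu_\alpha}(1-s_j)\ge 1$, the $2\times 2$ linear system for the both-invest shares, and back-substitution into ${\mu_\alpha}(1-s_j)\ge 1$ (resp.\ $<1$) for the thresholds, with the sole investor investing $\nicefrac{e}{2}$ as in the paper's proof (the $I_i^{TI}=e$ in the statement is a typo). Do drop the pro-rating, since the stated formulas use the fixed disagreement point $d_\text{e}^{-i}=d_\text{e}$ and no residual outside option in $\pi_\text{e}$ (pro-rating would halve the $d_\text{e}$ term in the one-investor share and move the investment cutoff to ${\mu_\alpha}(1-s_j)\ge 1+d_\text{e}/e$); also note that your positivity check in the both-invest case reduces to ${\mu_\alpha}>1+d_\text{e}/e$, which the displayed threshold implies only when $d_\text{e}<e/2$ and otherwise must come from the paper's standing gains-from-trade assumption, which its proof leaves implicit.
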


\noindent{\bf Proof of Proposition~\ref{RB_prop:sim_appx_alternative}.} Recall that the expected profit of the entrepreneur is
\begin{align}\label{RB_eq:profit-ent_alternative}
\pi_\text{e}(\bm{I}, \bm{s})= {\mu_\alpha} (I_1+I_2)  (1-s_1-s_2),
\end{align}
and the expected profit of Investor~$i$ is
\begin{align}\label{RB_eq:profit_inv_alternative}
\pi_i(\bm{I}, \bm{s})= {\mu_\alpha} (I_1+I_2)  s_i+ \frac{e}{2} -I_i.
\end{align}
The disagreement point of the entrepreneur when negotiating with Investor~$i$ is $d_\text{e}^{-i}=d_\text{e},$ which the value of the outside option when the entrepreneur does not reach agreement with either investor.  The disagreement point of Investor~$i$ is $d_i=\nicefrac{e}{2}$ since the investor has $\nicefrac{e}{2}$ units of capital as the endowment.

We first solve the bargaining problem between the entrepreneur and Investor~$1$ as specified in \eqref{RB_eq:simultaneous_1_alternative}. Following the similar analysis as in the proof of Proposition~\ref{prop:single_appx}, we have that
\begin{align}\label{RB_eq:simultaneous_2_alternative}
&\pi_1(\bm{I}, \bm{s})-d_1= \theta_1 \left( \pi_1(\bm{I}, \bm{s})+\pi_\text{e}(\bm{I}, \bm{s})-d_1-d_\text{e}^{-1}\right);\\
&\pi_\text{e}(\bm{I}, \bm{s})-d_\text{e}^{-1}= (1-\theta_1) \left( \pi_1(\bm{I}, \bm{s})+\pi_\text{e}(\bm{I}, \bm{s})-d_1-d_\text{e}^{-1}\right).\nonumber
\end{align}
Note that the best-response investment level
\begin{align}\label{RB_eq:simultaneous_3_alternative}
&I_1(I_2, s_2)=\arg\max_{I_1\in[0, e]} \left\{\pi_1(\bm{I}, \bm{s})+\pi_\text{e}(\bm{I}, \bm{s})-d_1-d_\text{e}^{-1}\right\}=\left\{
\begin{array}{ll}
 \nicefrac{e}{2}&\text{ if } {\mu_\alpha} (1-s_2)\geq 1;\\[1ex]
 0&\text{ otherwise}.\\
\end{array}\right.
\end{align}
By Eq.~\eqref{RB_eq:simultaneous_2_alternative}, the best-response share for Investor~$1$ is
\begin{align}\label{RB_eq:simultaneous_4_alternative}
s_1(I_2, s_2)=\left\{
\begin{array}{ll}
\frac{\theta_1\left[{\mu_\alpha} (e+2 I_2)  (1-s_2)-e - 2 d_\text{e}\right]+e}{{\mu_\alpha} (e+ 2 I_2)} &\text{ if } {\mu_\alpha} (1-s_2)\geq 1;\\[1ex]
 0&\text{ otherwise}.\\
\end{array}\right.
\end{align}

Similarly, we have that the best-response investment level and share for Investor~$2$ are
\begin{align}\label{RB_eq:simultaneous_5_alternative}
&I_2(I_1, s_1)=\left\{
\begin{array}{ll}
 e&\quad\quad\quad\quad\quad\quad\quad\text{ if } {\mu_\alpha} (1-s_1)\geq1;\\[1ex]
 0&\quad\quad\quad\quad\quad\quad\quad\text{ otherwise};\\[1ex]
\end{array}\right.\\\label{RB_eq:simultaneous_6_alternative}
&s_2(I_1, s_1)=\left\{
\begin{array}{ll}
 \frac{\theta_2\left[ {\mu_\alpha} (2 I_1+ e)  (1-s_1)-e - 2 d_\text{e} \right]+e}{{\mu_\alpha} (2 I_1+ e)}&\text{ if }{\mu_\alpha} (1-s_1)\geq 1;\\[1ex]
 0&\text{ otherwise}.\\
\end{array}\right.
\end{align}
Solving the system of the best-response functions Eqs.~\eqref{RB_eq:simultaneous_3_alternative} through \eqref{RB_eq:simultaneous_6_alternative}, we have that if ${\mu_\alpha}\geq \max \bigg\{\frac{3-2 \theta_1 - \theta_1\theta_2}{2(1-\theta_1)}-\frac{d_\text{e} (\theta_1-\theta_1\theta_2)}{  e (1-\theta_1)},~\frac{3-2\theta_2-\theta_1\theta_2}{2(1-\theta_2)}-\frac{d_\text{e}(\theta_2-\theta_1\theta_2)}{  e (1-\theta_2)}\bigg\}$, there exists an equilibrium in which both investors invest $I_i^{TI}=e$ with the share for Investor~$i$ as
\begin{align*}
s_i^{TI}=\frac{1-2 \theta_i +2 {\mu_\alpha} \theta_i +(1-2 {\mu_\alpha}) \theta_1\theta_2 }{2 {\mu_\alpha} (1-\theta_1\theta_2)}-\frac{d_\text{e}(\theta_i-\theta_1\theta_2)}{ e {\mu_\alpha}(1-\theta_1\theta_2)}.
\end{align*}
Similarly, we have that, if ${\mu_\alpha}<\frac{2-\theta_i}{1-\theta_i}-\frac{2 d_\text{e} \theta_i}{e (1-\theta_i)}$, there exists an equilibrium in which Investor~$i$ is the only investor with the investment level $I_i^{TI}=\nicefrac{e}{2}$ in equilibrium and the share for Investor~$i$ is
\begin{align*}
s_i^{TI}=
\frac{\left({\mu_\alpha} -1\right)\theta_i+1}{{\mu_\alpha}}-\frac{2 d_\text{e} \theta_i}{e {\mu_\alpha}}.
\end{align*}
\hfill $\blacksquare$

\medskip

\subsection{The Two Investors Model Under Preferred Stock Contract \label{RB_sec:simultaneous_appx_protection_alternative}}

In this section, we present the results and the proofs with the general bargaining powers of the investor(s) relative to the entrepreneur under the alternative disagreement point of the entrepreneur set as $d_\text{e}^{-i}=d_\text{e}$ under the Preferred Stock contract. All other model settings are the same as in Section~\ref{sec:preferred}. We note that the alternative specification of the entrepreneur's disagreement point does not affect the result in the single investor bargaining problem. It only affects the two investor bargaining problem.

Similar to the analysis before, we can show that the equilibrium bargaining share will not be too extreme such that in any equilibrium bargaining outcome $(\bm{I}, \bm{s})$, the following conditions are satisfied:
$\alpha_H (I_1+I_2)(1-s_i)\geq I_{3-i},~i=1,2.$ Therefore, the expected profit of the entrepreneur is
\begin{align}\label{RB_eq:profit-ent_protection_new_alternative}
\pi_\text{e}(\bm{I}, \bm{s})
&=\alpha_H(I_1+I_2)p-\sum_{i=1}^2\max\bigg\{\alpha_H (I_1+I_2)s_i, I_i\bigg\} p ,
\end{align}
and the expected profit of Investor~$i$ as
\begin{align}\label{RB_eq:profit_inv_protection_new_alternative}
\pi_i(\bm{I}, \bm{s})
&=\max\bigg\{\alpha_H (I_1+I_2)s_i, I_i\bigg\} p + e -I_i p.
\end{align}
The (alternative) disagreement point of the entrepreneur when negotiating with Investor~$i$ is
$d_\text{e}^{-i}=d_\text{e}$, and the disagreement point of Investor~$i$ is $d_i=\nicefrac{e}{2}$ since the investor has $\nicefrac{e}{2}$ units of capital as the endowment.

Then, the investments $\bm{I}$ and the shares $\bm{s}$ maximize the following Nash product simultaneously:
\begin{align}\label{RB_eq:simultaneous_1_protection_alternative}
\max_{I_i\in[0, e],~ s_i\in[0, 1]}~&\left[\pi_i(\bm{I}, \bm{s})-d_i\right]^{\theta_i}\left[\pi_\text{e}(\bm{I}, \bm{s})-d_\text{e}^{-i}\right]^{1-\theta_i}\\
&\pi_i(\bm{I}, \bm{s})\geq d_i,~\pi_\text{e}(\bm{I}, \bm{s})\geq d_\text{e}^{-i}.\nonumber
\end{align}

\begin{proposition}[Two investors under alternative belief]~   \label{RB_prop:sim_appx_protection_alternative}
\begin{itemize}
\item There exists an equilibrium bargaining outcome in which both investors invest; i.e., $\tilde{I}_i^{TI}=\nicefrac{e}{2}$ for $i\in\{1,2\}$, and the equilibrium share of investor $i$ is as follows.
\begin{itemize}
\item If $\alpha_H\geq  \max\left\{\frac{3}{2}+\frac{(1-\theta_1)\theta_2}{2(1-\theta_2)}-\frac{d_\text{e}(1-\theta_1)\theta_2}{ e p(1-\theta_2)}, \frac{3}{2}+\frac{(1-\theta_2)\theta_1}{2(1-\theta_1)}-\frac{d_\text{e}(1-\theta_2)\theta_1}{ e p(1-\theta_1)}\right\}$,
\begin{align}\label{RB_eq:sim_protection_1_alternative}
&\tilde{s}_1^{TI}=\frac{  \theta_1  (2  \alpha_H  (1- \theta_2 )+ \theta_2 -2)+1}{2  \alpha_H   (1- \theta_1   \theta_2 )}-\frac{d_\text{e}  \theta_1  (1- \theta_2 )}{  \alpha_H  e p (1- \theta_1   \theta_2 )},\\
&\tilde{s}_2^{TI}=\frac{  \theta_2  (2  \alpha_H  (1-  \theta_1 )+ \theta_1 -2)+1}{2  \alpha_H   (1- \theta_1   \theta_2 )}-\frac{d_\text{e} (1- \theta_1 )  \theta_2 }{  \alpha_H  e p (1- \theta_1   \theta_2)}
\end{align}
\end{itemize}

\item If $\alpha_H<\frac{2-\theta_i}{1-\theta_i}-\frac{2 d_\text{e} \theta_i}{e p (1-\theta_i)}$, an equilibrium bargaining outcome in which only Investor~$i$ invests exists; i.e., $\tilde{I}_i^{TI}=\nicefrac{e}{2}$ and $\tilde{I}_j^{TI}=0$ for $i,j\in\{1,2\}$ and $i\neq j$, and the equilibrium share of Investor~$i$ is
 \begin{align*}
&\tilde{s}_i^{TI}=\frac{\left(\alpha_H -1\right)\theta_i+1}{\alpha_H}-\frac{2\theta_i d_\text{e}}{e \alpha_H  p}.
\end{align*}

\end{itemize}
\end{proposition}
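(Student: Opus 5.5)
The proof follows the template of Propositions~\ref{prop:sim_appx_protection} and \ref{RB_prop:sim_appx_alternative}: write down each bilateral Nash problem under the alternative disagreement point $d_\text{e}^{-i}=d_\text{e}$, derive best responses in investment and share, and then solve the resulting fixed point. By the analogue of Lemma~\ref{lemma:no-low-share}, the profits take the simplified forms \eqref{RB_eq:profit-ent_protection_new_alternative}--\eqref{RB_eq:profit_inv_protection_new_alternative}. As in the proof of Proposition~\ref{prop:sim_appx_protection}, we restrict attention to $s_i\ge I_i/(\alpha_H(I_1+I_2))$; otherwise the investor's gain is zero and the Nash product vanishes. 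On this region each max resolves as follows:
\begin{align*}
\pi_\text{e}&=p\,\alpha_H(I_1+I_2)(1-s_1-s_2)\quad\text{(whenever both shares exceed their protection thresholds)},\\
\pi_i-d_i&=p\bigl(\alpha_H(I_1+I_2)s_i-I_i\bigr).
\end{align*}
When the other investor $j$ sits below its threshold, $\pi_\text{e}$ instead contains the term $I_jp$ in place of $\alpha_H(I_1+I_2)s_jp$.

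For the bargaining unit $(\text{e},1)$, the Nash first-order conditions give the usual split. The investor's gain $\pi_1-d_1$ equals $\theta_1$ times the joint surplus $\pi_1+\pi_\text{e}-d_1-d_\text{e}$. The investment choice maximizes this joint surplus, which is linear in $I_1$. Its slope is $p(\alpha_H(1-s_2)-1)$ when $s_2$ is above its threshold, and $p(\alpha_H-1)$ otherwise. Hence $I_1=e/2$ whenever $s_2\le(\alpha_H-1)/\alpha_H$. Substituting $I_1=e/2$ yields a piecewise-linear best-response share $\tilde s_1(I_2,s_2)$. It has the same shape as \eqref{eq:simultaneous_4_protection_prime}, except that the entrepreneur's threat point no longer contains the value of the deal with Investor~2. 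Concretely, the term $-2d_\text{e}(e-I_2)\theta_1/(e\alpha_Hp(e+2I_2))$ becomes $-2d_\text{e}\theta_1/(\alpha_Hp(e+2I_2))$, and the correction $-2\alpha_HI_2+2I_2$ from the other unit's threat disappears, because the full profit $\alpha_H(I_1+I_2)(1-s_1-s_2)p$ is now compared with $d_\text{e}$. The best response for Investor~2 follows by symmetry.

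Next, set $I_1=I_2=e/2$ and solve the two linear equations $s_i=\tilde s_i(e/2,s_j)$ on the branch where both $s_i>1/(2\alpha_H)$, i.e.\ both investors are paid through their share in the high state. Written out, these are
\begin{equation*}
2\alpha_H s_i=\theta_i\bigl[2\alpha_H(1-s_j)-1-2d_\text{e}/(ep)\bigr]+1 .
\end{equation*}
Eliminating $s_j$ gives a denominator $2\alpha_H(1-\theta_1\theta_2)$, and the $d_\text{e}$ coefficient collects into $\theta_i(1-\theta_j)/(\alpha_Hep(1-\theta_1\theta_2))$. This reproduces \eqref{RB_eq:sim_protection_1_alternative}.

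It remains to verify consistency of this candidate. First, $\tilde s_j\le(\alpha_H-1)/\alpha_H$, so full investment is indeed a best response; this holds comfortably under Assumption~\ref{assumption1}, which implies $\alpha_H>2$. Second, $\tilde s_i\ge 1/(2\alpha_H)$, together with positivity of both gains so that the Nash products are strictly positive. Rearranging the second condition (and its mirror image) is exactly the stated lower bound on $\alpha_H$. I expect this verification to be the main obstacle, since the bound is asymmetric in $(\theta_1,\theta_2)$ and the threshold conditions must be matched carefully to the branch chosen.

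For the one-investor equilibrium, set $I_j=0$. Then $d_\text{e}^{-i}=d_\text{e}$ coincides with the threat point in the original model with $I_j=0$, so the computation is identical to the last bullet of Proposition~\ref{prop:sim_appx_protection}. The prorated outside option gives the term $2\theta_id_\text{e}/(e\alpha_Hp)$. Exclusion of Investor~$j$ is self-enforcing when the joint surplus from adding $j$ is nonpositive, $\alpha_H(1-\tilde s_i)\le 1$, which rearranges to the stated condition $\alpha_H<\frac{2-\theta_i}{1-\theta_i}-\frac{2d_\text{e}\theta_i}{ep(1-\theta_i)}$.
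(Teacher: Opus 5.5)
Your best responses and your solution of the linear system are correct and follow the paper's route. The problem is the consistency check, where the two conditions are assigned the wrong way round.

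The stated lower bound on $\alpha_H$ is the investment condition, not the threshold condition. Substituting the candidate shares, $\tilde s_2^{TI}\le(\alpha_H-1)/\alpha_H$ is equivalent to $2\alpha_H(1-\theta_2)\ge 3-2\theta_2-\theta_1\theta_2-2\theta_2(1-\theta_1)d_\text{e}/(ep)$. Dividing by $2(1-\theta_2)$ gives exactly the first term in the max. Likewise $\tilde s_1^{TI}\le(\alpha_H-1)/\alpha_H$ gives the second term. This parallels Proposition~\ref{RB_prop:sim_appx_alternative}, whose bound is ${\mu_\alpha}(1-s_j)\ge 1$.

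This investment condition does not follow from Assumption~\ref{assumption1}. Take $\theta_1=0.1$, $\theta_2=0.9$, $d_\text{e}=0$, $p=1/2$, $\alpha_H=3$, so ${\mu_\alpha}=2$. The candidate gives $\tilde s_2^{TI}=4.15/5.46\approx 0.76>2/3$. The unit $(\text{e},1)$ then has joint-surplus slope $p(\alpha_H(1-\tilde s_2^{TI})-1)<0$ and chooses $I_1=0$, so the both-invest profile is not an equilibrium.

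Conversely, $\tilde s_i^{TI}\ge 1/(2\alpha_H)$ rearranges, for either $i$, to $\alpha_H\ge 1+d_\text{e}/(ep)$. That condition is symmetric in $(\theta_1,\theta_2)$, so it cannot produce the asymmetric stated bound.

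As planned, then, your final step would not go through. To repair it:
\begin{itemize}
\item Check $\alpha_H(1-\tilde s_j^{TI})\ge 1$ for $j=1,2$; this is precisely the hypothesis of the proposition.
\item Treat $\tilde s_i^{TI}\ge 1/(2\alpha_H)$ separately, as the requirement $\alpha_H\ge 1+d_\text{e}/(ep)$. It is equivalent to nonnegative gains in each bilateral problem, so it is also what makes the Nash products positive.
\end{itemize}
The second requirement is automatic when $d_\text{e}=0$. For $d_\text{e}>0$ it is not implied by the stated bound, which decreases in $d_\text{e}$, so it must come from the standing assumption that agreement generates gains. The paper's proof simply asserts it (``easy to verify''); at the experimental parameters $1+d_\text{e}/(ep)=5<11=\alpha_H$. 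Your one-investor argument is fine.
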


\noindent{\bf Proof of Proposition~\ref{RB_prop:sim_appx_protection_alternative}.}
We solve the bargaining problem between the entrepreneur and Investor~$1$ as specified in \eqref{RB_eq:simultaneous_1_protection_alternative} for the best-response investment level and the share of the investor. We first note that if $s_1< \frac{I_1}{\alpha_H(I_1+I_2)}$, it follows that $\pi_1(\bm{I}, \bm{s})=\nicefrac{e}{2}$ and the Nash product for the bargaining between the entrepreneur and Investor~$1$ in problem~\eqref{RB_eq:simultaneous_1_protection_alternative} is zero. In the following analysis, we restrict attention to the case where Investor~$1$'s share $s_1\geq  \frac{I_1}{\alpha_H(I_1+I_2)}$ and later verify that the equilibrium bargaining outcome leads to a strictly positive Nash product. In this case, by the first order condition, we have that
\begin{align}\label{RB_eq:simultaneous_2_protection_alternative}
&\pi_1(\bm{I}, \bm{s})-d_1= \theta_1 \left( \pi_1(\bm{I}, \bm{s})+\pi_\text{e}(\bm{I}, \bm{s})-d_1-d_\text{e}^{-1}\right);\\
&\pi_\text{e}(\bm{I}, \bm{s})-d_\text{e}^{-1}= (1-\theta_1) \left( \pi_1(\bm{I}, \bm{s})+\pi_\text{e}(\bm{I}, \bm{s})-d_1-d_\text{e}^{-1}\right).\nonumber
\end{align}
Note that the best-response investment level
\begin{align}\label{RB_eq:simultaneous_3_protection_alternative}
\tilde{I}_1(I_2, s_2)&=\arg\max_{I_1\in[0, e]} \left\{\pi_1(\bm{I}, \bm{s})+\pi_\text{e}(\bm{I}, \bm{s})-d_1-d_\text{e}^{-1}\right\}
=\left\{
\begin{array}{ll}
 \nicefrac{e}{2}&\text{ if }  s_2 \leq \frac{\alpha_H-1}{\alpha_H} ;\\[1ex]%
 0 &\text{ if } s_2 > \frac{\alpha_H-1}{\alpha_H}.\\ 
\end{array}\right.
\end{align}

By Eq.~\eqref{RB_eq:simultaneous_2_protection_alternative}, the best-response share for Investor~$1$ is
  \begin{align}\label{RB_eq:simultaneous_4_protection_prime_alternative}
\tilde{s}_1(I_2, s_2)=\left\{
\begin{array}{ll}
\frac{\theta_1\left[\alpha_H (e + 2 I_2)  (1-s_2)-e  \right]+e }{\alpha_H(e+2 I_2)}-\frac{2 d_\text{e} \theta_1}{ \alpha_H p (e+ 2 I_2)}&\text{ if } s_2\leq \frac{\alpha_H-1}{\alpha_H} ;\\[1ex]
 0&\text{ if } s_2 > \frac{\alpha_H-1}{\alpha_H}.\\
\end{array}\right.
\end{align}


Similarly, we have that the best-response investment level and share for Investor~$2$ are
\begin{align}\label{RB_eq:simultaneous_5_protection_alternative}
&\tilde{I}_2(I_1, s_1)=\left\{
\begin{array}{ll}
 \nicefrac{e}{2}&\text{ if }  s_1 \leq \frac{\alpha_H-1}{\alpha_H};\\[1ex]%
 0&\text{ if } s_1 > \frac{\alpha_H-1}{\alpha_H}.\\
\end{array}\right.\\[1ex]
\label{RB_eq:simultaneous_6_protection_alternative}
&\tilde{s}_2(I_1, s_1)=\left\{
\begin{array}{ll}
\frac{\theta_2\left[\alpha_H (2 I_1+e)  (1-s_1)-e  \right]+e }{\alpha_H(2 I_1+e)}-\frac{2 d_\text{e} \theta_2}{\alpha_H p (e+2 I_1)}&\text{ if }  s_1\leq \frac{\alpha_H-1}{\alpha_H} ;\\[1ex]
 0&\text{ if } s_1 > \frac{\alpha_H-1}{\alpha_H} .\\
\end{array}\right.
\end{align}


Solving the system of the best-response functions Eqs.~\eqref{RB_eq:simultaneous_3_protection_alternative} through \eqref{RB_eq:simultaneous_6_protection_alternative}, we have that there exists an equilibrium in which both investors invest $\tilde{I}_i^{TI}=\nicefrac{e}{2}$ with the share for Investor~$i$ as follows.
\begin{itemize}
\item If $\alpha_H\geq  \max\left\{\frac{3}{2}+\frac{(1-\theta_1)\theta_2}{2(1-\theta_2)}-\frac{d_\text{e}(1-\theta_1)\theta_2}{ e p(1-\theta_2)}, \frac{3}{2}+\frac{(1-\theta_2)\theta_1}{2(1-\theta_1)}-\frac{d_\text{e}(1-\theta_2)\theta_1}{ e p(1-\theta_1)}\right\}$,
\begin{align*}
&\tilde{s}_1^{TI}=\frac{  \theta_1  (2  \alpha_H  (1- \theta_2 )+ \theta_2 -2)+1}{2  \alpha_H   (1- \theta_1   \theta_2 )}-\frac{d_\text{e}  \theta_1  (1- \theta_2 )}{ \alpha_H  e p (1- \theta_1   \theta_2 )},\\
&\tilde{s}_2^{TI}=\frac{  \theta_2  (2  \alpha_H  (1-  \theta_1 )+ \theta_1 -2)+1}{2  \alpha_H   (1- \theta_1   \theta_2 )}-\frac{d_\text{e} (1- \theta_1 )  \theta_2 }{  \alpha_H  e p (1- \theta_1   \theta_2)}
\end{align*}
\end{itemize}
It is easy to verify that the equilibrium shares satisfies that $\tilde{s}_i^{TI}\geq \frac{I_i^{TI}}{\alpha_H(I_1^{TI}+I_2^{TI})}=\frac{1}{2\alpha_H}$.

Similarly, we have that if $\alpha_H<\frac{2-\theta_i}{1-\theta_i}-\frac{2 d_\text{e} \theta_i}{e p (1-\theta_i)}$, there exists an equilibrium in which Investor~$i$ is the only investor with the investment level $\tilde{I}_i^{TI}=\nicefrac{e}{2}$ in equilibrium and the share for Investor~$i$ is
\begin{align*}
\tilde{s}_i^{TI}=
\frac{\left(\alpha_H -1\right)\theta_i+1}{\alpha_H}-\frac{2 \theta_i d_\text{e}}{ e \alpha_H  p}.
\end{align*}
\hfill $\blacksquare$

\section{Experimental Protocol and Instructions \label{sec:instructions}}
\subsection{Experimental Protocol\label{sec:instructions:protocol}}
In this appendix we provide details on our protocol for running experiments online as well as the instructions. As noted in the main text, we adapted the procedures suggested by \citet{Zoom} and \citet{Zoom-Mich}. Specifically, the steps from recruiting to payment were as follows:
\begin{enumerate}
\item Participants were recruited from the general subject population using the University's recruiting platform (SONA). Subjects were explicitly told that the study would be online and that they would be emailed a Zoom link 1-2 hours before the scheduled time. When participants connected to the Zoom meeting, they were held in a waiting room until they could be checked in.
\item 15 minutes before the start of the session, one experimenter began to check in participants one at a time, checking their ID and changing their name to ``User $x$'', where $x$ is a number. After check-in, the participant's video was turned off and the participant was placed in a breakout room that was also monitored by another experimenter.
\item Once all participants were checked-in, general instructions were provided to all participants. Specifically, they were told that they would be placed in one of two (SI) or three (TI) breakout rooms, where they would be asked to turn on their video feed.\endnote{Requiring subjects to display their video is mentioned by \citet{Zoom-Mich} as an important factor in ensuring that participants remain attentive throughout the experiment.} Participants were informed that they would never interact with another participant in the same breakout room. That is, all entrepreneurs were placed in the same breakout room, and similarly for those in the role of Investor~$1$ and Investor~$2$. Each breakout room was also monitored by an experimenter.
\item Subjects then read the instructions specific to the experiment and answered the comprehension questions. The experimenter in each breakout room was there to handle questions. If necessary, temporarily moving a participant to a private breakout room to answer questions or provide assistance.
\item Subjects participated in the main experiment in their respective breakout room.
\item At the conclusion of the experiment, all breakout rooms were closed and participants were asked to complete a separate survey with their name and address so that payments could be processed. This was done in order to de-link decisions from the experiment and personally identifying information. After participants completed the survey, they were free to leave the Zoom meeting.
\item Consistent with IRB guidelines, subjects were then mailed (within one business day) a debit card with the amount earned in the experiment.
\end{enumerate}

\subsection{Experimental Instructions\label{sec:instructions:instructions}}
Below we reproduce the instructions for the SI-PoorEnt Treatment. The instructions for the remaining treatments and Preferred Stock contracts are analogous. The reproduced text omits the quiz questions as well as the additional measurements (risk aversion and fairness norm elicitation). The entrepreneur's view of instructions is shown (The investor's view is analogous). Indentation and fonts have been adapted for clarity of exposition.
\medskip

\begin{framed}
\small
 \noindent This study is about startup ownership. There are two parties: the  entrepreneur and the investor, who must together decide how to divide the ownership of the startup. You will participate in 10 rounds of this study. In each round you will be the entrepreneur. In each round you will be matched at random with another person participating in this session, and that person will be the investor. Each round will consist of an interactive negotiation exercise between you and the investor. In each round you will have an opportunity to earn "points". At the end of the study one of the 10 rounds will be selected at random. Then, your point earnings from that round will be converted to US Dollars at the rate of 2 cents per point, and added to your participation payment of 8 US Dollar.

\medskip

\noindent The investor has 200 points that she/he can invest in your startup.  However, the share of the startup that  the investor will receive in exchange for her/his investment is not known. Rather, you and the investor will negotiate the share that the investor receives. Then, the following can happen:
\begin{itemize}[
  align=left,
  leftmargin=2em,
  itemindent=0pt,
  labelsep=0pt,
  labelwidth=2em
]
\item[-] Negotiations succeed. If negotiations succeed, there are two possible outcomes:
\begin{itemize}
\item[-] Startup succeeds: If the startup succeeds, it will be worth $200 \times 11  = 2200$ points, and that value will be divided between  you (the entrepreneur) and the investor depending on the outcome of the negotiations.
\item[-] Startup fails: If the startup fails, the value of the investment is multiplied by 1. This means, if the startup fails, it will be worth $200 \times 1 =200$ points, and that value will be divided between you (the entrepreneur) and the investor depending on the outcome of the negotiations.
\end{itemize}
\item[-] Negotiations fail. If the negotiations fail, the investor gets to keep her/his 200 points and you (the entrepreneur) receive zero.
\end{itemize}
\medskip

\noindent Note: the investor cannot invest partial amounts (any amount less than 200 points). In other words, either all 200 points are invested or nothing is invested.
\end{framed}

\begin{framed}
\noindent As mentioned on the previous screen, it is possible that  the startup fails. In particular, if the investor and the entrepreneur come to an agreement, there is an 80\% chance that the startup fails and a 20\% chance that it succeeds.  If the startup fails, its value is equal to 200 points.  If the startup succeeds, the
value of the investment is multiplied by 11 as explained on the previous screen. That is, the startup will be worth $200 \times 11 = 2200$ points. Once the startup value is known, it will be divided between the investor and the entrepreneur according to the outcome of the negotiations.
\medskip

\noindent For example, suppose you have accepted an offer that gives the other participant (the investor) 65 percent of the startup and gives you (the entrepreneur) 35 percent of the startup. Then, if the startup succeeds, the other participant (the investor) will receive $2200 \times 0.65 =1430$ points, and you will receive the remainder; i.e.,  770 points. In contrast, if the startup fails, the investor will receive $200 \times 0.65 = 130$ points and you will receive 70 points.
\medskip

\noindent Alternatively, suppose you made an offer that gives the other participant (the investor) 25 percent of the startup and gives you (the entrepreneur) 75 percent of the startup, and that offer has been accepted. Then, if the startup succeeds, the other participant (the investor) will receive $2200 \times 0.25 = 550$ points, and you will receive the remainder, 1650 points. In contrast, if the startup fails, the investor will receive $200 \times 0.25 = 50$ points and you will receive 150 points.

\end{framed}

\noindent {\normalfont [...Subjects complete quiz questions...]}

\begin{framed}
\small
 On the next screen, you will have 90 seconds to reach an agreement between you (the entrepreneur) and the other participant (the investor). Specifically, you will make and receive offers regarding the percentage amount that the investor receives in exchange for her/his investment of 200 points. Both you and the other participant can make as many offers as you wish in the  90 seconds available for negotiations. If you wish, you can also accept the most recent  offer made by the other participant (the investor). If time expires without an offer being accepted, then the round ends and no investment is made. In this case, the investor keeps her/his 200 units and you (the entrepreneur) receive zero.
\end{framed}

\begin{figure}[h]
  \centering\caption{\label{fig:screen:SI} Negotiation Interface: SI Treatment}
  \includegraphics[width=0.9\textwidth]{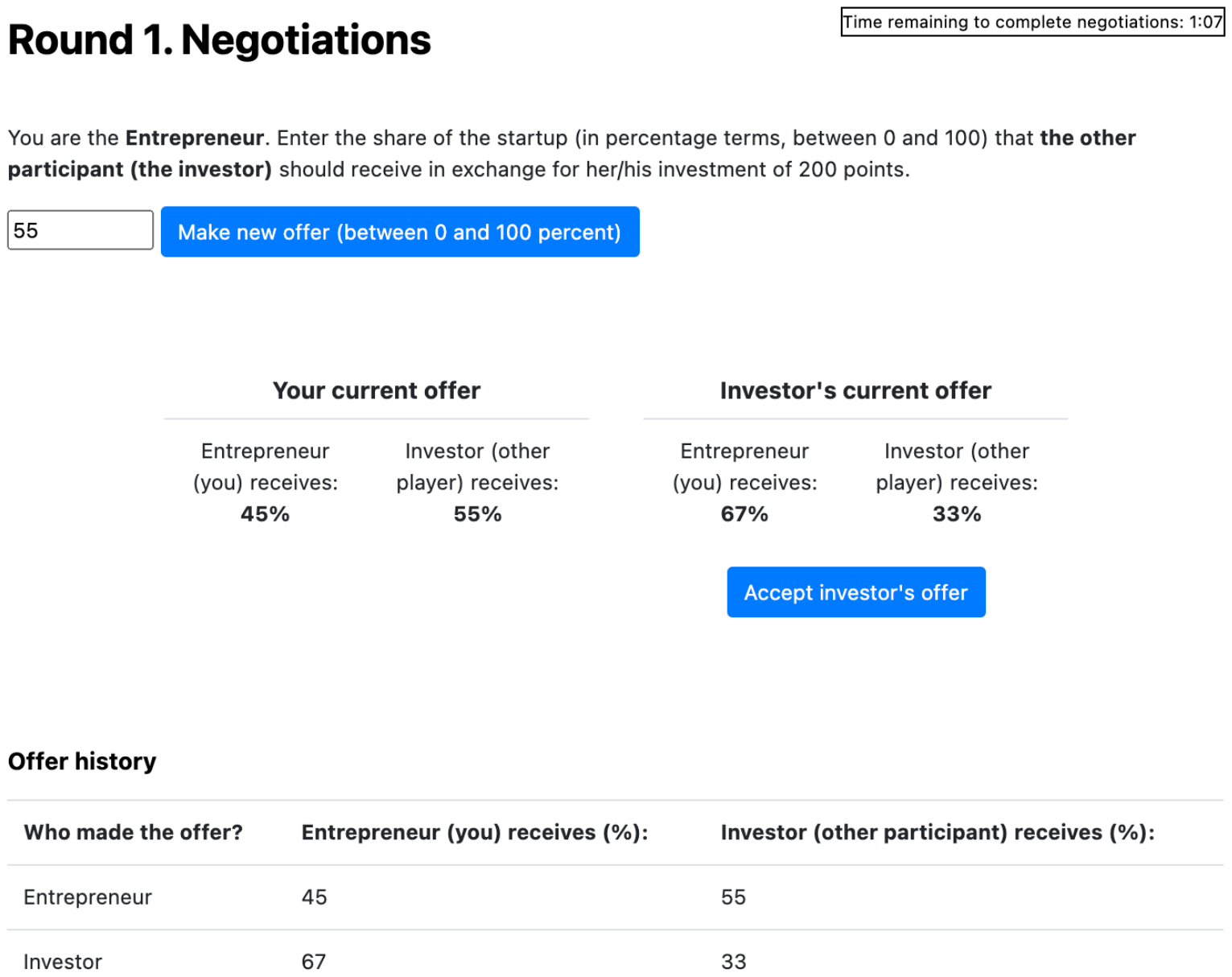}

\end{figure}

\begin{figure}[h]
  \centering\caption{\label{fig:screen:TI:S} Negotiation Interface: TI Treatment}
  \includegraphics[width=0.9\textwidth]{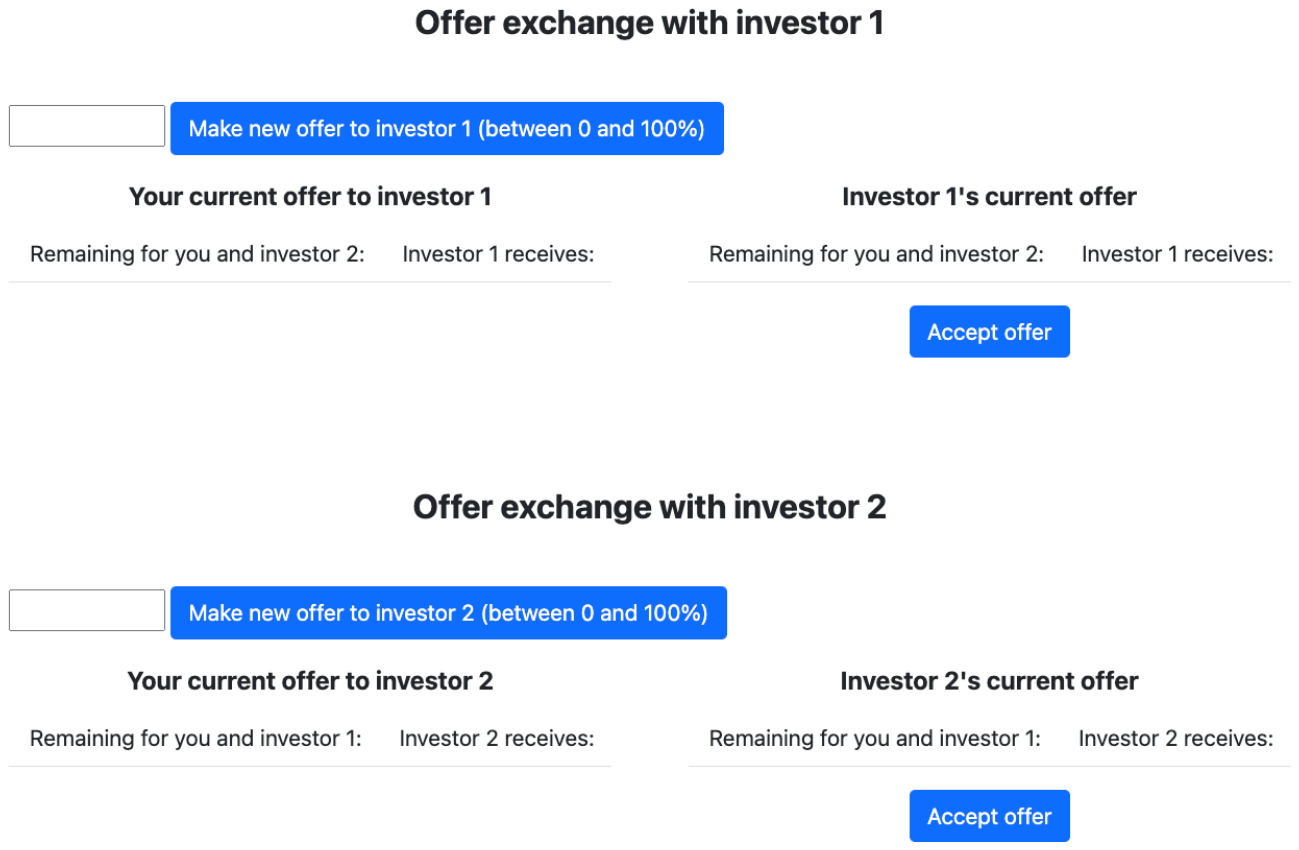}

\end{figure}

\section{Insights From the Bargaining Process \label{sec:bargaining:appendix}}
In this section, we delve into the bargaining process in order to better understand the drivers of our results in \textsection\ref{sec:num:inv}-\ref{sec:preferred}. We begin by reproducing the standard result in the bargaining literature that first offers typically have an anchoring effect on negotiated outcomes. Building on this result, we then show that there are noticeable treatment differences in opening offers, which, combined with anchoring, go a long way to explaining our key findings.

\subsection{Anchoring \label{sec:process:anchoring}}
Table \ref{tab:first:offers:anchoring} reports the results of a series of random effects regressions analyzing how first offers (of both investors and entrepreneurs) affect the likelihood of successfully reaching an agreement and also on the investors' shares conditional on an agreement. Here we focus on SI and TI treatment conditions, in which bargaining is one-dimensional (negotiators only negotiate equity percentage, and not the investment size). Note that because we focus on bilateral negotiations between individual investor $i$ and the entrepreneur we use investor $i$ share (as opposed to entrepreneur share) as our measurement.

Table \ref{tab:first:offers:anchoring} suggests strong anchoring effects for both agreements and shares. Higher opening offers to investors by entrepreneurs are significantly positively associated with agreement, while higher opening demands by investors are negatively (and, for SI, significantly) associated with agreements. That is, the more generous the entrepreneur, the more likely is an agreement, while the more demanding the investor, the less likely is an agreement. Turning now to the agreed investor shares, we see that both the offers made to and the demands made by investors are significantly positively associated with the investor's final share. These regression results show that players generally face a trade-off in bargaining. The more generous they are to their bargaining partner, the more likely are they to reach an agreement, but the less favorable the agreement will be to themselves.

\begin{table}[bth]
\centering \footnotesize
\makebox[\linewidth][c]{%
\begin{minipage}{\textwidth}
\caption{Anchoring Effects of Initial Offers on Agreements and Investor Shares}
\label{tab:first:offers:anchoring}
\begin{tabular}{ld{3.6}@{}ld{3.6}@{}ld{3.6}@{}ld{3.6}@{}l} \toprule
     & \multicolumn{4}{c}{SI}                                           & \multicolumn{4}{c}{TI}   \\ \cmidrule{2-9}
   \multicolumn{1}{r}{Dep. Var.:}  & \multicolumn{2}{C{2.4cm}}{Agreement Reached} & \multicolumn{2}{C{2.4cm}}{Investor $i$'s Share} & \multicolumn{2}{C{2.4cm}}{Agreement Reached} & \multicolumn{2}{C{2.4cm}}{Investor $i$'s Share} \\ \midrule
Initial Offer to Investor $i$ by Entrepreneur             & 0.005\sym{***}           & (0.001)          & 0.291\sym{***}           & (0.047)         & 0.007\sym{***}           & (0.002)          & 0.399\sym{***}           & (0.057)         \\
Initial Demand by Investor $i$            & -0.006\sym{***}          & (0.001)          & 0.384\sym{***}           & (0.051)         & -0.001             & (0.001)          & 0.129\sym{***}           & (0.046)         \\
Constant                 & 1.076\sym{***}           & (0.051)          & 17.894\sym{***}          & (3.785)         & 0.767\sym{***}           & (0.106)          & 17.889\sym{***}          & (3.340)         \\ \midrule
$R^2$                    & \dor{0.138}              &                  & \dor{0.421}              &                 & \dor{0.048}              &                  & \dor{0.390}              &                 \\
$N$                        & \dor{1342}               &                  & \dor{1104}               &                 & \dor{838}                &                  & \dor{724}                &                \\ \bottomrule
\end{tabular}
\vspace{0.1cm}

\begin{minipage}{1.00\textwidth}
  \scriptsize Note: Random effects regression coefficients are reported (Standard errors in parentheses). All regressions include controls for treatment variables and standard errors are corrected for clustering at the session level. The number of observations ($N$) corresponds to the number of interactions in which both negotiators (entrepreneur and investor $i$) made at least one offer. * p$<$0.10, ** p$<$0.05, *** p$<$0.01.
\end{minipage}
\end{minipage}
}
\end{table}

\renewcommand{\theresult}{B.\arabic{result}}
\setcounter{result}{0}

\begin{result}
There are strong anchoring effects: first offers predict both the likelihood of agreement and the final share received conditional on an agreement. Negotiators face a trade-off. A more aggressive opening offer increases the chance of disagreement, but, conditional on an agreement, leads to a more favorable outcome.
\end{result}

\subsection{Bargaining Process: Initial Offers in PoorEnt and RichEnt\label{sec:process:initial}}
We next examine the initial offers made by investors and entrepreneurs in PoorEnt and RichEnt. In PoorEnt offers were one-dimensional and consisted of the share offered to the investor in exchange for an investment of 200 (in SI) or 100 (in TI).  In Table \ref{tab:ti:alt:first:offers} we report summary statistics on the proposed share to the investor.  

\begin{table}[htbp]
\centering
\caption{Opening Offers: Proposed Share to Investor for First Offer}
\label{tab:ti:alt:first:offers}
\footnotesize
\begin{tabular}{lcccC{2.5cm}cccC{2.5cm}} \toprule
\multirow{3}{*}{\parbox{0.75in}{Treatment condition}} & \multirow{3}{*}{\parbox{0.75in}{\centering Proposed Investment}} & \multicolumn{3}{c}{Entrepreneur's Proposal} & & \multicolumn{3}{c}{Investor $i$'s Proposal} \\ \cmidrule{3-5} \cmidrule{7-9}
& & PoorEnt & RichEnt & PoorEnt vs. RichEnt $p$-value & & PoorEnt & RichEnt & PoorEnt vs. RichEnt $p$-value \\\cmidrule{1-5} \cmidrule{7-9}
SI                                               & 200    & 36.97   & 40.30 &   0.348       & & 70.77   & 65.13 &   0.313     \\
TI                                             & 100    & 23.78   & 23.31 &    0.807      & & 44.29   & 44.54 &   0.989       \\  \bottomrule
\end{tabular}

\vspace{0.05in}
\begin{minipage}{0.96\textwidth}
  \scriptsize Note: $p-$values are obtained from $t-$tests on the subject average of first offers for each investment amount.
\end{minipage}

\end{table}

On average, entrepreneurs offer investors  36.97\% of the pie in exchange for an investment of 200 in SI and 23.78\% in exchange for an investment of 100 TI; thus, while the funding amount is halved, the share offered to each investor drops by less than half as we go from SI to TI. A similar pattern emerges if we look at investors' first offers. Further, the final share obtained by investor $i$ is 46.70\% in SI PoorEnt and 32.60\% in TI PoorEnt (51.65\% in SI RichEnt and 32.82\% in TI RichEnt, see Tables \ref{tab:poor:ent:summary}-\ref{tab:rich:ent:summary}), which is close to the midpoint between the entrepreneur's and investor's opening offers, suggesting that both parties make substantial concessions to get to an agreement. These comparisons suggest that the main results in \textsection\ref{sec:shares:results} are well explained by the investors' and entrepreneurs' initial negotiation strategies (opening offers), and not by the differences in bargaining dynamics.

\begin{result}
Opening offers of both parties largely explain the differences in final outcomes in SI and TI. Poor entrepreneurs prefer larger investments, while rich entrepreneurs prefer smaller investments. Investors prefer smaller investments. Investors bargain more aggressively when the entrepreneur is rich.
\end{result}

\subsection{Differences Across Contracts\label{sec:process:contracts}}
We conclude this section by examining the bargaining dynamics that cause the differences between Common vs. Preferred Stock contracts. Table \ref{tab:first:offers} shows summary statistics on first offers for each role and for each of our treatment conditions in the SI/TI treatment arm. There are several interesting observations. First, consistent with theory, entrepreneurs' first offers to investors were generally lower under Preferred Stock contracts than under Common Stock contracts ($p<0.05$ in three out of four comparisons). The only scenario in which entrepreneurs do not increase their demands with Preferred (relative to Common) Stock contracts is SI-PoorEnt. This is consistent with Result 3 in \textsection \ref{sec:preferred}, which also reports that Preferred Stock contracts disadvantage entrepreneurs in the PoorEnt treatment arms, with the effect being particularly stark in the SI condition.

\begin{table}[b]
\centering \footnotesize
\caption{First Offers (Share to Investor) by Bargaining Environment and Player Role}
\label{tab:first:offers}
\subfloat[SI]{
\begin{tabular}{ccccp{0.1in}ccc} \toprule
          & \multicolumn{3}{c}{Entrepreneur's First Offer} && \multicolumn{3}{c}{Investor $i$'s First Offer}   \\ \cmidrule{2-4} \cmidrule{6-8}
          & Common  & Preferred & $p-$value  && Common & Preferred & $p-$value \\ \midrule
PoorEnt   & 36.97   & 39.70     & 0.303      && 70.77  & 71.60     & 0.858     \\
RichEnt   & 40.30   & 34.25     & $\ll 0.01$ && 65.13  & 66.71     & 0.500     \\
\bottomrule
\end{tabular}
}
\vspace{0.0cm}
\subfloat[TI]{
\begin{tabular}{ccccp{0.1in}ccc} \toprule
          & \multicolumn{3}{c}{Entrepreneur's First Offer} && \multicolumn{3}{c}{Investor $i$'s First Offer}   \\ \cmidrule{2-4} \cmidrule{6-8}
          & Common  & Preferred & $p-$value  && Common & Preferred & $p-$value \\ \midrule
PoorEnt   & 23.78   & 20.48     & $\ll 0.01$ && 44.29  & 49.59     & 0.004      \\
RichEnt   & 23.31   & 20.30     & 0.031      && 44.54  & 49.38     & $\ll 0.01$ \\
\bottomrule
\end{tabular}}
\vspace{-0.2cm}
\begin{minipage}{0.63\textwidth}
  \scriptsize Note: $p-$values are derived from random effects regressions of first offers on treatment indicators, with standard errors corrected for clustering at the session level.
\end{minipage}
\end{table}

Second, investors' opening demands were either approximately the same (SI) or significantly higher (TI) for Preferred relative to Common Stock contracts. In particular, in TI, investors increased their demands from an average of 44.29\% to 49.59\% in PoorEnt ($p=0.004$) and from 44.54 to 49.38\% ($p\ll0.01$) in RichEnt. That is, despite their reduced theoretical bargaining power with Preferred Stock contracts, investors adopted more extreme opening positions in the two investor (TI) setting. While we cannot provide definitive evidence for the driver of this behavior, it is consistent with peer-induced fairness \citep{ho2009,ho2014} -- each investor's desire to outcompete the other investor, resulting in more aggressive bargaining positions. Conversely, the absence of a peer investor in the SI scenarios may lead to less aggressive investor behaviors.

\begin{result}
Insufficient reflection of contract type in the share allocation (Result 2  in \textsection \ref{sec:preferred}) is explained by insufficient adjustment of entrepreneurs' opening offers in SI PoorEnt scenario, and by more aggressive opening offers by investors, particularly in TI scenarios.
\end{result}

\end{document}